\newcommand{\arxiv}[1]{\href{http://arxiv.org/abs/#1}{\texttt{arXiv:#1}}}
\newcommand{\cA}{{\mathcal A}}
\newcommand{\cB}{{\mathcal B}}
\newcommand{\cM}{{\mathcal M}}
\newcommand{\cC}{{\mathcal C}}
\newcommand{\cQ}{{\mathcal Q}}
\newcommand{\cR}{{\mathcal R}}
\newcommand{\cS}{{\mathcal S}}
\newcommand{\cD}{{\mathcal D}}
\newcommand{\cG}{{\mathcal G}}
\newcommand{\cH}{{\mathcal H}}
\newcommand{\cI}{{\mathcal I}}
\newcommand{\cL}{{\mathcal L}}
\newcommand{\cT}{{\mathcal T}}
\newcommand{\cU}{{\mathcal U}}
\newcommand{\cUT}{\mathcal{UT}}
\newcommand{\cNT}{\mathcal{NT}}
\newcommand{\Hom}{\mathcal{H}om}
\def\fN{\mathfrak{N}}
\def\fF{\mathfrak{F}}
\def\rHW{\mathrm{HW}}
\def\rEnd{\mathrm{End}}
\def\rAut{\mathrm{Aut}}
\def\rOut{\mathrm{Out}}
\def\rInn{\mathrm{Inn}}
\def\rGL{\mathrm{GL}}
\def\rRFM{\mathrm{RFM}}
\def\rmd{\mathrm{d}}
\def\rr{\mathrm{r}}
\def\rs{\mathrm{s}}
\newcommand{\N}{{\mathbb N}}
\newcommand{\Z}{{\mathbb Z}}
\newcommand{\Q}{{\mathbb Q}}
\newcommand{\R}{{\mathbb R}}
\newcommand{\C}{{\mathbb C}}
\newcommand{\K}{{\mathbb K}}
\def\ra{\rightarrow}
\def\lra{\longrightarrow}
\def\map{\mapsto}
\newcommand{\laurent}[2]{#1 (\! ( #2 )\! )}
\newcommand{\puiseux}[2]{#1 \{\! \{ #2 \}\! \}}
\theoremstyle{plain}
\newtheorem{theorem}{Theorem}
\newtheorem{lem}[theorem]{Lemma}
\newtheorem{cor}[theorem]{Corollary}
\newtheorem{prop}[theorem]{Proposition}
\theoremstyle{definition}
\newtheorem{defi}[theorem]{Definition}
\newtheorem{comm}{Comment}
\theoremstyle{remark}
\newtheorem{rem}[theorem]{Remark}
\newtheorem{ex}[theorem]{Example}
\newtheorem{nota}[theorem]{Notation}
\newcommand{\bi}{\begin{itemize}}
\newcommand{\ei}{\end{itemize}}
\newcommand{\bd}{\begin{description}}
\newcommand{\ed}{\end{description}}
\newcommand{\be}{\begin{enumerate}}
\newcommand{\ee}{\end{enumerate}}
\newcommand{\dis}{\displaystyle}
\newcommand{\Union}[1]{\dis{\bigcup_{#1}}}
\newcommand{\Lim}[1]{\dis{\lim_{#1}}}
\newcommand{\Span}{\textrm{Span}}
\def\bc{\begin{center}}
\def\ec{\end{center}}
\def\Int{\dis \int}
\def\Frac{\dis \frac}
\def\Sum{\dis \sum}
\def\Lim{\dis \lim}
\newcommand{\tif}{\text{ if }}
\newcommand{\tand}{\text{ and }}
\newcommand{\stirc}[2]{\genfrac{[}{]}{0pt}{}{#1}{#2}}
\newcommand{\stirp}[2]{\genfrac{\{}{\}}{0pt}{}{#1}{#2}}
\newcommand{\bin}[2]{\genfrac{(}{)}{0pt}{}{#1}{#2}}
\def \vthx{\vartheta_x}
\def \partx{\partial_x}
\def\ov{\overline}
\def\und{\underline} 
\def\cirast{\circledast}
\def\no{\noindent}
\def\l{\left}
\def\r{\right}
\def\b{\big}
\def\bb{\bigskip} 
\def\m{\medskip}
\title{\bf Overview of the Heisenberg--Weyl Algebra \\ and Subsets of Riordan Subgroups}
\author{Silvia Goodenough \qquad Christian Lavault\\
\small LIPN, CNRS (UMR 7030)\\[-0.8ex]
\small Universit\'e Paris 13, Sorbonne Paris Cit\'e, France\\
\small\tt \{Silvia.Goodenough,lavault\}@lipn.univ-paris13.fr
}
\date{\dateline{May 17, 2015}{Oct 06, 2015}\\
\small Mathematics Subject Classifications: 05A05, 05A10, 05A15, 22E10, 81R15}
\begin{document}
\maketitle

\begin{abstract}
In a first part, we are concerned with the relationships between polynomials in the two generators of the algebra of Heisenberg--Weyl, its Bargmann--Fock representation with differential operators and the associated one-parameter group. 
Upon this basis, the paper is then devoted to the groups of Riordan matrices associated to the related transformations of matrices (i.e. substitutions with prefunctions). Thereby, various properties are studied arising in Riordan arrays, in the Riordan group and, more specifically, in the ``striped'' Riordan subgroups; further, a striped quasigroup and a semigroup are also examined. A few applications to combinatorial structures are also briefly addressed in the Appendix.

\bb\no \textbf{Keywords:} combinatorial algebra; Combinatorial models of creation--annihilation; Heisenberg--Weyl algebra; One parameter group; Riordan arrays and groups, Prefunction striped matrix; Striped Riordan subgroups; Operations for striped quasigroups and semigroups. 
\end{abstract}

\vskip 1cm
\begin{small}
\setcounter{tocdepth}{2}
\hypersetup{hidelinks}
\tableofcontents
\end{small}

\vskip 1.5cm
\pagenumbering{arabic}

\section{Introduction}
Quantum physics has revealed many interesting formal properties associated to an associative and unitary algebra of operators~$a$ and~$a^\dagger$ meeting the partial commutation relation~$aa^\dagger - a^\dagger a = 1$. The approach of considering an algebraic normal-ordering problem and have it transformed into a combinatorial (enumeration) problem is the source of a number of recent research works on the borders of quantum physics and algebraic or analytic combinatorics.

This work is mainly motivated by the algebraic introductory survey of Duchamp, Penson and Tollu~\cite{DuPT11}, by the article of Blaziak and Flajolet~\cite{BlFl11} and Blasiak, Dattoli, Duchamp, Penson and Solomon~\cite{Blaziak05,BHPS04,BlPS03a,DOTV97,DPSHB04} as well as by the whole bunch of recent papers on Riordan arrays and 
the Riordan group~\cite{Barry09,DaSW12,Hennessy11,SGWW91}, which extensively investigate the topic. Blaziak and Flajolet's paper provides a notably comprehensive and insighful synthetic presentation, especially with respect to the correspondence with combinatorial objects and structures, such as models that involve special numbers such as generalized Stirling numbers, set partitions, permutations, increasing trees, as well as weighted Dyck and Motzkin lattice paths, rook placement, extensions to $q$-analogues, multivariate frameworks, urn models, etc.

In the first four sections, the paper is concerned with the relationships between polynomials in the two generators of the algebra of Heisenberg--Weyl, the Bargmann--Fock representation of operators of creation--annihilation and the  transformations arising from the one-parameter group as Riordan matrices which are substitutions with prefunctions. Upon this basis, the goal of the second part of the paper is to adapt the former parts to the theory of Riordan arrays. The next four sections are thus devoted to the associated groups of Riordan matrices and, thereby, collect the various relations and properties arising in Riordan arrays and in the Riordan group. More specifically, ``striped'' Riordan subgroups, a quasigroup and a semigroup for two appropriate operations are respectively defined and studied in Section~\ref{extlaw}. A few  applications to combinatorial structures are also briefly mentioned in Appendix~\ref{app:combistruct}.

\section{The algebra of Heisenberg--Weyl} \label{hwa}
As customary in the associative and unitary Heisenberg--Weyl algebra of operators, the Lie bracket is the relation
\begin{equation} \label{eq:commrel}
[a, a^+] := aa^+ - a^+ a = 1,
\end{equation}
where~$a^+$ stands for the usual~$a^\dagger$, $1$ is the identity operator of the algebra and~$a a^+ - a^+ a$ is the commutator. This partial commutation relation is referred to as the \emph{bosonic commutation rule} or the \emph{creation--annihilation condition}. As a matter of fact, it is satisfied in quantum physics by the creation and annihilation operators $a$ and~$a^+$, which are adjoint to each other and serve to decrease or increase the number or the energy level of bosons by~1.
 
\begin{defi} \emph{(Heisenberg--Weyl algebra.)} \label{def:hwa}
From an abstract algebraic standpoint, one formally considers the algebra of Heisenberg--Weyl as the quotient
\begin{equation}
\rHW_{\C} = \C\langle A,B\rangle/\cI_{\rHW},
\end{equation}
where~$\C\langle A,B\rangle$ is the free associative algebra over~$\C$ of the free monoid~$\{A,B\}^{*}$ with identity~$1$. Namely, $\C\langle A,B\rangle$ stands for the algebra of polynomials in noncommuting indeterminates~$A$, $B$ and~$\cI_{\rHW}$ is the two-sided ideal generated by the polynomial~$AB - BA -1$. 
\end{defi}
Def.~\ref{def:hwa} is given by the mapping~$\mathfrak{s} : \C\langle A,B\rangle \ra \rHW_{\C}$ defined 
as~$a = \mathfrak{s}(A)$ and~$a^+ = \mathfrak{s}(B)$ (see~\cite{Blaziak05,BlPS03a,DuPT11}).

Since~$\rHW_{\C}$ is generated by~$a^+$ and~$a$, any element~$\omega\in \rHW_{\C}$ writes as a linear combination of finite products of such generators in the form
\begin{equation} \label{eq:lincombprodop}
\omega = \sum_{\bm{\rr},\bm{\rs}} \alpha_{\bm{\rr},\bm{\rs}} \l(a^{+}\r)^{r_1}a^{s_1}\cdots %
\l(a^{+}\r)^{r_j}a^{s_j},
\end{equation}
where~$\bm{\rr} = (r_1,r_2,\ldots,r_j)$ and~$\bm{\rs} = (s_1,s_2,\ldots,s_j)$ are multi-indices 
of non-negative integers (denoted by~$\Z_{\ge 0}$ or~$\N$) with the convention~$a^0 = a^{+^{0}} = 1$. 

Observe that the representation given by formula~\eqref{eq:lincombprodop} is ambiguous in so far 
as the rewriting rule of the commutation relation in eq.~\eqref{eq:commrel} allows different representations of a same element, e.g., $aa^+$ or equally~$a^+a + 1$. To remedy this situation, a preferred order of the generators is fixed by conventionally choosing the \emph{normally ordered} form in which all annihilators stand to the right 
of creators (see Wick's Theorem, e.g. in~\cite{BlFl11}).

\begin{defi} \label{defnfno} \emph{(Normal form, normal order.)} The commutation relation~\eqref{eq:commrel} may be regarded as a directed rewriting rule~$aa^{+}\, \lra \, a^{+}a+1$ (the normalization), which makes a systematic use of the reduction of~$aa^+ - a^+a - 1$ to~$0$. Any general expression~$\fF(a^{+},a)$ 
in~$\C\langle a^+,a\rangle$ is thus completely reduced to a \emph{unique} and \emph{equivalent} normal 
form~$\fN\b(\fF(a^{+},a)\b)\equiv \fF(a^{+},a)$, such that, in each monomial, all the occurrences of~$a^{+}$ precede all the occurrences of~$a$. 
\end{defi}
Every element of~$\rHW_{\C}$ is written~$\sum_{i,j} \beta_{i,j} b_{i,j}$ with~$b_{i,j} = \l(a^{+}\r)^i a^{j}$ in its normal form. By using~\cite[Chap.~2]{Bourbaki03}, one can show that~$\l(b_{i,j}\r)_{i,j\in \N}$ is a natural linear basis of~$\rHW_{\C}$ (the basis of normal forms).

\begin{lem} \label{strconst} From the normalization of an element~$\l((a^{+})^k a^\ell\r) \l((a^{+})^r a^s\r)$  ($k,\, \ell,\, r,\, s\in \N$) in~$\rHW_{\C}$, the \emph{structure constant} of $b_{p,q}$ is the 
coefficient~$c_{(s,\ell),(r,k)}^{(p,q)}$ ($p,\,q\in \N$) in the sum~$b_{s,\ell} b_{r,k} = %
\Sum_{p,q} c_{(s,\ell),(r,k)}^{(p,q)}\, b_{p,q}$, and
\begin{equation} \label{eq:sc}
(a^{+})^k a^\ell (a^{+})^r a^s = \sum_{i=0}^{\min(\ell,r)} i!\, \bin{\ell}{i}\bin{r}{i}\, (a^{+})^{k+r-i} a^{\ell+s-i}.
\end{equation}
This gives rise to a closed expression of the constant structure
\[
c_{(k,r)(\ell,s)}^{(k+r-i,\ell+s-i)} = i!\, \bin{\ell}{i} \bin{r}{i}\ \ \text{with}\ %
i=0,\ldots, \min(\ell,r).\]
\end{lem}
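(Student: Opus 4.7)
The key observation is that the factors $(a^+)^k$ on the left and $a^s$ on the right are already normally ordered with respect to the central block $a^\ell(a^+)^r$, so they simply shift exponents without interacting. Concretely, once I establish the reduced identity
\begin{equation*}
a^\ell (a^+)^r = \sum_{i=0}^{\min(\ell,r)} i!\,\binom{\ell}{i}\binom{r}{i}\,(a^+)^{r-i}\,a^{\ell-i},
\end{equation*}
multiplying on the left by $(a^+)^k$ and on the right by $a^s$ immediately yields \eqref{eq:sc}. So the plan reduces entirely to proving this intermediate identity.

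My preferred approach is to use the faithful Bargmann--Fock representation of $\rHW_{\C}$ on the polynomial algebra $\C[x]$, under which $a^+$ acts as multiplication by $x$ and $a$ acts as $\partial_x$; the defining relation $aa^+ - a^+a = 1$ becomes the familiar $[\partial_x, x] = 1$. Because this representation is faithful on the basis $\{(a^+)^i a^j\}_{i,j\in \N}$ of normal forms established just before the lemma, it suffices to check the identity after applying both sides to an arbitrary test function $f(x)\in \C[x]$, i.e.\ to verify
\begin{equation*}
\partial_x^{\ell}\!\bigl(x^r f(x)\bigr) = \sum_{i=0}^{\min(\ell,r)} i!\,\binom{\ell}{i}\binom{r}{i}\,x^{r-i}\,\partial_x^{\ell-i} f(x).
\end{equation*}
This is Leibniz's rule applied to the product $x^r \cdot f(x)$: expanding $\partial_x^{\ell}(x^r f)$ as $\sum_{i=0}^{\ell}\binom{\ell}{i}(\partial_x^i x^r)(\partial_x^{\ell-i}f)$ and using $\partial_x^i x^r = \tfrac{r!}{(r-i)!}x^{r-i}$ for $i\le r$ and $0$ otherwise truncates the sum to $i\le \min(\ell,r)$ and produces the coefficient $\binom{\ell}{i}\cdot\tfrac{r!}{(r-i)!} = i!\binom{\ell}{i}\binom{r}{i}$, exactly as claimed.

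An alternative, self-contained route is a double induction: first show $a(a^+)^r = (a^+)^r a + r(a^+)^{r-1}$ by induction on $r$ from the commutation rule, then induct on $\ell$, applying this single-$a$ move inside each term and collecting with Pascal's rule for $i!\binom{\ell}{i}\binom{r}{i}$. This avoids any appeal to a representation but is notationally heavier, so I would relegate it to a remark.

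The final step is cosmetic: reading off the structure constants by comparing the right-hand side of \eqref{eq:sc} with the definition $b_{s,\ell}b_{r,k} = \sum_{p,q} c_{(s,\ell),(r,k)}^{(p,q)} b_{p,q}$, one sees that the only nonzero coefficients occur at $(p,q) = (k+r-i, \ell+s-i)$ with $0\le i\le \min(\ell,r)$, giving the closed form stated. I do not anticipate a genuine obstacle; the only subtle point is making sure the truncation at $\min(\ell,r)$ in the Leibniz expansion is correctly justified (terms with $i>r$ vanish because $\partial_x^i x^r = 0$, terms with $i>\ell$ are excluded by the binomial coefficient $\binom{\ell}{i}$).
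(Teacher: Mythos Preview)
Your proposal is correct and matches the paper's own approach exactly: the paper (in the remark immediately following the lemma) observes that one may set $k=s=0$ without loss of generality and then invoke Leibniz's rule on the operators $D^\ell (X^r)[f]$ in the Bargmann--Fock representation, which is precisely your reduction to the central block followed by the Leibniz expansion of $\partial_x^\ell(x^r f)$. The alternative inductive route you mention corresponds to what the paper calls the ``algebraic approach (similar to Wick's Theorem or Rook numbers)'', so both options you outline are the two the paper itself points to.
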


\begin{rem} \label{rem:nof}
Formula~\eqref{eq:sc} can be obtained either from an algebraic approach (similar to Wick's Theorem or Rook numbers)~\cite{DuPT11}, or (without loss of generality) by simply setting~$k = s = 0$ in~\eqref{eq:sc} and using Leibniz's rule on the appropriate operators~$\l(D^\ell[f] X^r[f]\r)\equiv D^\ell\l(X^r\r)[f]$ introduced further 
in~\S\ref{bfr}. (The beginning of this section consists of fundamental notions available in several articles~\cite{Blaziak05,DPSPBH10,BlFl11,BHPS04,DuPT11}, books~\cite[Chap.~2]{Hall04}, \cite[Chap.~6]{Vinberg03}, among (many) others; see also~\S\ref{nof} and Appendix~\ref{app:hla}.)
\end{rem}

\subsection{The Lie bracket in $\rHW_{\le 1}\!\setminus \! \rHW_0$} \label{Lieb}
$\rHW_{\le 1}$ denotes the subalgebra of elements in~$\rHW_{\C}$ for the usual product, which consist in polynomial operators of degree at most one in the variable~$a$, i.e. with only one or zero annihilation. By Def.~\ref{def:hwa}, such polynomials can be regarded as words in general form~$\omega = (a^{+})^{k} a^\delta %
(a^{+})^{\ell}$, where~$k,\, \ell$ are non-negative integers and the degree~$\delta$ of operator~$a$ equals 0 or 1. By the commutation relation~\eqref{eq:commrel}, the Lie bracket is a binary operation for~$\rHW_{\C}$ induced for the subalgebra~$\rHW_{\le 1}$.\par
It is assumed henceforth that~$\delta = 1$, since the specific case of degree zero corresponds to words with no annihilation, i.e. the set~$\rHW_0$ of scalar elements of~$\rHW_{\C}$. So, any two words~$\omega_1,\, \omega_2\in \rHW_{\le 1}\!\setminus \!\rHW_0$ are rewritten in the form~$\omega_1 = a^{+^{(k-r+1)}} a a^{+^r}$\ 
and~$\omega_2 = a^{+^{(\ell-s+1)}} a {a^+}^s$\ 
with~$k,\, \ell,\, r,\, s\in \N$. By Def.~\ref{defnfno}, this gives rise to
\begin{equation} \label{eq:liebra}
\omega_1 = a^{+^{k+1}} a + ra^{+^k}\ \quad \text{and}\ \quad \omega_2 = a^{+^{\ell+1}} a + sa^{+^\ell}\ %
\quad (k,\, \ell,\, r,\, s\in \N).
\end{equation}
The Lie bracket~$[\omega_1,\omega_2] = \b[a^{+^{k+1}}a + ra^{+^{k}} , a^{+^{\ell+1}}a + sa^{+^{\ell}}\b]$ is then
\begin{flalign*}
[\omega_1 , \omega_2] &\!=\! \l({a^+}^{k+1}a\r) \l({a^+}^{\ell+1}a\r) %
+ s\l({a^+}^{k+1}a\r) \l({a^+}^{\ell}\r)+r\l({a^+}^{k}\r)\l({a^+}^{\ell+1}a\r)+rs\l({a^+}^{k}\r)\l({a^+}^{\ell}\r) \\
-& \l({a^+}^{\ell+1}a\r) \l({a^+}^{k+1}a\r) + r\l({a^+}^{\ell+1}a\r) \l({a^+}^{k}\r) %
+ s\l({a^+}^{\ell}\r) \l({a^+}^{k+1}a\r) + rs\l({a^+}^{\ell}\r) \l({a^+}^{k}\r).
\end{flalign*}
After simplification and normalization of the above products, such as 
$\delimiterfactor=700 \l({a^+}^{k+1}a\r) \l({a^+}^{\ell+1}a\r)$, 
$\delimiterfactor=700 \l({a^+}^{k+1}a\r) \l({a^+}^{\ell}\r)$, etc.), the final expression of the Lie bracket makes all terms of degree two in the variable~$a$ disappear, and we obtain
\begin{equation} \label{eq:LieB}
[\omega_1, \omega_2] = (\ell-k){a^+}^{(k+\ell+1)}a + \b(s\ell - rk\b) a^{+^{(k+\ell)}}.
\end{equation}
\begin{rem} \label{rem:lb}
When~$|\omega_1| = |\omega_2|$ ($k = \ell$) in~$\rHW_{\le 1}\!\setminus \!\rHW_0$, the Lie bracket makes the vector part disappear and there remains the scalar part only in~eq.~\eqref{eq:LieB}.\par
As regards the algebra~$\rHW_{\C}$, it is unitary and associative and hence the Lie bracket in~$\rHW_{\C}$ fulfils the \emph{Jacobi identity}. As the Lie bracket also satisfies the axioms of \emph{bilinearity} and \emph{alternating 
on~$\rHW_{\C}$}, $\rHW_{\C}$ is a (graded) Lie algebra (see \S\ref{app:hla} in Appendix~\ref{app:hla}).
\end{rem}

\subsection{The Lie algebra $\rHW_{\le 1}\!\setminus \! \rHW_0$}
Consider an equivalence relation in~$\rHW_{\le 1}$ between words of equal lengths for which the annihilation~$a$ lays at different places.\par
In that case, for any~$n$, the words~${a^+}^{k_1}a{a^+}^{k_2}$ and~${a^+}^{\ell_1} a{a^+}^{\ell_2}$ 
with~$k_1 + k_2 = \ell_1 +\ell_2 = n$ belong to the same equivalence class in~$\rHW_{\le 1}\!\setminus \! \rHW_0$. Let~$\ov{{a^+}^na}$ be the representative of each class in that set and let $\fN\l(\rHW_{\le 1}\!\setminus \!\rHW_0\r)$ denote the set of the normalized elements of the 
subalgebra~$\rHW_{\le 1}\!\setminus \!\rHW_0$. Then, $\fN\l(\rHW_{\le 1}\!\setminus \!\rHW_0\r)\cong %
\rHW_{\le 1}\!\setminus \!\rHW_0$, since there exists a one-to-one correspondence between the sets of class representatives~$\b\{\ov{{a^+}^na}\b\}$ and~$\b\{{a^+}^na\b\}$. More precisely, given any operation 
for~$\b\{\ov{{a^+}^na}\b\}$, there exists also a unique operation for~$\b\{{a^+}^na\b\}$. Thus, these two structures are isomorphic.

\begin{theorem}
The Lie bracket on the set~$\rHW_{\le 1}\!\setminus \!\rHW_0$ defines a Lie algebra which is isomorphic to the set of vector fields on the real line.
\end{theorem}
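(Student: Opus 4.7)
The strategy is to exhibit an explicit Lie-algebra isomorphism between $\rHW_{\le 1}\!\setminus\!\rHW_0$ and the Lie algebra $\cL$ of (polynomial) vector fields on $\R$, equipped with the usual bracket $[f\partial_x, g\partial_x] = (fg' - f'g)\partial_x$. The first point to pin down is the sense in which $\rHW_{\le 1}\!\setminus\!\rHW_0$ is a Lie algebra: the raw set is not closed under $[\cdot,\cdot]$, since by Remark~\ref{rem:lb} brackets of equal-degree words may land in $\rHW_0$. The natural reading---and the one prepared by the class-representative passage $\ov{(a^+)^n a}$ just before the theorem---is either the quotient $\rHW_{\le 1}/\rHW_0$, or equivalently the span of the distinguished representatives $e_n := (a^{+})^{n+1}a$ for $n\ge 0$ (the ``vector part'' of each normalized word).

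With this convention fixed, I would proceed in two steps. First, choose bases: on one side the $e_n = (a^{+})^{n+1}a$, linearly independent modulo $\rHW_0$ by the normal-form basis statement following Def.~\ref{defnfno}; on the other side $f_n := x^{n+1}\partial_x$, a standard basis of the polynomial vector fields. Second, specialise~\eqref{eq:LieB} to $r = s = 0$, which kills the scalar tail and yields $[e_k, e_\ell] = (\ell - k)\, e_{k+\ell}$; an elementary computation on vector fields gives the matching relation $[f_k, f_\ell] = (\ell - k)\, f_{k+\ell}$. For general $r, s$ the residual term $(s\ell - rk)(a^{+})^{k+\ell}$ in~\eqref{eq:LieB} lies in $\rHW_0$, so it is annihilated by the projection onto $\rHW_{\le 1}/\rHW_0$; the bracket therefore descends cleanly.

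The linear map $\varphi : e_n \mapsto f_n$ is then a bracket-preserving bijection of bases; extending by bilinearity yields the desired Lie-algebra isomorphism. The Jacobi identity, bilinearity and alternating property on $\cL$ transport via $\varphi$ from those already established on $\rHW_{\C}$ in Remark~\ref{rem:lb}. The main obstacle, I expect, is conceptual rather than computational: one must commit to a precise meaning for $\rHW_{\le 1}\!\setminus\!\rHW_0$ as a Lie algebra, since the set difference per se is not a subalgebra. Once this is done---by quotienting out $\rHW_0$ or by working only with the canonical normal-form representatives---the algebraic verification is already encapsulated in~\eqref{eq:LieB}, and the identification with vector fields on $\R$ is thereafter tautological.
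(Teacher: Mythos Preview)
Your approach is essentially the one the paper takes: fix the basis of class representatives $\{(a^{+})^{n}a\}_{n\in\N}$ and read the Lie-algebra structure directly off eq.~\eqref{eq:LieB}. If anything, your write-up is more complete than the paper's own proof, which only invokes~\eqref{eq:LieB} to assert that the bracket satisfies the Lie axioms and does not spell out the isomorphism with vector fields; you supply the explicit map $e_n\mapsto x^{n+1}\partial_x$ and the matching structure constants $[f_k,f_\ell]=(\ell-k)f_{k+\ell}$, and you also make explicit the point---left implicit in the paper---that one must work modulo $\rHW_0$ (equivalently, with the normal-form representatives) for the bracket to close.
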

\begin{proof}
The bases of~$\rHW_{\le 1}\!\setminus \!\rHW_0$ (as a vector space) are~$\b\{\langle \ov{{a^+}^na}\rangle\b\}$ 
and~$\l\{\langle{a^+}^na\rangle\r\}$ up to an equivalence, for all~$n \in \N$. From eq.~\eqref{eq:LieB}, the properties of the Lie bracket make~$\rHW_{\le 1}\!\setminus \!\rHW_0$ into a Lie algebra (i.e., bilinearity, alternating on~$\rHW_{\le 1}\!\setminus \!\rHW_0$ and the Jacobi identity).
\end{proof}

\section{Operators in the creation--annihilation model} \label{diffops}
The algebra of Heisenberg--Weyl comes with the natural representation~$\rHW_{\C} \ra \rEnd_\K(\Phi)$ ($\K = \R$ 
or~$\C$), where~$\Phi$ is some general class of \emph{smooth} functions. Thus, $\Phi$ may be the class~$\cC^\infty$ of infinitely differentiable functions over some suitable domain, the space of analytic functions, or the vector space~$\C[x]$ of polynomials in indeterminate~$x$, etc. For example, $\rEnd\b(\C[x]\b)$ is the algebra of endomorphisms 
of~$\C[x]$, which explains why it appears so often in many branches of mathematics and physics. 
The map~$\rHW_{\C} \ra \rEnd_\C(\Phi)$ is given on the generators of~$\rHW_{\C}$ through Bargmann--Fock representation of~$\rHW_{\C}$.

\subsection{The representation of Bargmann--Fock} \label{bfr}
A particular realization of the commutation relation is obtained by choosing some sufficiently general 
space~$\{f(x)\}$ of \emph{smooth functions} (typically~$\cC^\infty(0,1)$ or~$\C[x]$, see Comment~\ref{comm:shift}), on which two operators $X$ and~$D$ are defined as 
\[
X[f](x) = xf(x) \qquad \text{and} \qquad D[f](x) = \frac{\rmd}{\rmd x} f(x).\]
Then, the creation--annihilation principle is obviously satisfied by~$a = D$ and~$a^+ = X$. One recovers the Weyl relations~$[D,X]=1$ of abstract differential algebra~\cite[Ch.~1]{Ritt50}. The interest of such a differential model of creation--annihilation is that it is \emph{faithful}, meaning that any identity (without any additional assumption regarding the space of functions) is true in all generality under the commutation relation. This differential view will prove central to the following developments.

\begin{defi} \label{def:bfr}
The representation of Bargmann--Fock of~$\rHW_{\C}$ is given by the application 
$\phi\,:\, \rHW_{\C} \ra \Omega(\cC^\infty)$ with~$\phi(a^+) = x$ and~$\phi(a) = \frac{\rmd}{\rmd x}$, 
where~$\Omega(\cC^\infty) = \rEnd\l(\cC^\infty\b((0,1),\R\b)\r)$ is the set of such differential operators on a class of~$\cC^\infty$ functions over an appropriate domain.\par
The transformation of a term in~$\rHW_{\C}$ is given by a word whose letters are operators~$a^+$ and~$a$, where the multiplication is represented by~$x$ and the derivation by~$\frac{\rmd}{\rmd x}$, respectively. In this respect, 
$\phi(a^{+i} a^j) = x^i\l(\frac{\rmd}{\rmd x}\r)^j$ is a polynomial operator of excess~$e := |w|_{a^+} -  |w|_a = %
i - j\in \Z$ in~$\rEnd\b(\C[x]\b)$.
\end{defi}
For example, any normalized operator~$x^{k} \frac{\rmd}{\rmd x}$ in~$\Omega(\cC^\infty)$ on a smooth real 
function~$x^\ell f(x)$ defined on the open set~$(0,1)$ with~$k,\,\ell\in \N$, is written
\[
x^{k} \frac{\rmd}{\rmd x} \l[x^\ell f(x)\r] = x^{k} \l(\ell x^{\ell-1} f(x) + x^{\ell} %
\frac{\rmd}{\rmd x} f(x)\r) = \l(x^{(k+\ell)}\frac{\rmd}{\rmd x} + \ell x^{(k+\ell-1)}\r) f(x),\]
where~$x^{(k+\ell)}\frac{\rmd}{\rmd x}$ is the vector field part of the operator 
and~$v(x) := \ell x^{(k+\ell-1)}$ is its scalar field part on the real line. For~$\ell = 0$, $v\equiv 0$, 
the operator reduces to a pure tangent vector field.\par
From now on, we will denote~$\partx := \frac{\rmd}{\rmd x}$\ and~$\vartheta_x := x\partx$ or 
also~$\partx := D$\ and~$\vartheta_x := XD$, as the case may be.

There follows the computation of the Lie bracket of two differential operators in~$\rHW_{\C}$.

\subsection{The Lie Bracket of two differential operators} \label{liebrdop}
Let~$q_1(x)\partx \,+\, v_1(x)$ and~$q_2(x)\partx \,+\, v_2(x)$ be two differential operators (the sum of one vector field and one scalar field on the line). By making use of Poisson's braces notation~$\b\{q_1(x),q_2(x)\b\} = q_1(x)q_2'(x) - q_2(x)q_1'(x)$, the Lie bracket expresses as
\begin{flalign} \label{eq:poisson}
\b[q_1(x)\partx + v_1(x),q_2(x)\partx + v_2(x)\b]\!\! 
&=\!\! \b(q_1(x)q_2'(x) - q_2(x)q_1'(x)\b)\partx + q_1(x)v_2'(x) - q_2(x)v_1'(x)\nonumber\\
&={} q_1(x) \b(\l(q_2'(x)\partx + v_2'(x)\r)\b) - q_2(x) \l(q_1'(x)\partx + v_1'(x)\r)\nonumber\\ 
&={} \b\{q_1(x),q_2(x)\b\} \partx + q_1(x)v_2'(x) - q_2(x)v_1'(x).
\end{flalign}

\begin{ex} \label{ex1}
If there is no scalar part, $v_1(x) = v_2(x) \equiv 0$ and, by eq.~\eqref{eq:poisson},
\[
\b[q_1(x) \partx , q_2(x) \partx \b] = \b\{q_1(x),q_2(x)\b\} \partx .\]
Taking~$q_1(x) = x^{k+1}$ and~$q_2(x) = x^{\ell+1}$ yields~$\b[x^{k} \vthx , x^{\ell} \vthx \b] %
= (\ell-k)x^{\ell+k}\vthx $ with $k,\,\ell$ non-negative integers.\par
When there exists a scalar part ($v(x)\not \equiv 0$) and under the assumption that, either vector parts meet the condition~$v(x) = q(x)/x$\ ($x\ne 0$), or all operators have the form~$q(x)\vthx + tq(x)$ ($t\in \Q$, $t > 1$). Then, by~eq.~\eqref{eq:poisson} (Poisson's notation), the Lie bracket is written as
\begin{flalign*}
\b[q_1(x)\vthx + rq_1(x) , q_2(x)\vthx + sq_2(x)\b] %
&={} \b\{xq_1(x),xq_2(x)\b\} \partx + x \b(q_1(x)v_2(x) - q_2(x)v_1(x)\b)\\
={} x & \b\{q_1(x),q_2(x)\b\} \vthx + x \b(sq_1(x)q_2'(x) - rq_2(x)q_1'(x)\b).
\end{flalign*}
Specifically, when~$q_1(x) = x^k$ and~$q_2(x) = x^\ell$, where $k,\,\ell,\,r,\,s\in \Z_{>0}$, one gets
\begin{flalign*}
\b[x^{k}\vthx + rx^k , x^{\ell}\vthx + sx^\ell\b] &= (\ell-k) x^{k+\ell}\vthx + (s\ell - rk) x^{k+\ell}, \\
\b[x^{k}\vthx - rx^k , x^{\ell}\vthx - sx^\ell\b] &= (\ell-k) x^{k+\ell}\vthx - (s\ell + rk) x^{k+\ell}.
\end{flalign*} 
Eq.~\eqref{eq:LieB} is recovered in the first equality. Notice also that, in either case, if~$k = \ell$, the Lie bracket applies to two operators of equal degrees and therefore reduces to its scalar parts only. 
\end{ex}

\subsection{Normally ordered powers of strings in $\rHW_{\C}$} \label{nof}
As noticed in Section~\ref{hwa}, the normally ordered form of a general expression~$\fF(X,D)$ of~$\C\langle X,D\rangle$ is obtained by making use of the commutation relation~\eqref{eq:commrel} by moving all the annihilation operators~$a$ to the right. Whence a variety of properties provided by the normal ordering. All coefficients of the normal ordering of a boson string (or \emph{word})~$\omega\in \{X,D\}^{*}$ (more precisely the coefficients of the decomposition of~$\omega$ on the basis~$(X^i D^j)_{i,j\in \N}$) are positive integers, which suggests that such integers count combinatorial objects. For example, the Bell and Stirling numbers have a combinatorial origin, nevertheless one may consider them (and their generalizations) also as coefficients of the normal ordering~\cite[\S4.2--4.3]{BlFl11} and~\cite{Blaziak05,BlPS03a,DuPT11}. (Generalized $\omega$-Stirling numbers are reintroduced in this way in eq.~\eqref{eq:genstir2} below.)

Before the representations (or realizations) of the one-parameter group~$\l\{e^{\lambda \omega}\r\}_{\lambda \in \R}$ are defined formally, one must undertake the problem of ordering the powers of~$\omega\in \rHW_{\C}$ in normal 
form~$\fN\l(\omega^n\r) = \Sum_{n,i,j} \alpha(n,i,j) X^i D^j$. Though, in general, such is a three parameters problem, it can be reduced to two parameters for homogeneous operators by help of the gradation property introduced in Appendix~\ref{app:hla}. In this respect, any normalized polynomial operator~$\omega\in \rHW_{\C}$ can be expressed 
as~$\fN\l(\omega\r) = \Sum_{i-j=E} \alpha(i,j) X^iD^j$, where~$E\in \Z$ denotes the degree (or excess) of~$\omega$. 
\begin{equation} \label{eq:genstir2}
\fN\l(\omega^n\r) = 
\begin{dcases}
X^{nE} \sum_{k\ge 0} \stirp{n}{k}_{\omega} X^k D^k & \text{if}\ E\ge 0 \\ 
\l(\sum_{k\ge 0} \stirp{n}{k}_{\omega} X^k D^k\r) D^{n|E|} & \text{if}\ E < 0.
\end{dcases}
\end{equation}
This is the definition of the generalized~$\omega$-Stirling numbers of the second kind, as recently introduced and used for polynomials and generalized to homogeneous operators, e.g. in~\cite{Blaziak05,BlFl11,BlPS03a}.

\begin{ex} \label{ex:genstirling2}
\emph{(i)} For balanced polynomials, which are of the form~$\sum_k \stirp{n}{k}_{\omega} X^k D^k$ ($E = 0$ 
in~eq.~\eqref{eq:genstir2}), the~$\omega$-Stirling numbers admit the explicit form
\[
\stirp{n}{k}_{\omega} = \frac{1}{k!} \sum_{j=1}^k (-1)^{k-j} \bin{k}{j} h(j)^n\ \qquad \text{with}\ %
h(x)  := \sum_{k\ge 1} \alpha(k) x^{\und{k}}.\]
\emph{(ii)} The algebraic reduction of operators~$\l(X^r D^s\r)^n$ with~$r, s, n\in \N$ has the form
\begin{flalign} 
\l(X^r D^s\r)^n &= X^{nE} \sum_k \stirp{n}{k}_{r,s} X^k D^k\ \qquad \tif\ E\ge 0, \nonumber\\
\l(X^r D^s\r)^n &= \l(\sum_k \stirp{n}{k}_{r,s} X^k D^k\r) D^{n|E|}\ \;\tif\ E < 0. \label{eq:genstirling2}
\end{flalign} 
The above formula obviously holds in the case when~$E\ge 0$. The case~$E < 0$ also gives rise to similar coefficients (see Lemma~\ref{strconst}), as results from the ``duality argument'' shown in~\cite[Note~2, p.~11]{BlFl11}. $\stirp{n}{k}_{r,s}$ denotes an operator formulation of generalized Stirling numbers of the second kind, which appears, for example, in Comtet's book~\cite[Ex.~2, p.~220]{Comtet74}, 
in Katriel~\cite{Katriel74} and, recently, in~\cite{BlFl11,DuPT11,FlSe09}). When~$r = s = 1$, 
$\stirp{n}{k}_{1,1} := \stirp{n}{k}$ denotes the usual Stirling numbers of the second kind as defined after Stirling in 1730 and (re)explored later from a combinatorial viewpoint, e.g. by Carlitz~\cite{Carlitz32}, 
Comtet~\cite{Comtet74}, \cite[\S6.1]{GrKP89}, Riordan~\cite[\S6.6]{Riordan68}, etc.
\end{ex}

\begin{rem} \label{rem:unipotent}
The words~$\omega\in \rHW_{\le 1}\!\setminus \! \rHW_0$ (with one annihilation only) have the general form~$w = %
(a^+)^{n-p} a (a^+)^p$. If~$p = 0$, $\l(\stirp{n}{k}_{\omega}\r)$ is the matrix of a \emph{unipotent substitution}, while if~$p > 0$, $\l(\stirp{n}{k}_{\omega}\r)$ is the matrix of a \emph{unipotent substitutions with prefunctions} (see Lemma~\ref{lem:rfalg}). This will prove of prime importance in the integration of the one-parameter group in next Section~\ref{onepargroup}.
\end{rem}

\section{One-parameter groups} \label{onepargroup}
In general, a lot of meaningful operators can be associated to elements in~$\rHW_{\C}$. In particular, for any given polynomial~$\omega\in \C\langle X,D\rangle$, the one-parameter group~$\l\{e^{\lambda \omega}\r\}_{\lambda\in \R}$ with sufficienly small parameter~$\lambda$ is of particular interest in quantum physics and for related identities on combinatorial numbers. (Full details on one-parameter groups may be found e.g. in~\cite{DuPT11,DPSHB04,Gilmore08}.)

Now, introduce the exponential as an operator on~$\C\langle X,D\rangle[[z]]$. The notion of exponential operator is developed by Roman in~\cite[Chap.~2]{Roman05} and Dattoli \emph{et al.} in~\cite[Part~I]{DOTV97}, and rephrased in terms of exponential generating functions (EGF) by Blaziak and Flajolet in~\cite[\S2.3]{BlFl11}.
\begin{equation}
e^{z\omega} := \sum_{n\ge 0} \omega^n \frac{z^n}{n!}
\end{equation}
is a power series in~$z$ whose coefficients are in the polynomial ring~$\C\langle a,a^{+}\rangle$ with the normal form
\begin{equation}
\fN(e^{z\omega}) = \sum_{n\ge 0} \fN(\omega^n) \frac{z^n}{n!}.
\end{equation}
The sum on the right-hand side is no other than the EGF of the total number of specific combinatorial objects, for example the total weight of the corresponding \emph{diagrams} in the combinatorial context of~\cite[\S2.1]{BlFl11} and~\cite[p.~531]{FlSe09}.

\begin{comm} \label{comm:shift}
It is a well-known property that the exponential of a derivative plays the role of a \emph{shift (or translation) operator}. As an example, the operator~$e^{\lambda D}$ ($\lambda\in \R$, $D := \partial$) may be interpreted symbolically through its Taylor's expansion in~$\lambda$. 
\[
e^{\lambda D}[f](x) = \sum_{n\ge 0} \lambda^n/n! D^n [f(x)] = \sum_{n\ge 0} \lambda^n/n! f^{(n)}(x) = f(x + \lambda).\]
The action of the shift operator on the monomial~$x^n$ is evident by the binomial theorem, and thus on all polynomials~$f\in \C[x]$. This is also the case on all (formally convergent) series~$f\in \C[[x]]$, due to the purely symbolic nature of the calculation (see Appendix~\ref{app:fpsgf}). Taylor formula also applies to any complex (or analytic) polynomial, thus preserving the shift operator property. Recall that a function~$f$ is said to be of class~$\cC^\infty$, or smooth, if it has derivatives of all orders. Now, consider an open set~$D$ on the real line (e.g. $D = (0,1)$ in the present setting of Def.~\ref{def:bfr}) and a real valued function~$f$ defined on~$D$. Any real valued function~$f$ is said~$\cC^\omega(D)$ (or real analytic on~$D$) if it is smooth \emph{and} if it equals its Taylor series expansion around any point in its domain (thus $\cC^\omega\subset \cC^\infty$). Such is also the case of $\cC^\infty$ complex functions on~$D$, for any complex function which is differentiable (in the complex sense) in an open set is analytic, or holomorphic. (The notion of shift operator is comprehensively investigated by Rudin in~\cite[17.20--17.23]{Rudin82}.)\par
Hence, in the context of Section~\ref{onepargroup}, we must first consider operators of the form~$e^{\lambda D}$ (for a sufficiently small parameter~$\lambda\in \R$) on a typically general space of holomorphic functions on a suitable domain, such as~$(0,1)$. Besides, one has also to assume that~$|\lambda| < R$, where~$R$ is the radius of convergence of~$f$ in order to ensure at least the necessary conditions of formal displacements for series in~$z\C[[z]]$~\cite[Ch.~IV.4.3.]{Bourbaki06}. Fortunately, this is verified in the situation of such one-parameter groups.

Thereby, holomorphic functions on~$(0,1)$ appear a suitable candidate in being the actual maximal class of functions satisfying the shift operator requirements.
\end{comm}

\subsection{Integration of the one-parameter group} \label{intopg}
The normal ordering of elements in~$\rHW_{\le 1}\!\setminus \! \rHW_0$ (with one annihilation only), i.e. differential operators of the first order exactly, is shown (e.g. in~\cite{Bourbaki97,DPSHB04}) to be of a special type. More precisely, the integration of the one-parameter groups generated by such operators involves transformations going on the name of \emph{substitutions with prefunctions} in combinatorial physics, that is actually \emph{Riordan arrays}. In the following, we are concerned with monomials in the general form
\begin{equation} \label{eq:mongf}
q(x)\partx + v(x),
\end{equation}
the sum of one vector field and one scalar field on the line.

The object of this subsection is the integration of the one-parameter groups associated to~\eqref{eq:mongf}. Taking a geometric viewpoint, one uses the fact that any such one-parameter group is conjugate to the pure vector 
field~$q(x)\partx$ on the line (\emph{tangent vector field paradigm}). This will result further in the 
one-parameter group~$f\map U_\lambda[f]$ in the form
\begin{equation} \label{eq:basicdop} \delimiterfactor=1200 
U_\lambda [f](x) = e^{\lambda \l(q(x)\partx + v(x)\r)} [f](x) = g_\lambda(x) f\l(s_\lambda(x)\r),
\end{equation}
where~$g_\lambda$ and~$s_\lambda$\ are both analytic in a neighbourhood of the origin; and under the assumptions 
that~$q$ and~$v$ are at least continuous, $\lambda\in \R$ is sufficiently small and~$f$ is a function in an appropriate space (see Comment~\ref{comm:shift}). The calculations of $s_\lambda$ and~$g_\lambda$ are deferred until next~\S\ref{subst1} in Appendix~\ref{app:intopg} and what follows.

\subsection{Substitutions with prefunctions in $\rHW_{\le 1}\!\setminus \! \rHW_0$} \label{substpref}
Following~\cite{DPSHB04,DuPT11}, the integration of the one-parameter group~$e^{\lambda \l(q(x)\partx %
+ v(x)\r)} \l[f(x)\r]$ can be considered in the first place when the problem involves a pure vector field only (i.e. $v\equiv 0$).

\subsubsection{Evaluation of the substitution in a pure vector field} \label{subst1}
In this case, the transformation of~$f(x)\ne 0$ is given by a substitution factor~$s_\lambda$ only, which means that eq.~\eqref{eq:basicdop} admits the form
\begin{equation} \label{eq:vfdop} \delimiterfactor=1100
e^{\lambda q(x)\partx} \l[f(x)\r] = f\l(s_\lambda(x)\r).
\end{equation}
The computation of that substitution function is given in Appendix~\ref{app:intopg}. Now, whenever an expression takes the (general) form~$a^{+^{n}} a$ with integer~$n\ge 2$, the operator is~$e^{(\lambda x^n \partx )}$, which gives rise to the substitution function 
\begin{equation} \label{eq:vfsubs} %\delimiterfactor=1200 
s_\lambda(x) = \frac{x}{\sqrt[\uproot{2} n-1]{1 - (n-1) \lambda x^{n-1}}}, 
\end{equation}
absolutely convergent for~$|x| < 1\b/\b.\! \sqrt[n-1]{(n-1) \lambda}$. Formulas~\eqref{eq:vfdop}-\eqref{eq:vfsubs} translate into
\begin{equation} \label{eq:vfresult} \delimiterfactor=1100
e^{\lambda x^n \partx} \l[f(x)\r] = \delimiterfactor=700 %
f\l(\frac{x}{\sqrt[\uproot{2} n-1]{1 - (n-1) \lambda x^{n-1}}}\r). 
\end{equation}
For any~$n\in \Z_{>0}$, the related geometric transformation turns out to be the conjugate of a homography and a transformation~$\pi_n : x\map x^n$, whose reversal is~$\ov{\pi}_n : x\map x^{1/n}$, where~$\pi_n$ and~$\ov{\pi}_n$ are both~$\cC^\infty$ in an appropriate space depending on~$f$. Let~$h_n$ denote the homography relative to the substitution function~$s_\lambda(x) = \frac{x}{1-\lambda nx}$. Since~$h_n$ and~$\pi_{n}$ are conjugates with respect to the composition, $s_\lambda$ writes as
\begin{equation} \label{eq:sn2}
s_\lambda = \ov{\pi}_n \circ h_n \circ \pi_n.
\end{equation}
Actually, for any~$n > 0$ the substitution simplifies to~$s_\lambda(x) %
= \l(\frac{x^n}{1 - n \lambda x^n}\r)^{1/n} = \frac{x}{\l(1 - n\lambda  x^n\r)^{1/n}}$. 
Now, let~$F_n(x) = \frac{x^n}{(1 - n\lambda  x^n)}$, then~$s_\lambda(x) = F_n(x)^{1/n}$ and, setting~$y = x^n$, 
$F_n(x) = \frac{y}{1 - n\lambda y}$, which is the homography~$h_n$. Finally, for any~$n > 0$, 
$\frac{y}{1 - n\lambda y} = \l(\frac{y}{1 - \lambda y}\r)^{(n)}$ and~$h_n := h_1^{(n)}$, 
where the exponent denotes the~$n$th compositional power of current functions (in accordance with Notations~\ref{nota} in~\S\ref{stripedmat}).

By plugging~$h_n$ into eq.~\eqref{eq:sn2}, one can rewrite~$s_\lambda$ for any integer~$n\ge 1$ in the simpler form
\[
s_\lambda = \ov{\pi}_n \circ h_1^{(n)} \circ \pi_n.\]
The cases of~$n =$ 0, 1 and 2 are treated in the example of Appendix~\ref{app:intopg}, where the corresponding substitution functions~$s_\lambda$ evaluate respectively to a translation ($n = 0$), to an homothety ($n = 1$) and to a homography ($n = 2$).

\begin{ex} \label{subst2} (The Lie bracket and the substitution factor.)\ 
Given two positive integers~$k$ and~$\ell$, take~$q_1(x) = x^{k+1}$ and~$q_2(x) = x^{\ell+1}$. 
From the previous~\S\ref{subst1}, the substitution functions associated to~$q_1(x)$ and~$q_2(x)$ are, respectively, 
\[
s_{\lambda;k}(x) = \frac{x}{\l(1- k\lambda x^k\r)^{1/k}}\ \qquad \text{and}\ %
\qquad s_{\lambda;\ell}(x) = \frac{x}{\l(1 - \ell\lambda x^\ell \r)^{1/\ell}}.\]
Hence, the Lie bracket of the two operators 
is~$\b[x^{k}\vthx , x^{\ell}\vthx \b] = (\ell-k)x^{\ell+k} \vthx $ and~$s_{\lambda}(x)$, standing for the substitution function, admits the general form (according to the sign of~$\ell - k$) 
\begin{equation} \label{eq:slambdaLie} \delimiterfactor=1200 
s_{\lambda}(x) = \frac{x}{\sqrt[\uproot{2} k+\ell]{1 \pm |\ell-k|(k+\ell)\lambda x^{k+\ell}}}.
\end{equation}
This formula is at the basis of the three cases discussed in~\S\ref{Liebprefun}.
\end{ex}

\subsubsection{Substitutions with prefunction: the general case of a vector field} \label{subst3}
In the general case of a vector field with scalar part~$v\not \equiv 0$, the integration of the associated one-parameter group~$e^{\lambda \l(q(x)\partx + v(x)\r)}$ results in the basic form~\eqref{eq:basicdop},
\[ \delimiterfactor=1200 
U_\lambda [f](x) = e^{\lambda \l(q(x)\partx + v(x)\r)} \l[f(x)\r] = g_\lambda(x) f\l(s_\lambda(x)\r),\]
under the assumptions on~$q(x)$, $v(x)$, parameter~$\lambda$ and $f(x)$ already drawn in \S\ref{intopg} (from 
Comment~\ref{comm:shift} and Appendix~\ref{app:intopg}). A few transformations in eq.~\eqref{eq:vfdop} allow to integrate the one-parameter group~$e^{\lambda \l(q(x)\partx + v(x)\r)}$ for a general scalar field~$v(x)$.

Taking again a geometric viewpoint, we make use here of the fact that a general field of type~$q(x)\partx + v(x)$ is conjugate to the tangent vector field~$q(x)\partx$ on the line (with respect to composition). So, on the same assumptions as in Appendix~\ref{app:intopg} and~\S\ref{subst1}, 
let~$u(x) = \exp \l(\Int_{x_0}^x \frac{v(t)}{q(t)} \rmd t\r)$.\par
The function~$u(x)$ satisfies~$q(x)\partx + v(x) = u^{-1}(x) \b(q(x)\partx \b) u(x)$, which gives the conjugate to the tangent vector field~$q(x)\partx$ in the neighbourhood of the origin $\lambda = 0$ (see Comment~\ref{com:conjtrick}). Due to the fact that the exponential commutes with the conjugacy, we thus have
\begin{equation} \label{eq:conj2}
\delimiterfactor=1100 U_\lambda = e^{\lambda \l(q(x)\partx + v(x)\r)} = u^{-1}(x) e^{\lambda q(x)\partx} u(x),
\end{equation}
Then, by the calculations performed in~\S\ref{subst1} and in Appendix~\ref{app:intopg}, one obtains the integration of the general one-parameter group (at least locally) under the transformation~$f\map g_\lambda(f \circ s_\lambda)$: 
\begin{equation} \label{eq:onepargroup}
\delimiterfactor=1100 U_\lambda[f](x) = e^{\lambda \l(q(x)\partx + v(x)\r)} \l[f\r](x) %
= \frac{u\l(s_\lambda(x)\r)}{u(x)}\, f\l(s_\lambda(x)\r),
\end{equation}
where~$s_\lambda$ is the substitution factor with prefunction~$g_\lambda = (u\circ s_\lambda)/u$. 

\begin{comm} \label{com:conjtrick} The ``conjugacy trick'' and the ``tangent paradigm'' (so called in~\cite{DPSHB04}) which lay behind the result in eq.~\eqref{eq:conj2} may explain themselves as follows. 

Regarding vector fields as infinitesimal generators of one-parameter groups leads to conjugacy since, if~$U_\lambda$ is a one-parameter group of transformation, so is~$V U_\lambda V^{-1}$ ($V$ being a continuous invertible operator). In the context, we can formally consider~$(a^+)^{n-p} a (a^+)^p$ with $p > 0$ as conjugate to~$(a+)^n a$. 

\begin{ex} \label{ex:conjtrick} (\emph{Conjugacy trick})\ \ More generally, supposing all the terms well-defined, let~$u_2(x) := \exp \l(\int_{x_0}^x \frac{v(t)}{q(t)} \rmd t\r)$ and~$u_1(x) := q(x)/u_2(x)$. Then, since 
\[
u_1(x) u_2'(x) = u_1(x) u_2(x) v(x)/q(x) = v(x),\] 
one gets~$u_1(x)\partx u_2(x) = v(x)$. So, the operator~$q(x)\partx + v(x)$ reads as
\begin{equation} \label{eq:conjtrick}
u_1(x)u_2(x)\partx + u_1(x) u_2'(x) = u_1(x) \b(u_2'(x) + u_2(x)\partx\b) %
= u_2^{-1}(x)\b(u_1(x) u_2(x)\partx \b)u_2(x).
\end{equation}
Eq.~\eqref{eq:conjtrick} is conjugate to a vector field and integrates as a substitution with prefunction factor. Finally, by straightening the vector field on the line by the technique described in Remark~1, Appendix~\ref{app:intopg}, the required formula in~\eqref{eq:conj2} holds. (This method also amounts to use the ``ad'' operator (conjugacy) of derivation in the Lie algebra.)
\end{ex}
Now, the ``tangent paradigm'' works in the following manner: if the tangent vector is adjusted so as to coincide 
with~$x^{n-p}\partx x^p$, then we get the right one-parameter group. By virtue of the ``conjugacy trick'' and the ``tangent paradigm'', the integration of the one-parameter group yields 
\begin{flalign} \label{eq:final}
U_\lambda[f](x) &= e^{\lambda \omega}[f](x) =\delimiterfactor=700  \l(\frac{s_\lambda(x)}{x}\r) \delimiterfactor=900 %
f\b((s_\lambda(x)\b)\ \quad \text{(possibly locally),}\ \qquad \text{where}\nonumber\\
s_\lambda(x) &= \frac{x}{\sqrt[\uproot{2} n-1]{1 - (n-1) \lambda x^{n-1}}}\ \quad \tand\ \quad %
g_{\lambda}(x) = \frac{1}{\sqrt[\uproot{2} n-1]{1 - (n-1) \lambda x^{n-1}}}\,.
\end{flalign}
It can be checked that, if~$s_\lambda(x)$ is a substitution factor, in other words (at least locally) 
$s_{\lambda_1} \b(s_{\lambda_2}(x)\b) = s_{\lambda_1+\lambda_2}(x)$, such that~$s_\lambda(0) = 0$ for every~$\lambda$, then the transformations defined by~$U_\lambda[f](x) = \l(\frac{s_\lambda(x)}{x}\r) f\b((s_\lambda(x)\b)$ form a one-parameter (possibly local) group.
\end{comm}

\begin{rem} \label{rem:tgvect}
The transformations~$U_\lambda[f](x)$ are conducted near~$x = 0$ and the result must also stay in that neighbourhood. Thus, whenever the composition of two such transformations, say~$U_{\lambda_1}$ and~$U_{\lambda_2}$, is realized, the values~$\lambda_1$ and~$\lambda_2$ of parameter~$\lambda$ have to be chosen small enough to 
keep~$\l(U_{\lambda_1} \circ U_{\lambda_2}\r)[f](x)$ stay also close to zero. Then, the correctness of the computational procedure can also be stated \emph{a posteriori} by making use of a technique of tangent vector as follows (see~\cite{DuPT11}).\par
Check that, for small values~$\lambda_1$ and~$\lambda_2$ of the parameter~$\lambda$, 
$U_{\lambda_1} \circ U_{\lambda_2} = U_{\lambda_1+\lambda_2}$\ (local one-parameter group) and check 
that~$\frac{\rmd}{\rmd \lambda}\b|_{\lambda=0}\, U_\lambda[f](x) = \b(q(x)\partx + v(x)\b)[f](x) %
= \b(u^{-1}(x)\, q(x)\partx \,u(x)\b)[f](x)$ (tangent vector field at the origin). Then the transformations involved define substitutions with prefunctions factors: $f\, \map\, g_\lambda(f \circ s_\lambda)$.
\end{rem}

\subsubsection{The Lie bracket and the prefunction} \label{Liebprefun}
As examined in Ex.~\ref{ex1} in \S\ref{liebrdop}, given~$k,\, \ell$ and~$r,\, s$ in~$\Z_{>0}$, consider the words~$\omega_1 = a^{+^{(k+1)}} a a^{+^r}$\ and~$\omega_2 = a^{+^{(\ell+1)}} a a^{+^s}$ in~$\rHW_{\le 1}\!\setminus \!\rHW_0$, whose normal forms are~$\fN\l(\omega_1\r) = a^{+^{(k+1)}} + ra^{+^k}$\ and~$\fN\l(\omega_2\r) = a^{+^{(\ell+1)}} + sa^{+^\ell}$. They are represented here as~$x^{k}\vthx + rx^{k}$\ and~$x^{\ell}\vthx + sx^\ell$, respectively (the case 
of~$x^{k}\vthx - rx^{k}$ and~~$x^{\ell}\vthx - sx^\ell$, as given in Ex.~\ref{ex1}, is discussed in Remark~\ref{rem:genliebra}). The Lie Bracket and the prefunction follow:
\begin{flalign} \label{eq:glambda}
\b[x^{k} \vthx + rx^{k} , x^{\ell} \vthx + sx^\ell\b] & = (\ell-k) x^{k+\ell} \vthx + \b(s\ell - rk\b) x^{k+\ell}\ \qquad \tand\nonumber \\
g_\lambda(x) ={} & \delimiterfactor=1200 \frac{1}{\b(1 - (\ell-k)(k+\ell)\lambda x^{k+\ell}\b)^{\frac{s\ell-rk}{(\ell-k)(k+\ell)}}}\,.
\end{flalign}
To draw conclusions as to the characteristics of the prefunction~$g_\lambda(x)$ in eq.~\eqref{eq:glambda}, the problem at stake is to determine the sign of~$\frac{s\ell-rk}{(\ell-k)}$, according to the respective values of integers~$k, \ell, r, s\ge 1$.

First, without loss of generality, we can assume~$k\ne \ell$, otherwise $g_\lambda(x)\equiv 1$ (see~\S\ref{subst2}). Next, the case of~$\ell > k$ gives rise to three subcases which make the prefunctions~$g_\lambda$, up to the sign of the numerator, denoted by~$\theta := s\ell - rk$.
\bi
\item If~$\theta = 0$, then~$g_\lambda(x) = 1$.

\item If~$s/r > k/\ell$, then~$\theta > 0$\ and~$g_\lambda(x) %
=\delimiterfactor=1200 \Frac{1}{\l(1 - (\ell-k)(k+\ell)\lambda x^{k+\ell}\r)^{\theta/(k+\ell)(\ell-k)}}$.

\item If~$s/r < k/\ell$, then~$\theta < 0$\ and~$g_\lambda(x) %
=\delimiterfactor=1200 \l(1 - (\ell-k)(k+\ell)\lambda x^{k+\ell}\r)^{|\theta|/(k+\ell)(\ell-k)}$.
\ei
Finally, in the symmetric case of~$\ell < k$, the behaviour of~$\theta = s\ell - rk$ is similar to the one examined above, Eept (up to a sign) for the expressions of~$g_\lambda(x)$ within the previous last two subcases. More precisely, when~$\ell < k$ and according to the sign of~$\theta$.
\[
g_\lambda(x) ={}
\begin{dcases} \delimiterfactor=1500 
\b(1 + |\ell-k|(k+\ell)\lambda x^{k+\ell}\b)^{\theta/|\ell-k|(k+\ell)} & \text{if}\ \ \theta > 0,\\ \delimiterfactor=1500 
\b(1 + |\ell-k|(k+\ell)\lambda x^{k+\ell}\b)^{-|\theta|/|\ell-k|(k+\ell)} & \text{if}\ \ \theta < 0.
\end{dcases}\]
Whatever~$k$ and $\ell$ in~$\Z_{>0}$, if~$r = s$ the Lie bracket simplifies 
to~$(\ell-k)x^{\ell+k}\vthx + s(\ell-k)x^{k+\ell}$ and, as~$\theta = s(\ell-k)$, the prefunction reduces 
to~$\b(1 - (\ell-k)(k+\ell)\lambda x^{k+\ell}\b)^{-s/\ell+k}$. This completes the discussion regarding the values of the prefunction factor.

\begin{rem} \label{rem:genliebra}
Turning now to the features of the Lie bracket of the two differential operators~$x^{k}\vthx - rx^{k}$\ 
and~$x^{\ell}\vthx - sx^\ell$ as summarized in Ex.~\ref{ex1}, the discussion runs along the same lines as in the above one for eq.~\eqref{eq:glambda}. Still assuming~$k\neq \ell$ and~$\theta := s\ell - rk$, we have
\be
\item[$(i)$] In the case when the scalar part is $\theta\, x^{k+\ell}$, the expressions of the prefunction~$g_\lambda$ are similar to the ones in the above three cases up to the sign of~$\theta$.
\item[$(ii)$] In the case when the scalar part is $- (rk + s\ell) < 0$, the sign of~$\frac{rk+s\ell}{k-\ell}$ depends only on whether~$k > \ell$ or not. Then, the prefunction turns out to take the general forms
\[ \delimiterfactor=1500
g_\lambda(x) = \b(1 \pm |\ell-k|(k+\ell)\lambda x^{k+\ell}\b)^{\mp |rk+s\ell|/|\ell-k|(k+\ell)}.\]
\ee
\end{rem}
The goal of next~\S\ref{sub:opgra} and following Sections is to adapt the aforegoing parts to the theory of Riordan arrays by regarding the framework of the transformations involved in the integration of the one-parameter group 
in~$\rHW_{\le 1}\!\setminus \! \rHW_0$ (i.e. substitutions with prefunctions) as transformation matrices.

\subsection{Transformation of matrices and Riordan arrays} \label{sub:opgra}
Due to the emphasis placed on generating functions in what follows, we first recall some basic notations (see also Section~\ref{riordangroup} and Appendix~\ref{riordangroup}). 
Usual generating functions~$f(z) = \sum_{n\ge 0} f_n w_n z^n$ are specialized to~$w_n = 1/c_n$, where $(c_n)$ 
($n\in \N$) is a fixed sequence of non-zero constants with~$c_0 = 1$, given once and for all (see Appendix~\ref{app:fpsgf}). In particular, if~$c_n = 1$, $f(z) = \sum_{n} f_n z^n/c_n$ is an \emph{ordinary generating function} (OGF) and, if~$c_n = n!$, $f(z)$ is an \emph{exponential generating function} (EGF). The notation~$\l[z^n\r]$ (or $\langle \frac{z^n}{c_n},f(z)\rangle$) stands for the \emph{coefficient extractor operator}. 
If~$f(z) = \sum_{n} f_n z^n/c_n$, then~$c_n\l[z^n\r]f(z) := f_n$: $\l[z^n\r]f(x)$ denotes for the coefficient of~$z^n$ in the OGF~$f(z)$ and similarly, the coefficient of~$z^n/n!$ in the EGF~$f(z)$ is~$n![z^n]f(z) := f_n$; that is the respective coefficients of~$z^n$ and~$z^n/n!$ in the expansion of~$f(z)$ into powers of~$z$~\cite[p.~19]{FlSe09}). Herein, we are mostly concerned with OGFs and EGFs in~$\C[[z]]$ or~$\R[[z]]$.

\subsubsection{Combinatorics of infinite matrices in $\rHW_\C$ and~$\rHW_{\le 1}\!\setminus \!\rHW_0$} \label{infmat}
Consider, as examples, the upper-left corner of the (doubly infinite) matrices~$M_\omega$, each representing a 
word~$\omega\in \rHW_{\C}$, such as the Pascal matrices or the Stirling matrix (as exemplified in 
Ex.~\ref{ex:genstirling2}, \S\ref{nof}, and Ex.~\ref{ex:binmatstirling}, \S\ref{rgr}). More precisely, 
for~$\omega = a^+a$ ($\omega\map XD$) one gets the array of the usual Stirling numbers of the second 
kind~$\l(\stirp{n}{k}\r)_{n,k\in \N}$ from their classical recurrence relation: once the first line is fixed, the Stirling array can be constructed iteratively, whose bivariate EGF is~$\sum_{k\ge 0} \stirp{n}{k} x^n/n! y^k = %
e^{y(e^x-1)}$.

\begin{lem} \label{lem:rfalg} \emph{(Duchamp \textit{et al.}~\cite{DuPT11,DPSHB04})}\
For any homogeneous operator~$\omega\in \rHW_{\C}$, all lines~$M(n,k)_{k\in \N}$ of the Stirling type matrix are finitely supported. Such matrices are said \emph{row finite} with set denoted as~$\rRFM_\N(\C)$, and their composition makes~$\rRFM_\N(\C)$ into an algebra. 
\end{lem}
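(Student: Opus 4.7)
The plan is to first verify row-finiteness for a fixed homogeneous $\omega$ and its powers, and then to check that the set $\rRFM_\N(\C)$ is stable under matrix composition (together with addition and scalar multiplication), which will give it the structure of an algebra.

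First, I would exploit the decomposition of a homogeneous operator $\omega\in \rHW_{\C}$ as a \emph{finite} linear combination $\omega = \sum_{i-j=E} \alpha(i,j)\, X^i D^j$ on the normal form basis $(X^i D^j)_{i,j\in \N}$; finiteness is built into the definition since $\omega$ is a polynomial in the generators. Now invoke the structure constants of Lemma~\ref{strconst}: the product of any two normal-ordered monomials $(X^k D^\ell)(X^r D^s)$ is a finite sum indexed by $i = 0, \ldots, \min(\ell, r)$ of normal-ordered monomials. By a straightforward induction on $n$, it follows that $\fN(\omega^n)$ is again a finite linear combination of normal-ordered monomials on the basis $(X^i D^j)$.

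Next, I would read off the $\omega$-Stirling coefficients $\stirp{n}{k}_\omega$ directly from equation~\eqref{eq:genstir2}. Because $\fN(\omega^n)$ has only finitely many nonzero terms by the previous step, the sum $\sum_{k\ge 0} \stirp{n}{k}_\omega X^k D^k$ appearing in \eqref{eq:genstir2} is effectively finite for each fixed $n$, and the outer factor $X^{nE}$ (or $D^{n|E|}$) does not affect the support. Identifying $M_\omega(n,k) := \stirp{n}{k}_\omega$, the $n$-th row $\bigl(M_\omega(n,k)\bigr)_{k\in \N}$ has finite support, i.e.\ $M_\omega \in \rRFM_\N(\C)$.

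Finally, I would check that $\rRFM_\N(\C)$ is closed under the usual matrix operations. Addition and scalar multiplication preserve row-finiteness trivially. For the composition $(AB)(n,k) = \sum_{j\ge 0} A(n,j) B(j,k)$, row-finiteness of $A$ restricts the sum to the finite set $J_n = \{j : A(n,j)\ne 0\}$, so each entry $(AB)(n,k)$ is a well-defined finite sum; moreover $(AB)(n,k) \ne 0$ forces $B(j,k)\ne 0$ for some $j\in J_n$, so the support of the $n$-th row of $AB$ is contained in the finite union $\bigcup_{j\in J_n} \mathrm{supp}\, B(j,\cdot)$, itself finite. Associativity and distributivity then follow from the fact that this matrix product corresponds, via the Bargmann--Fock representation, to composition of endomorphisms of $\C[x]$, so $\rRFM_\N(\C)$ inherits the algebra axioms from $\rEnd_\C(\C[x])$.

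The main obstacle is the inductive step ensuring that powers of $\omega$ remain \emph{finite} linear combinations in the normal basis; this rests crucially on the fact that in Lemma~\ref{strconst} the contraction index $i$ is bounded by $\min(\ell,r)$, so products of finitely supported normal forms do not spread to infinitely many basis elements. All other parts of the argument are routine once this stability of finite support under the normal-ordering product has been established.
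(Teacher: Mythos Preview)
Your argument is correct and covers both claims of the lemma, but it proceeds along a genuinely different line from the paper. You establish row-finiteness abstractly: finite normal form of $\omega$, closure of finite normal forms under products via the structure constants of Lemma~\ref{strconst}, and then induction on $n$. The paper instead gives a concrete combinatorial description: it observes that the Stirling-type matrix has a staircase shape, with the last nonzero entry of row $n$ sitting precisely in column $n\delta$ (and equal to $1$), where $\delta = |\omega|_a$ is the number of annihilations in $\omega$. This buys more than mere row-finiteness: it pins down the exact support and yields the characterization that the matrix is unitriangular iff $\delta = 1$, as well as the first-column criterion. Conversely, your route has the advantage of handling a general homogeneous $\omega$ (a linear combination of monomials) uniformly, and of actually verifying the algebra closure of $\rRFM_\N(\C)$, which the paper's proof leaves implicit. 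One small remark: your appeal to the Bargmann--Fock representation for associativity is unnecessary, since associativity of row-finite matrix products is elementary once well-definedness of the entries is in hand.
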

\begin{proof}
In each case, the matrix has the form of a staircase where each ``step'' depends on the number~$\delta = |\omega|_a$. One can prove precisely that each row ends with a ``one'' in the cell~$(n,n\delta)$, and we number the entries 
from~$(0,0)$. Thus, all matrices are row finite and unitriangular if, and only if, $\delta = 1$ (the matrices of coefficients for words in~$\rHW_{\le 1}\!\setminus \!\rHW_0$). Moreover, the first column is~$(1, 0, \ldots, 0,\ldots, 0, \ldots)$ if, and only if, $\omega$ ends with an ``a'' (this means that~$\fN(\omega^n)$ has no constant term for all~$n > 0$). 
\end{proof}
Observe that such matrices belong to the group of \emph{unipotent matrices}. As emphasized in \S\ref{intopg} and in Remark~\ref{rem:unipotent} of \S\ref{nof}, unipotent matrices are required for words in~$\rHW_{\le 1}\!\setminus \!\rHW_0$, i.e. which have been proved of special type: the matrices of substitutions with prefunctions. Note also that the matrices in~$\rRFM_\N(\C)$ are coding certain classes of operators, such as the continuous operators in the Fréchet space~$\C[[z]]$ endowed with the (semi-norms) topology of Treves~\cite[p.~43]{DuPT11}.

\subsubsection{The algebra $\cL(\C^\N)$ of sequence transformations} \label{transfseq}
Let~$\C^\N$ be the vector space of all complex sequences, equipped with the Treves product topology. It is easy to check that the algebra~$\cL(\C^\N)$ of all continuous operators~$\C^\N\lra \C^\N$ is the space~$\rRFM_\N(\C)$. 
For a sequence~$A = (a_n)_{n\ge 0}$, the transformed sequence~$B = M A$ is given by~$B = (b_n)_{n\ge 0}$ 
with~$b_n = \sum_{k\ge 0} M(n,k) a_k$. We may associate a series with a given sequence~$(a_n)_{n\in \N}$ and a sequence of prescribed (non-zero) denominators~$(c_n)_{n\in \N}$ to a GF~$\sum_{n\ge 0} a_n z^n/c_n$. Thus, once the~$c_n$'s have been chosen, to every (linear continuous) transformation of generating functions, one can associate a corresponding matrix.

The algebra~$\cL(\C^\N)$ possesses many interesting subalgebras and subgroups, such as the algebra of lower triangular 
transformations~$\cT_\N(\C)$, the group~$\widetilde{\cT}_\N(\C)$ of invertible elements of the latter (which is the set 
of infinite lower triangular matrices with non-zero elements on the diagonal), the subgroup of unipotent transformations~$\cUT_\N(\C)$ (i.e. the set of infinite lower triangular matrices with elements on the diagonal all equal to 1) and its Lie algebra~$\cNT_\N(\C)$, the algebra of locally nilpotent transformations (with zeroes on the diagonal), etc. (see~\cite{DPSHB04}). 

To each matrix~$M(n,k)_{n,k\in \N}\in \rRFM_\N(\C)$, one may associate an operator~$\Phi_M\in \rEnd\b(\C[[x]]\b)$ 
such that the image of~$f = \sum_{k\in \N} a_k x^k/k!\in \C[[x]]$ is defined as
\[
\Phi_M[f](x) = \sum_{n\in \N} b_n x^n/n!\ \quad \text{with}\ \ b_n = \sum_{k\in \N} M(n,k) a_k.\]
Note that if~$\C[[x]]$ is endowed with the structure of Fréchet space of simple convergence of the coefficients (also called Treves topology), each~$\Phi_M$ is continuous. Then, the next proposition states that there exists no other case.

\begin{prop} \label{algisom}
The correspondence~$\varphi : M\lra \Phi_M$\ from~$\rRFM_\N(\C)$ to~$\cL\b(\C[[x]]\b)$ (continuous endomorphisms) is 
one-to-one and linear. Moreover, since~$\Phi_{MN} = \Phi_M\circ \Phi_N$\ and~$\Phi_I = Id_{\C[[x]]}$, $\varphi$ is a vectorial space isomorphism and also an isomophism of algebras.
\end{prop}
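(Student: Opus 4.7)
The plan is to verify four ingredients: (i) $\Phi_M$ is a well-defined continuous endomorphism of $\C[[x]]$ for every $M\in \rRFM_\N(\C)$; (ii) $\varphi$ is linear and injective; (iii) $\varphi$ is surjective onto $\cL(\C[[x]])$; and (iv) $\varphi$ respects multiplication and the identity. For (i) and (ii), write $f=\sum_{k\ge 0} a_k\, x^k/c_k$; row-finiteness of $M$ makes $b_n = \sum_k M(n,k) a_k$ a finite sum, so $\Phi_M[f]\in \C[[x]]$ is well-defined, and each $b_n$ depends on only finitely many $a_k$'s, which is precisely the statement of continuity in the Treves semi-norm topology. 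Linearity of $\varphi$ is immediate from that formula. Evaluating $\Phi_M$ on the monomial basis gives $\Phi_M[x^k/c_k] = \sum_{n\ge 0} M(n,k)\, x^n/c_n$, so $M(n,k) = c_n[x^n]\,\Phi_M[x^k/c_k]$ and $\Phi_M\equiv 0$ forces $M\equiv 0$.

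For (iii), given any continuous $\Phi\in \cL(\C[[x]])$, define $M_{\Phi}(n,k) := c_n[x^n]\,\Phi[x^k/c_k]$; the crux is to prove $M_\Phi$ row-finite. Assume for contradiction that some row $n$ has infinite support at indices $k_0<k_1<k_2<\cdots$, and set $a_{k_i} := 1/M_\Phi(n,k_i)$ with all other $a_k = 0$. The truncations $f_N := \sum_{i\le N} a_{k_i}\, x^{k_i}/c_{k_i}$ converge coordinatewise to a series $f\in \C[[x]]$ (each coefficient stabilizes), hence in the Treves topology; by continuity, the coefficient $c_n[x^n]\Phi[f_N] = \sum_{i\le N} M_\Phi(n,k_i)a_{k_i} = N+1$ must then converge, which is absurd. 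With row-finiteness in hand, $\Phi_{M_\Phi}$ and $\Phi$ agree on monomials by construction, hence on polynomials by linearity, and hence everywhere on $\C[[x]]$ by Treves-continuity together with the density of polynomials.

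Finally, $\Phi_I = \mathrm{Id}_{\C[[x]]}$ is immediate. For the multiplicative identity, expand $(\Phi_M\circ \Phi_N)[f]$ as $\sum_j M(n,j)\sum_k N(j,k)\, a_k$: row-finiteness of $M$ makes the outer sum finite and row-finiteness of $N$ makes each inner sum finite, so interchanging yields $\sum_k (MN)(n,k) a_k = \Phi_{MN}[f]$, while closure of $\rRFM_\N(\C)$ under composition (Lemma~\ref{lem:rfalg}) ensures $MN\in \rRFM_\N(\C)$. The main obstacle throughout is the row-finiteness argument in step (iii); the remaining steps are linear-algebra bookkeeping once the Treves topology and the monomial basis $\{x^k/c_k\}$ are fixed.
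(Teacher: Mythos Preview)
Your proof is correct and complete. The paper states this proposition without proof, so there is no ``paper's own proof'' to compare against; your argument supplies exactly what is missing. The four steps you outline are all sound. In particular, the row-finiteness argument in step~(iii) is the only place where any real work is needed, and your contradiction via the sequence $f_N$ converging in the Treves (product) topology while the $n$-th coefficient of $\Phi[f_N]$ diverges is the standard and correct way to do it. The density of polynomials in $\C[[x]]$ for the Treves topology, which you invoke to pass from agreement on monomials to agreement everywhere, is indeed valid (every formal series is the Treves-limit of its truncations), and both $\Phi$ and $\Phi_{M_\Phi}$ are continuous, so the extension is justified. The interchange of finite sums in step~(iv) is unproblematic for the reason you give.

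One small cosmetic point: in your final paragraph you write ``$\sum_j M(n,j)\sum_k N(j,k)\, a_k$'' for the $n$-th coefficient of $(\Phi_M\circ\Phi_N)[f]$; strictly this should carry the $c_n$ normalisation, but since the $c_n$'s are fixed non-zero constants this does not affect the argument.
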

The proposition applies immediately to the one-parameter groups~$\l\{e^{\lambda \omega}\r\}$ generated by homogeneous operators~$\omega$ through Bargmann--Fock representation ($a\map D$ and~$a^+\map X$). Indeed, the matrix 
$\varphi^{-1}(\omega)$ is ($E$ denotes the excess of~$\omega$)
\bi
\item Strictly lower triangular when~$E < 0$,
\item Diagonal when~$E = 0$,
\item Strictly upper triangular when~$E > 0$.
\item[] (These matrices are different from the ``generalized Stirling matrices'' defined by eq.\eqref{eq:genstir2}: their non-zero elements are supported by a line parallel to the diagonal.)
\ei
Consequently, $e^{\lambda \omega}$ always admits a representation as a group of operators in an appropriate space. 
For~$E < 0$, one gets each of the polynomial in the spaces~$\C_{\le n}[x]$ (i.e. the sets of polynomials whose degree is less than~$n$). For~$E\ge 0$, one gets~$\C[[x]]$ endowed with the topology of Treves.

\subsubsection{Substitutions with prefunctions} \label{subspref}
Let~$(c_n)_{n\ge 0}$ be a fixed set of denominators. For a generating function~$f$, we consider the transformation
$\Phi_{g,\phi} [f](x) = g(x) f(\phi(x))$, where~$g\in \C[[x]]$ with~$g_0 = g(0) = 1$\ and~$\phi\in x\C[[x]]$. The matrix of this transformation~$M_{g,\phi}$ is given by the transforms of the monomials~$x^k/c_k$, hence
\begin{equation} \label{eq:mon1}
\sum_{n\ge 0} M_{g,\phi}(n,k) \tfrac{x^n}{c_n} = \Phi_{g,\phi} \l[\tfrac{x^n}{c_n}\r] = g(x) \tfrac{\phi(x)^n}{c_n}\,.
\end{equation}
If~$g, \phi \neq 0$ (otherwise the transformation is trivial), we can write
\begin{equation} \label{eq:mon2}
g(x) = a_\ell \tfrac{x^\ell}{c_\ell} + \sum_{r>\ell} a_r \tfrac{x^r}{c_r}\ \quad \tand \quad %
\phi(x) = \alpha_m \tfrac{x^m}{c_m} + \sum_{s>m} \alpha_s \tfrac{x^s}{c_s}
\end{equation}
with~$a_\ell , \alpha_m \neq 0$ and then, by~(\ref{eq:mon1},\ref{eq:mon2}),
\[
\Phi_{g,\phi}\l[\tfrac{x^k}{c_k}\r] = %
a_\ell (\alpha_m)^k \tfrac{x^{\ell+mk}}{c_\ell c_m^k c_k} + \sum_{t>\ell+mk} b_t \tfrac{x^t}{c_t}\,.\]
Therefore, the equivalence~$M_{g,\phi}\in \rRFM_\N(\C)\Longleftrightarrow \phi$ \emph{has no constant term} holds true (and, in this case, $M_{g,\phi}$ is always lower triangular).

The converse is true in the following sense.  Let~$T\in \cL\l(\C^\N\r)$ be a matrix with non-zero first two columns and suppose that the first index~$n$ such that~$T(n,k)\neq 0$ is less for~$k = 0$ than for~$k = 1$ (which is the case, by~\eqref{eq:mon1}, when~$T = M_{g,\phi}$). Set
\[
g(x) := d_0 \sum_{n\ge 0} T(n,0) \tfrac{x^n}{c_n}\ \quad \tand \quad %
\phi(x) := \tfrac{c_1}{g(x)} \sum_{n\ge 0} T(n,1) \tfrac{x^n}{c_n}\,,\]
then~$T = M_{g,\phi}$ if, and only if, $\sum_{n\ge 0} T(n,k) \tfrac{x^n}{c_n} = g(x) \tfrac{\phi(x)^k}{c_k}$ for all~$k$. For EGF ($c_n = n!$) this amounts to restating eq.~\eqref{eq:mon1} as
\[
\sum_{n,k\ge 0} T(n,k) \tfrac{x^n}{n!} y^k = g(x) e^{y\phi(x)},\]
Once integrated, the one-parameter group~$U_\lambda$ reveals the generalized Stirling matrix ($M_{g,\phi} %
= \l(\stirp{n}{k}_{\omega}\r))$ expressed by the following proposition.

\begin{prop} \label{equiv}
For any homogeneous operator~$\omega\in \rHW_{\le 1}\!\setminus \! \rHW_0$, let~$f\map U_\lambda[f]$ be the one-parameter group~$e^{\lambda \omega}$ and let~$M_{g,\phi}\in \rRFM_\N(\C)$ denote the corresponding matrix transformation. Assuming that~$E\ge 0$, where~$E$ denotes the excess of~$\omega$, the following two condition are equivalent for any suitable class of functions~$f$ (e.g.~$f$ analytic on~$(0,1)$).
\begin{flalign*}
i)\qquad \sum_{n,k\ge 0} M_{g,\phi}(n,k) \tfrac{x^n}{n!} y^k &= g(x) e^{y\phi(x)}.\\
ii)\qquad U_\lambda[f](x) = g\l(\lambda x^{E}\r)\,& f\l(x\b(1 + \phi(\lambda x^{E})\b)\r).
\end{flalign*}
\end{prop}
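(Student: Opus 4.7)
The plan is to expand $U_\lambda[f](x)$ using the normal-ordered form (eq.~\eqref{eq:genstir2}) and the Bargmann--Fock representation, then show that both (i) and (ii) are equivalent to a single family of identities on the auxiliary generating functions
\[
A_k(t) \,:=\, \sum_{n\ge 0} \stirp{n}{k}_\omega\, \frac{t^n}{n!}, \qquad k\ge 0.
\]
Because $E\ge 0$, each factor $\lambda^n X^{nE}$ appearing in $U_\lambda$ collapses into a single power of $t=\lambda x^E$, so all of the $\lambda$- and $x$-dependence funnels through this one variable.

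First I would replace $\fN(\omega^n)$ by $X^{nE}\sum_k \stirp{n}{k}_\omega X^k D^k$ (eq.~\eqref{eq:genstir2}) and apply Bargmann--Fock (Def.~\ref{def:bfr}) to a suitable smooth $f$, yielding
\[
U_\lambda[f](x) \,=\, \sum_{n\ge 0}\frac{\lambda^n}{n!}\,\omega^n[f](x) \,=\, \sum_{k\ge 0} x^k f^{(k)}(x)\, A_k(\lambda x^E).
\]
The whole proposition then reduces to showing that each of (i) and (ii) is equivalent to the family of identities $A_k(t) = g(t)\phi(t)^k/k!$ for all $k\ge 0$.

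For the equivalence with (i), I would multiply that identity by $y^k$ and sum over $k$: it becomes the exponential identity $\sum_{n,k}\stirp{n}{k}_\omega t^n/n!\,y^k = g(t) e^{y\phi(t)}$, which is exactly (i) (with $t$ renamed $x$); extracting the coefficient of $y^k$ runs the equivalence backward. For the equivalence with (ii), I would plug $A_k(\lambda x^E)=g(\lambda x^E)\phi(\lambda x^E)^k/k!$ into the double-sum expression above and recognise the inner sum as Taylor's expansion of $f$ at the point $x$ with increment $h=x\phi(\lambda x^E)$, giving exactly $g(\lambda x^E)\, f\bigl(x(1+\phi(\lambda x^E))\bigr)$. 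The converse direction is handled by testing (ii) on the family $f(x)=e^{yx}$: both sides produce closed forms, and cancelling $e^{yx}$ and then matching coefficients of $y^k$ recovers the $A_k$-identity.

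The step I would take the most care over is the convergence and interchange of summations. This is why the hypothesis that $f$ is analytic on $(0,1)$ (together with $\lambda$ small enough that $x\phi(\lambda x^E)$ stays inside the disc of convergence of $f$, cf.\ Comment~\ref{comm:shift} and Remark~\ref{rem:tgvect}) really matters: it is precisely what makes the Taylor-series step in direction (i)$\Rightarrow$(ii) legitimate, and the evaluation on exponentials in direction (ii)$\Rightarrow$(i) well-defined. Everything else is routine power-series manipulation, and the argument can alternatively be run entirely formally in $\C[[x,y,\lambda]]$ by reading each identity coefficient by coefficient.
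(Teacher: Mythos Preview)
Your argument is correct and runs parallel to the paper's, with one genuine difference in the direction (i)$\Rightarrow$(ii). Both proofs start from the same operator expansion
\[
U_\lambda \,=\, \sum_{n,k\ge 0} M_{g,\phi}(n,k)\,\tfrac{\lambda^n}{n!}\,x^{nE}\,x^k\partx^k,
\]
and both handle (ii)$\Rightarrow$(i) by evaluating on $f(x)=e^{yx}$ followed by a change of variables $(\lambda x^E,xy)\mapsto(x,y)$. For (i)$\Rightarrow$(ii), however, the paper does \emph{not} use the Taylor expansion of a general analytic $f$: it tests instead on the monomials $f=x^p$, where the inner sum collapses to a finite binomial sum $\sum_{k=0}^p\binom{p}{k}\phi(\lambda x^E)^k=(1+\phi(\lambda x^E))^p$, and then extends to all $f$ by linearity and continuity, invoking that the monomials are total in $\C[[x]]$ for the Treves topology.

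Your Taylor-expansion route is more direct for genuinely analytic $f$ and avoids any topological density argument, at the cost of needing the analyticity hypothesis and the smallness of $\lambda$ to make the series for $f(x+x\phi)$ converge. The paper's monomial-plus-density route buys a cleaner formal-power-series statement (and explicitly needs Treves rather than the ultrametric topology to cover the case $E=0$), but is otherwise the same computation in disguise, since your Taylor identity specialised to $f=x^p$ \emph{is} the binomial identity the paper writes down.
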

\begin{proof}
From eq.~\eqref{eq:genstir2} in~\S\ref{nof}, one first has the following equality between continuous operators
\[
U_\lambda = e^{\fN(\omega^n)} = \sum_{n,k\ge 0} M_{g,\phi}(n,k)\, %
\tfrac{\lambda^n}{n!}\, x^{n E}\, x^k \partx^k.\]
Assuming condition~\emph{i)}, let us check \emph{ii)} for~$f$ a monomial, namely choose the test functions~$f = x^p$, for~$p=0, 1,\ldots$
\begin{flalign} \label{eq:condii}
U_\lambda(x^p) &= \sum_{n\ge 0} \sum_{k=0}^p M_{g,\phi}(n,k)\, \tfrac{(\lambda x^{E})^n}{n!} \tfrac{p!}{(p-k)!}\,x^p\nonumber\\
&= x^p \sum_{k=0}^p \l([y^k] g(\lambda x^{E}) e^{y\phi(\lambda x^{E})}\r)\tfrac{p!}{(p-k)!}\nonumber\\
&= g(\lambda x^{E}) x^p \sum_{k=0}^p \bin{k}{p}\, \phi(\lambda x^{E})^k %
= g(\lambda x^{E}) \l(x\b(1 + \phi(\lambda x^{E})\b)\r)^p.
\end{flalign}
Now, as both sides in \emph{ii)} are continuous and linear in~$f$ and the set of monomials is total~\cite{Bourbaki03} in the space of formal power series endowed with the  topology of Treves (the usual ultrametric topology would not be enough for~$E = 0$), condition~\emph{ii)} holds for any suitable function~$f$.

Conversely, assume condition~\emph{ii)}, then~$U_\lambda(e^{yx}) = g(\lambda x^{E}) e^{yx\l(1+\phi(\lambda x^{E})\r)}$ and, by eq.~\eqref{eq:condii}, one gets
\[
\sum_{n,k\ge 0} M_{g,\phi}(n,k)\, \tfrac{(\lambda x^{E})^n}{n!} (xy)^k %
= g(\lambda x^{E}) e^{yx\phi(\lambda x^{E})}.\]
Finally, the change of variables ($\lambda x^{E}\ra x$ and~$xy\ra y$) yields condition~\emph{i)}. This proves the required Prop.~\ref{equiv}.
\end{proof}

\begin{rem} \label{rem:sheffer}Eq.~\eqref{eq:mon1} is called the \emph{Sheffer condition}, and it is a fundamental notion in Def.~\ref{defra}, i.e. the definition of Riordan arrays~\cite{Roman05,Stanley99,Wilf06}. From now on, we will suppose that~$\phi$ has no constant term ($\alpha_0 = 0$). Moreover, $M_{g,\phi}\in \widetilde{\cT}$ iff 
$a_0,\, \alpha_1\neq 0$, which implies that (on the diagonal)~$M_{g,\phi}(n,n) = a_0/c_0\; (\alpha_1/c_1)^n$. Hence,
\[
M_{g,\phi}\in \cUT\Longleftrightarrow a_0/c_0 = \alpha_1/c_1 = 1,\]
which, for EGF and OGF, reduces to the equivalence~$M_{g,\phi}\in \cUT\Longleftrightarrow a_0 = \alpha_1 = 1$. In this setting, $g$ and~$\phi$ meet the conditions that~$g\in \C[[x]]$\ with~$g_0 = g(0) = 1$ and~$\phi\in x\C[[x]]$. In classical combinatorics (using OGF and EGF), the matrices~$M_{g,\phi}(n,k)$ are known as \emph{Riordan matrices} 
(see Def.~\ref{defra} and Thm.~\ref{thmrg} in \S\S\ref{gra}--\ref{rgr} and, for example, \cite{Roman05,SGWW91}).
\end{rem}

\section{Riordan arrays and Riordan group} \label{riordangroup}
Since power series (OGF and EGF) play a prominent role in the present paper, some basic useful definitions are recalled below, in the beginning of \S~\ref{sub:opgra} and in Appendix~\ref{app:fpsgf}. (For more details on formal power series, the reader may refer to~\cite[Chaps.~2--7]{BaLa06}, \cite[Chap.~I]{Cartan61}, \cite[App.~A5]{FlSe09}, \cite[Chap.~2]{Roman05}, \cite[Chap.~3]{Sprugnoli06}, etc.)

In the following, $(\cS,\times)$ will denote the group of the formal power series~$g\in \C[[x]]$ with non-zero constant term ($g_0 \ne 0$) equipped with the formal multiplication. $x\cS$ will denote the set of all the series~$f = xg$ such that~$f_0 = 0$ and~$f_1 \ne 0$; the maximal ideal in~$\C[[x]]$ consisting of the series with $f_0 = 0$ also forms a composition semigroup. $(x\cS,\circ)$ is also a group for the formal composition in~$\C[[x]]$ (see the excellent papers~\cite{Bacher06,BaLa06,OFarell08}). Moreover, $\ov{f}$ denotes the inverse of~$f$ for the formal composition in~$(x\cS,\circ)$ and~$\ov{f}$ is named the {\em reverse series} of~$f$ in $\C[[x]]$.  
 
\subsection{Riordan arrays} \label{gra}
\emph{Riordan arrays}, named after John Riordan, were introduced by Shapiro~\emph{et al.} in~\cite{SGWW91} and Roman~\cite{Roman05} to generalize the properties of the Pascal Triangle. Along which, the group structure of the set of Riordan arrays was shown, and called the \emph{Riordan group}. In~\cite{Sprugnoli94}, Sprugnoli made use of Riordan arrays for proving combinatorial identities, and then he showed how Riordan arrays can be used to perform combinatorial sum inversions. These seminal articles were followed by a number of papers, 
e.g.~\cite{Barry09,He11,HeSp09,Hennessy11,MRSV97,Sprugnoli06,WaWa08} exploring classical and generalized Riordan arrays, and the theory of the Riordan group and subgroups.

\begin{defi} \label{defra}
Let~$\b(c_n\b)_{n\in\N}$ be a fixed reference sequence of non-zero constants with~$c_0 = 1$. A generalized \emph{Riordan array} with respect to the sequence~$(c_n)$ is a pair~$\b(g(x),f(x)\b)$ of power series such 
that~$g = \sum_{n} g_n x^n/c_n\in \cS$, i.e.~$g$ is unit (or $g_0\ne 0$), and~$f = \sum_{n} f_n x^n/c_n\in x\C[[x]]$, i.e.~$f_0 = 0$. The Riordan array~$\b(g(x),f(x)\b)$ defines an infinite lower triangular 
array~$T = \b\{d_{n,k}\b\}$ ($n,\, k\in \N$) according to the rule
\begin{equation} \label{eq:arraydnk}
d_{n,k} = c_n\l[x^n\r]g(x)\, \frac{f(x)^k}{c_k},
\end{equation} 
where the functions~$g(x) f(x)^k/c_k$ are referred to as the column GFs (or the GF of the $k$th column) of the Riordan array. Furthermore, if~$f\in x\cS$, i.e.~$f$ is such that~$f_1 = f'(0)\ne 0$ (or~$ord(f) = 1$), the Riordan array is said to be \emph{proper}. 
\end{defi}
By definition, any proper Riordan array is invertible; that is, the infinite lower triangular 
array~$T = \b\{d_{n,k}\b\}$ $(k,\, n\in \N)$ is such that~$d_{n,n}\ne 0$ for all~$n$. Furthermore, for any proper Riordan array~$T = \b(g(x),f(x)\b)$, its diagonal sums are just the row sums of the vertically stretched 
array~$\b(g(x),xf(x)\b)$ and hence have OGF~$\frac{g(x)}{1-xf(x)}$. As regards the sequence~$(c_n)$, ordinary Riordan arrays correspond to the case of~$c_n = 1$, while exponential Riordan arrays corresponds to the case 
of~$c_n = n!$. 

\begin{theorem} \label{thmsumdnk}
Let~$T = \b(g,f\b) = \b\{d_{n,k}\b\}$ ($n,\, k\in \N$) be a proper Riordan array with respect to~$(c_n)$ and 
let~$h(x) = \sum_{n\ge 0} h_n \frac{x^n}{c_n}$ be the generating function of the sequence $(h_n)$. 
Then we have 
\begin{equation} \label{eq:sumdnk}
\sum_{k=0}^n d_{n,k} h_k = c_n\l[x^n\r]g(x)h\b(f(x)\b),
\end{equation}
or equivalently, the so-called ``fundamental theorem of proper Riordan arrays'',
\[
\b(g(x),f(x)\b) * h(x) = g(x) h\b(f(x)\b),\]
where~$*$ denotes the usual array product.
\end{theorem}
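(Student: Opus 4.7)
The plan is to unfold the left-hand side using the column-generating-function definition~\eqref{eq:arraydnk}, then swap the finite sum with the coefficient extractor and recognize the result as the $n$th coefficient of the composed generating function $g(x)\,h(f(x))$.

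First I would substitute $d_{n,k} = c_n[x^n]\,g(x)\,f(x)^k/c_k$ directly into the sum:
\begin{equation*}
\sum_{k=0}^{n} d_{n,k} h_k \;=\; \sum_{k=0}^{n} c_n[x^n]\,g(x)\,\frac{f(x)^k}{c_k}\,h_k \;=\; c_n[x^n]\,g(x)\sum_{k=0}^{n} h_k\,\frac{f(x)^k}{c_k},
\end{equation*}
using $\C$-linearity of the coefficient extractor $[x^n]$ and pulling the $n$-independent factor $g(x)$ out.

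Next, the key observation is that $f \in x\cS$, so $f$ has no constant term and $\mathrm{ord}(f^k)\ge k$. Consequently, for every $k > n$ the series $g(x)f(x)^k/c_k$ has no term in $x^n$, and one can extend the upper limit of the sum to infinity without changing the value of $c_n[x^n](\cdot)$:
\begin{equation*}
c_n[x^n]\,g(x)\sum_{k=0}^{n} h_k\,\frac{f(x)^k}{c_k} \;=\; c_n[x^n]\,g(x)\sum_{k\ge 0} h_k\,\frac{f(x)^k}{c_k}.
\end{equation*}
This is the only step requiring any real argument; it relies essentially on properness of the array (equivalently, $f_0=0$), which guarantees that the infinite series on the right makes sense as a formal power series since the $x$-adic valuations of its summands tend to infinity.

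Finally I would recognize the remaining series as the composition $h\circ f$: since $h(x)=\sum_{n\ge 0} h_n x^n/c_n$, plugging $f(x)$ into $h$ yields $h(f(x)) = \sum_{k\ge 0} h_k\,f(x)^k/c_k$, so
\begin{equation*}
\sum_{k=0}^{n} d_{n,k} h_k \;=\; c_n[x^n]\,g(x)\,h\!\b(f(x)\b),
\end{equation*}
which is precisely~\eqref{eq:sumdnk}. The equivalent product formulation $(g(x),f(x))*h(x) = g(x)\,h(f(x))$ follows by reading the identity column by column in $h$. No serious obstacle is expected: the proof is essentially manipulation of generating functions, with the only subtlety being the valuation argument that justifies passing to the infinite sum, which is secured by the hypothesis $f \in x\cS$ built into the definition of a proper Riordan array.
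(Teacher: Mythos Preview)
Your proof is correct and follows the standard argument for this well-known result; the paper in fact states the theorem without proof, so there is nothing to compare against. One minor terminological quibble: you write that the valuation step ``relies essentially on properness of the array (equivalently, $f_0=0$)'', but properness in Def.~\ref{defra} means $f_0=0$ \emph{and} $f_1\neq 0$; your argument only uses $f_0=0$, which already holds for any Riordan array, proper or not.
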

The \emph{fundamental theorem of proper Riordan arrays} is a somewhat practical statement of Thm.~\ref{thmsumdnk}. Given an integer sequence~$(\lambda_n)_{n\in \N}$\ with OGF~$\Lambda$, the OGF of the sequence~$\l\{T (\lambda_n)\r\}$ ($n,\, k\in \N$) %:= c_n \l[x^n\r] \l(T\Lambda\r)(x)$\
is~$g(x) \Lambda\b(f(x)\b)$, where a sequence is regarded as an (infinite) column vector. The (infinite) array~$T$\ can thus be considered to act on the ring of integer sequences~$\Z^\N$\ by multiplication. This action can be extended to the ring~$\Z[[x]]$ by~$T : \Lambda(x)\map T\Lambda(x) = g(x) \Lambda\b(f(x)\b)$.

\begin{rem}
A useful (combinatorial) alternative definition to the (algebraic) Def.~\ref{defra} of proper Riordan arrays is in terms of the so-called \emph{$A$-sequence} and \emph{$Z$-sequence}. It was found by Rogers in~\cite{Rogers78} and extensively developed after Shapiro \emph{et al.}~\cite{SGWW91} by Merlini \emph{et al.}~\cite{MRSV97}, Sprugnoli~\cite{Sprugnoli94,Sprugnoli06}, etc. Let~$A(x)$ in~$\cS$ ($a_0\ne 0$) and~$Z(x)$ be the two GFs of~$A$-sequences $(a_i)$ and~$Z$-sequences $(z_i)$\ $(i\in \N)$ that satisfy the relations~$f(x) = xA\b(f(x)\b)$ 
and~$g(x) = \frac{g_0}{1 - xZ(f(x))}$. Then, any array~$T = \b\{d_{n,k}\b\}$, for~$0\le k\le n$, is a proper Riordan array if, and only if, every element~$d_{n,k}$ and~$d_{n,0}$ can be expressed as the finite linear combinations for all~$n\ge k\ge 0$,
\begin{equation} \label{AZrec}
d_{n,k} = \sum_{j=0}^{n-k} a_j d_{n-1,k+j-1}\ \quad \text{and}\ \quad  d_{n,0} = \sum_{j=0}^{n-1}  z_j d_{n-1,j}.
\end{equation}
Now, the sequences~$(a_i)$ and~$(z_i)$ are both unique since the~$A$-sequence only depends on~$f$ and the~$Z$-sequence can be constructed by expressing~$z_i$ in terms of the elements in row~$i$ for all~$i\in \N$ in a unique way. Actually, both sums in~\eqref{AZrec} are finite because~$d_{n,k} = 0$ for all $k > n$ by definition. In summary, the $Z$-sequence characterizes column $0$, while the $A$-sequence characterizes all the other columns of the Riordan array. Any proper Riordan array~$T = (g,f)$ as defined in Def.~\ref{defra} is characterized by the triple~$\b(g_0, Z, A\b)$.
\end{rem}

\subsection{The Riordan group} \label{rgr}
The \emph{Riordan group} introduced by Shapiro~\emph{et al.} in~\cite{SGWW91} is algebraically defined by Bacher in~\cite{Bacher06} as the interpolation group, which has a faithful representation into the infinite lower triangular matrices and carries thus the natural structure of a Lie group. From Thm.~\ref{thmsumdnk}, we can further compute the product~$(g,f) * (h,\ell)$ of two Riordan arrays. In fact, the column GFs of~$\b(h(x),\ell(x)\b)$ 
are~$h(x)\,\ell(x)^k/c_k$ by Def.~\ref{defra}. Thus, the~$k$th column GF of the product~$\delimiterfactor=1200 %
\l(g(x),f(x)\r) * \l(h(x),\ell(x)\r)$ is~$g(x) h\b(f(x)\b)\, \ell \b(f(x)\b)^k/c_k$, which means that it is also a Riordan array, i.e.,
\begin{equation} \label{eq:product}
\b(g(x),f(x)\b) * \b(h(x),\ell(x)\b) = \l(g(x) h\b(f(x)\b), \ell \b(f(x)\b)\r).
\end{equation}
For any fixed non-zero sequence~$(c_n)$ with~$c_0 = 1$, the set of all proper Riordan arrays~$\b\{(g,f)\b\}$, is a group under the product of matrices. (Recall that, by Def.~\ref{defra}, a proper Riordan array meets the conditions that~$g\in \cS$ and~$f\in x\cS$, i.e.~$g$ is an invertible series ($g_0\ne 0$), $f_0 = 0$ and $f_1\ne 0$.)

\begin{theorem} \label{thmrg}
For any non-zero reference sequence~$(c_n)$ with $c_0 = 1$, the set of all proper generalized Riordan arrays~$\b\{(g,f)\b\}$ endowed with the Riordan matrix product~$*$ in eq.~\eqref{eq:product} is a group. The identity of this group is~$I = (1,x)$ and the inverse of any array~$(g,f)$ is~$\l(\frac{1}{g\circ \ov{f}}, \ov{f}\r)$, where~$\ov{f}$ is the reverse series of~$f$ (i.e. its inverse for the formal composition in~$\C[[x]]$). This group is called the \emph{Riordan group} with respect to~$(c_n)$, denoted by~$\b(\cR,*\b)$ or~$\cR$. %\delimiterfactor=700 
\end{theorem}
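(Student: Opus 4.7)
The plan is to verify the four group axioms for $(\cR,*)$ directly from the product formula~\eqref{eq:product}, leaning on the fact that $(\cS,\times)$ and $(x\cS,\circ)$ are already known to be groups (as recalled at the beginning of Section~\ref{riordangroup}).

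First I would establish closure. Given two proper arrays $(g,f)$ and $(h,\ell)$, formula~\eqref{eq:product} delivers the pair $\bigl(g(x)\,h(f(x)),\,\ell(f(x))\bigr)$, and I would check that this pair satisfies Definition~\ref{defra}'s conditions: the product $g\cdot(h\circ f)$ has constant term $g_0 h_0\neq 0$, hence belongs to $\cS$; and $\ell\circ f$ has zero constant term together with a linear coefficient proportional to $\ell_1 f_1\neq 0$, hence belongs to $x\cS$. So the product is again a proper Riordan array. Associativity I would dispatch by reading off the triple product from~\eqref{eq:product} applied twice: both bracketings yield the same pair $\bigl(g_1(x)\cdot g_2(f_1(x))\cdot g_3(f_2(f_1(x))),\,f_3(f_2(f_1(x)))\bigr)$, which follows from associativity of ordinary multiplication in $\C[[x]]$ together with associativity of composition in $(x\cS,\circ)$. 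The identity is immediate: plugging $(1,x)$ on either side of~\eqref{eq:product} leaves $(g,f)$ unchanged.

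The heart of the argument is the inverse step. Given a proper $(g,f)$, the condition $f\in x\cS$ guarantees that the formal reverse $\ov{f}$ exists in $x\cS$. Moreover $g\circ\ov{f}$ has constant term $g(0)=g_0\neq 0$, so $\tfrac{1}{g\circ\ov{f}}\in\cS$, ensuring that the candidate $\bigl(\tfrac{1}{g\circ\ov{f}},\ov{f}\bigr)$ is itself a proper Riordan array. Substituting into~\eqref{eq:product} I would verify
\[
(g,f)*\Bigl(\tfrac{1}{g\circ\ov{f}},\ov{f}\Bigr)=\Bigl(g(x)\cdot\tfrac{1}{g(\ov{f}(f(x)))},\,\ov{f}(f(x))\Bigr)=(1,x),
\]
and symmetrically on the other side, using $f\circ\ov{f}=\ov{f}\circ f=x$ to collapse the first coordinate to $\tfrac{1}{g(\ov{f}(x))}\cdot g(\ov{f}(x))=1$.

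I do not anticipate a serious obstacle: once~\eqref{eq:product} is in hand and $(\cS,\times)$, $(x\cS,\circ)$ are taken as groups, all four axioms reduce to direct substitution. The only subtlety is the bookkeeping with constant and linear terms, so that the closure and inverse checks certify both coordinates really stay in their respective subgroups (unit for the first coordinate, order one for the second); this is routine but needs to be written out explicitly to ensure that $(\cR,*)$ is closed and each element has its inverse \emph{within} $\cR$.
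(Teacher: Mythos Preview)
Your proposal is correct and follows essentially the same approach as the paper: direct verification of the group axioms from the product formula~\eqref{eq:product}. If anything, you are more explicit than the paper, which simply asserts closure and associativity from~\eqref{eq:product} and then checks the identity and inverse equations; your careful bookkeeping on constant and linear terms (to keep both coordinates in $\cS$ and $x\cS$ respectively) fills in details the paper leaves implicit.
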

\begin{proof} According to eq.~\eqref{eq:product}, the Riordan group~$\cR$ is closed under the Riordan product and the product is associative. The matrix~$(1,x)$ is an element of~$\cR$ and, for each matrix~$\b(g(x),f(x)\b)\in \cR$, there exists a matrix $\delimiterfactor=1200 \b(\dfrac{1}{g(\ov{f}(x))},\ov{f}(x)\b)$ in~$\cR$ for which we have
\begin{align*}
\b(g(x),f(x)\b) &* (1,x) = \b(g(x),f(x)\b) = (1,x) * \b(g(x),f(x)\b)\ \qquad \text{and}\\
\b(g(x),f(x)\b) &*\delimiterfactor=1200 \b(\frac{1}{(g\circ \ov{f})(x)},\ov{f}(x)\b) = (1,x) %
=\delimiterfactor=900 \ov{f}(x) * \b(g(x),f(x)\b).
\end{align*}
Hence, the inverse of a matrix~$L = \b(g(x),f(x)\b)$ under the matrix product in~$\cR$ is
\begin{equation} \label{eq:inv}
L^{-1} = \b(g(x),f(x)\b)^{-1} =\delimiterfactor=1200 \b(\frac{1}{(g\circ \ov{f})(x)},\ov{f}(x)\b),
\end{equation}
which proves the assertion of the theorem.
\end{proof}
Note that, by eq.~\eqref{eq:arraydnk}, the coefficients~$d_{n,k}$ of the matrix~$(1,x)$ write 
down~$d_{n,k} = c_n \l[x^n\r] x^k/c_k = c_n/c_k \l[x^{n-k}\r] 1 = \delta_{n,k}$, where~$\delta_{n,k}$ is the Kronecker delta (defined by~$\delta_{n,n} = 1$ and~$\delta_{n,k} = 0$ for~$n\ne k$).

\begin{rem} \label{rem:itmat}
A large number of infinite lower triangular arrays are Riordan arrays. Particularly, the \emph{iteration matrices} are in this case. To every power series~$f = \sum_{n\ge 0} f_n x^n/c_n$, we can associate the infinite lower iteration matrix (with respect to~$(c_n)$)~$B(f) := \b\{B_{n,k}\b\}$, 
where~$B_{n,k}^{(c_n)} = B_{n,k}\b(f_1, f_2,\ldots\b)$ is the Bell polynomial with respect to~$c_n$, which is defined as~$f(x)^k/c_k := \sum_{n\ge k} B_{n,k} x^n/c_n$ (see Comtet~\cite[p.~145]{Comtet74}). Therefore, 
$B_{n,k} = c_n\l[x^n\r]f(x)^k/c_k$, which implies that the iteration matrix~$B(f)$ is the Riordan array~$(1,f)$. Now, the next important property of the iteration matrix~\cite[p.~145, Thm.~A]{Comtet74},
\[ %\delimiterfactor=1000
B\l(f\b(g(x)\b)\r) = B\b(g(x)\b) * B\b(f(x)\b),\]
is trivial in the context of the theory of Riordan arrays for~$(1,f\circ g) = (1,g) * (1,f)$. The well-known Fa\`a di Bruno formula is a specialization of the summation rule in~eq.~\eqref{eq:arraydnk} (Def~\ref{defra}):
\[
\sum_{k=0}^n B_{n,k} \b(g_1, g_2, \ldots, g_{n-k+1}\b) f_k = n!\l[x^n\r]f\b(g(x)\b).\]
In addition, for any series~$f\in \C[[x]]$ such that~$f_1 = 1$,~$\l(1,\ov{f}\r) = B\l(\ov{f}\r)$ is also an iteration matrix. Hence, the set of iteration matrices with respect to~$(c_n)$ is a non-empty subset of the Riordan group~$\cR$ (with respect to~$(c_n)$); it is also closed under multiplication and inversion in~$\cR$ and, as such, it is a subgroup of~$\cR$.
\end{rem}
\begin{rem} In Shapiro \emph{et al.}~\cite{SGWW91}, the Riordan group is a unipotent group of infinite lower-triangular integer matrices that are defined by a pair of power series~$(g,f)$ such that~$g\in \cS$ (i.e, $g\in 1 + x\C[[x]]$) and~$f\in x\cS$ with~$f_1 = 1$. Such is a stronger definition than usually required for the Riordan group. However, in the present paper, we make only use of {\em proper} Riordan arrays whose diagonal elements equal~$1$. The series~$f = \sum_{n\ge 1} f_n x^n/c_n$ in~$x\cS$ will henceforth meet also the additional property that~$f_1 = 1$, which naturally suppose infinite lower-triangular integer matrices with diagonals~$d_{n,n} = 1$ (i.e. the unipotent Riordan group).
\end{rem}

\begin{ex} \label{ex:binmatstirling}
Let the matrix~$P$ of the Riordan group be defined by the pair~$\b(\frac{1}{1-x},\frac{x}{1-x}\b)$ of OGFs. The coefficients of~$P$ are~$d_{n,k} = \l[x^n\r]\frac{1}{1-x}\, \b(\frac{x}{1-x}\b)^k %
= \l[x^n\r]\frac{x^k}{(1-x)^{k+1}} = \bin{n}{k}$, and~$P$ is the well-known Pascal (or binomial) matrix. The inverse of~$P$ in~$\cR$ is $P^{-1} = \b(\frac{1}{1+x},\frac{x}{1+x}\b)$, with coefficients~$(-1)^{n-k} \bin{n}{k}$. In the Riordan group, for~$m$ integer, $P^m = \b(\frac{1}{1-mx},\frac{x}{1-mx}\b)$, the general term of which 
is~$m^{n-k} \bin{n}{k}$, and the inverse~$P^{-m}$ is given by~$\b(\frac{1}{1+mx},\frac{x}{1+mx}\b)$. The row OGFs of~$P$ are~$(x+1)^n$, and the row OGFs of~$P^{-1}$ are~$(x-1)^n$. In addition, $\b(P * (1,-x)\b)^2 = I$, where~$(1,-x)$ is the diagonal matrix with alternating~$1$'s and~$-1$'s on the diagonal; so, $P * (1,-x)$ is said to have ``order two'' and~$P$ to have ``pseudo-order two'', which also means that~$P$ is a pseudo-involution.\par
\no Similarly, let the Pascal matrix~$\tilde{P}$ in~$\cR$ defined by the exponential Riordan array~$\l(e^x,x\r)$. The coefficients of~$\tilde{P}$ are~$n!\l[x^n\r] e^x\, \frac{x^k}{k!} %
= \l[x^n\r]\sum_{n\ge k} \frac{n!}{k!} \frac{x^n}{(n-k)!} = \bin{n}{k}$. Moreover, $\tilde{P}^m$ is the element~$\l(e^{mx},x\r)$ of~$\cR$, and the inverse~$\tilde{P}^{-m}$ of~$\tilde{P}^m$ is~$\l(e^{-mx},x\r)$. The row EGFs of~$\tilde{P}$ are~$e^x x^n/n!$ and the row EGFs of~$\tilde{P}^{-1}$ are~$(-1)^n e^{-x} x^n/n!$. In addition, since each Riordan array~$\b(g(x),f(x)\b)$ has the bivariate OGF~$\frac{g(x)}{1 - yf(x)}$, $P$ has the bivariate 
OGF~$\frac{1}{1 - x(1+y)}$ and~$\tilde{P}$ has the bivariate EGF~$g(x)e^{yf(x)}$.

As another example, let us consider the exponential Riordan array~$\b(1,\ln(1+z)\b)$. The general term of the columns EGF is~$n!\l[z^n\r] \b(\ln(1+z)\b)^k/k! = \stirc{n}{k}$ (or also~$s(n,k)$): the usual Stirling number of the first kind. The row EGFs are~$\sum_{k=0}^n \stirc{n}{m} z^k = z^{\und{n}} = z(z-1)(z-2)\cdots (z-n+1)$ (the ``falling factorial'' polynomials) and form the sequence associated to~$e^z - 1$~\cite[p.~206 \& p.~212]{Comtet74}.\par
The inverse of the array~$\b(1,\ln(1+z)\b)$ is~$\l(1,e^z-1\r)$, whose general term of the columns EGF 
is~$n!\l[z^n\r] \l(e^z-1\r)^k/k! = \stirp{n}{k}$ (or also $S(n,k)$): the usual Stirling number of the second kind. The polynomials $\varpi_n(x) = \sum_{k=0}^n \stirp{n}{m} x^k$ are referred to as the exponential polynomials. The sequence~$\l\{\varpi_n(x)\r\}$ is associated to~$\ln(1+z)$.
\end{ex}

\section{Riordan subgroups} \label{riordansubgroups}
A number of basic subgroups of the Riordan group~$\cR$ can be found e.g. in~\cite{DaSW12}. Among others, the set of Riordan arrays~$\b\{(g(x),x)\b\}$, where~$g\in \cS$, is also a subgroup of~$\b(\cR,*\b)$. It is called the Appell subgroup of~$\cR$ and denoted by~$\cA$. Still assuming~$g\in \cS$ and~$f\in x\cS$, the \emph{Bell} subgroup of~$\cR$ is defined by~$\cB = \b\{(g,xg)\b\}$ and the \emph{Lagrange} (or \emph{associated}) subgroup of~$\cR$ is given by~$\cL = \b\{(1,f)\b\}$. Furthermore, for all~$(g,f)\in \cR$, Thm.~\ref{thmrg} implies
\begin{flalign*} 
\b(g(x),x\b) * \b(f(x),xf(x)\b) & =\delimiterfactor=1200 \b(\frac{xg(x)}{f(x)},x\b) * \b(\frac{f(x)}{x},f(x)\b) %
=\delimiterfactor=900  \b(g(x),f(x)\b)\\ 
\text{and}\ \quad & \b(g(x),x\b) * \b(1,f(x)\b) = \b(g(x),f(x)\b).
\end{flalign*}
Since~$\cA$ is normal, $\cR\simeq \cA \rtimes \cB$ and $\cR = \cA \rtimes \cL$: the Riordan group is a semi-direct product of the Appell normal subgroup and the Lagrange and Bell subgroups, respectively. 

In what follows, these two classic subgroups of~$\cR$, along with another subgroup which generalizes~$\cL$ 
and~$\cB$, that is the set of proper Riordan arrays of the form~$\l\{\b(g(x)^\rho,xg(x)\b)\r\}$ with~$\rho$ rational (and, by definition, $g\in \cS$) are under consideration. The latter group (first observed in~\cite{ChKS10} 
when~$\rho$ is a positive integer) is defined in next Def.~\ref{powerrhosubgp} and plays a key role in Sections.~\ref{groupsrlm} and~\ref{extlaw}. (See e.g., Def.~\ref{def2} in~\S\ref{stripedmat}.)

\begin{defi} \label{powerrhosubgp}
By Thm.~\ref{thmrg}, the set~$\l\{\b(g^\rho,xg\b)\r\}$ with $\rho\in \Q$ of proper Riordan arrays is a subgroup of~$\cR$ with respect to the Riordan product; it is denoted by~$\b(\cG(\rho),*\b)$. 
\end{defi}
More precisely, the identity in~$\cG(\rho)$ is~$I = (1,x)$. Now, let~$h(x)\in x\cS$ be the reverse series 
of~$xg(x)$, that is~$h(x) = \frac{x}{(g\circ h)(x)}$. Since~$g\in \cS$, any matrix~$L = \l(g^\rho,xg\r)$ has a unique inverse~$L^{-1} =\delimiterfactor=800 \l(\frac{1}{g^\rho\circ h},h\r)$ in~$\cG(\rho)$. When~$\rho$ is specialized to~$0$ and~$1$, respectively, the Lagrange and the Bell subgroups of~$\cR$ are obtained as subgroups 
of~$\b(\cG(\rho),*\b)$. 

\subsection{Riordan subgroups and~$\rHW_{\le 1}\!\setminus \! \rHW_0$} \label{subgroups}
From Section~\ref{onepargroup}, the real parameter~$\lambda$ is assumed by definition to be close to 0: 
$|\lambda| < 1$.\par 
First, let~$g_{\lambda}(x) = \frac{1}{(1-\lambda x)}$ and~$g_{m\lambda}(x) = \frac{1}{(1-m\lambda x)}$. 
Since~$\lambda\ne \pm 1$, eq.~\eqref{eq:powerm} below holds for any such~$\lambda$ and integer~$m$, by definition of the Riordan product~$*$ in Thm.~\ref{thmrg},
\begin{equation} \label{eq:powerm}
\delimiterfactor=1200 L_{\lambda}^{m} := \l(g_{\lambda}(x),xg_{\lambda}(x)\r)^m %
= \l(g_{m\lambda}(x),xg_{m\lambda}(x)\r) =: L_{m\lambda}.
\end{equation}
In eq.~\eqref{eq:powerm}, the identity~$I = (1,x)$ is obtained for~$m = 0$; if~$m = -1$, 
then~$L_{\lambda}^{-1} = \b(\ov{g}_\lambda,x\ov{g}_\lambda\b)$, where~$\ov{g}_\lambda$ is the reverse series 
of~$g_\lambda$ in~$\cS$, written $\ov{g}_\lambda(x) = \frac{1}{(1+\lambda x)}$ by Thm.~\ref{thmrg}. 
The Riordan product of matrices~$L_{\lambda}$, also meets the property~$L_{\lambda_1} * L_{\lambda_2} %
= L_{\lambda_1 + \lambda_2}$.

Now, consider the Riordan group associated to the operator~$q(x)\partx = x\vthx$. For~$|\lambda| < 1$, the two 
matrices~$L_{\lambda}$ and its inverse~$L_{\lambda}^{-1}$ are the generators of the Riordan subgroup~$\Hom$ of homographies. Since~$L_{\lambda}^m = L_{m\lambda}$ and~$L_{-\lambda}^m = L_{-m\lambda}$, $\b(\Hom,*\b)$ is also a group with two generators.\par
Next, consider the Riordan group associated to the operator~$q(x)\partx + v(x) = x\vthx + x$. 
By continuity of~$\frac{1}{(1\pm m\lambda x)}$ at~$\lambda = \pm 1$, $\Lim_{\lambda\to 1-}$ $\frac{1}{(1 - m\lambda x)}$ and~$\Lim_{\lambda\to -1+}$ $\frac{1}{(1 + m\lambda x)}$ have both a meaning for~$m$ integer. Hence, $(\Hom,*)$ is also defined in the limits for~$|\lambda| \le 1$.

For example, the Riordan Pascal matrix~$P = \l(\frac{1}{1-x},\frac{x}{1-x}\r)$ is well defined as a substitution with prefunction, and~$P,\, P^{-1}$ are the two generators of the Bell Riordan subgroup~$\cB$. So, for any integer~$m$, $P^m = \l(\frac{1}{1-mx},\frac{x}{1-mx}\r)$ (see Ex.~\ref{ex:binmatstirling}).

\subsection{Automorphic Riordan subgroups and Puiseux series} \label{autoriordan}
As defined in Appendix~\ref{app:fpsgf}, if~$\K$ is a field then the field of Puiseux series with coefficients in~$\K$ is defined informally as the field~$\puiseux{\K}{x}$ of formal Laurent series~$\laurent{\K}{x}$ of the form~$\sum_{n\ge k} a_n x^{n/N}$, where~$N\in \Z_{>0}$ and $k\in \Z$. In other words, $\puiseux{\K}{x}$ with coefficients in~$\K$ is~$\dis \bigcup_{N\in \Z_{>0}}\!\! \delimiterfactor=800 \K\l(\l(x^{1/N}\r)\r)$, where each element of the union is a field of formal Laurent series over~$x^{1/N}$ (considered as an indeterminate). In the following, we set~$\K = \C$ and~$\puiseux{\C}{x}$ denotes the field (and also the vector space over~$\C$) of Puiseux series with complex coefficients.

\begin{defi} \label{riordantranf}
Let~$\cM = \l\{\mu_\rho,\, \rho \in \Q\r\}$ be the subgroup of the 
group~$\delimiterfactor=1200 \rGL\l(\puiseux{\C}{x}\r)$ of linear transformations of~$\puiseux{\C}{x}$ such that for any~$U\in \puiseux{\C}{x}$, $\mu_\rho \b(U(x)\b) = x^\rho U(x)$ and, similarly, 
$\mu_\rho^{-1} \b(U(x)\b) = x^{-\rho} U(x)$. Further, $\cM$ is the group of transformations of the Riordan group~$\cR$, and so it is for the Riordan subgroups of~$\cR$ such as~$\cG(\rho)$. 
\end{defi}
The usual Riordan product~$*$ is thus extended from~$\cR \times \C[[x]]$ to the product~$\cdot$ 
for~$\cR \times \puiseux{\C}{x}$. More precisely, $\b(f(x),g(x)\b) \cdot U(x) = f(x) U\b(g(x)\b)$ for any matrix~$(f,g)$ in $\cR$ and any power series~$U\in \puiseux{\C}{x}$. Now, since~$U(x) = \Sum_{n\ge k} a_n x^{n/N}$, we have~$(f,g) \cdot U = f(x) \Sum_{n\ge k} a_n g(x)^{n/N}$, where~$k\in \Z$, $N\in \Z_{>0}$ and $a_n\in \C$.\par
By combining the actions of groups in Def.~\ref{riordantranf}, we can construct the diagram in Fig.~\ref{autom}. 
\begin{figure}[!ht]
\[
\xymatrix@!R=.2in{
U(x) \ar@{|->}[r]^-{\mu_\rho} \ar@{|->}[d]_-{\varphi} & x^\rho U(x) \ar@{|->}[d]^{\psi} \\
**[l]g(x)^\rho U\b(xg(x)\b) & 
**[r]x^\rho g(x)^\rho U\b(xg(x)\b) \ar@{|->}[l]^-{\mu_{\rho}^{-1} } }\]
\vskip -.2cm \caption{Automorphic Riordan subgroups \emph{via} their actions on~$\puiseux{\C}{x}$}
\label{fig:autom}
\end{figure}
In this diagram, for any~$L = \b(g(x)^\rho,xg(x)\b)\in \cG(\rho)$ ($\rho\in \Q$), $\varphi$ and~$\psi$ stand respectively for two left actions of~$\cG(\rho)$ on~$\puiseux{\C}{x}$: $\varphi$ is the 
action~$L\cdot U(x) = g(x)^\rho U\b(xg(x)\b)$ and~$\psi$ is the action~$L\cdot V(x) = x^\rho g(x)^\rho U\b(xg(x)\b)$ (where the product~$\cdot$ is the extension of the product~$*$ for~$\cR \times \puiseux{\C}{x}$). From the two actions of the groups~$\cG(\rho)$ on~$\puiseux{\C}{x}$ for all~$\rho\in \Q$, the subgroups of~$\cG(\rho)$ are determined by the outer automorphisms~$\rOut\b(\cG(\rho)\b) = \rAut\b(\cG(\rho)\b)\b/\rInn\b(\cG(\rho)\b)$ which have the 
form~$\mu_\rho^{-1}\,\circ \psi\,\circ \,\mu_{\rho} = \varphi$.

\begin{lem} \label{autom}
For any~$\rho_1,\, \rho_2 \in \Q$, there exists a unique $\mu_{\rho_1}$ such that the action on the 
set~$\puiseux{\C}{x}$ of~$\mu_{\rho_1}^{-1}\circ \b(g(x)^{\rho_2} , xg(x)\b)\circ \mu_{\rho_1}$ is equivalent to the action~$\b(g(x)^{\rho_1},xg(x)\b)\cdot U(x) = \b(g(x)^{\rho_1+\rho_2},xg(x)\b)$.
\end{lem}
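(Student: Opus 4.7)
The plan is to trace the effect of the conjugation directly on an arbitrary Puiseux series $U\in\puiseux{\C}{x}$ and match the output against the action of the Riordan matrix $(g^{\rho_1+\rho_2},xg)$ belonging to $\cG(\rho_1+\rho_2)$; the commutative square in Figure~\ref{fig:autom} is the guiding prototype for this bookkeeping.

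First I would apply $\mu_{\rho_1}$ to $U$, obtaining $x^{\rho_1}U(x)$. Next I would apply the extended Riordan action of $(g^{\rho_2},xg)$ given in Def.~\ref{riordantranf}: a Puiseux series $V(x)$ is sent to $g(x)^{\rho_2}V\bigl(xg(x)\bigr)$. Evaluating this on $V(x)=x^{\rho_1}U(x)$ yields $g(x)^{\rho_2}\bigl(xg(x)\bigr)^{\rho_1}U\bigl(xg(x)\bigr)$. Because $g\in\cS$ has non-zero constant term, the rational power $g^{r}$ for $r\in\Q$ is a well-defined element of $\cS$ via the usual formal $N$-th root construction, so $\bigl(xg(x)\bigr)^{\rho_1}$ factors as $x^{\rho_1}g(x)^{\rho_1}$ inside $\puiseux{\C}{x}$. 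Regrouping gives $x^{\rho_1}g(x)^{\rho_1+\rho_2}U\bigl(xg(x)\bigr)$. Finally, applying $\mu_{\rho_1}^{-1}$, which is simply multiplication by $x^{-\rho_1}$, cancels the leading factor $x^{\rho_1}$ and leaves $g(x)^{\rho_1+\rho_2}U\bigl(xg(x)\bigr)$, which is precisely the action of $(g^{\rho_1+\rho_2},xg)$ on $U$ in the sense of Def.~\ref{riordantranf}.

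For uniqueness I would observe that $\cM=\{\mu_\rho\}_{\rho\in\Q}$ is parametrized injectively by $\rho$: any two elements $\mu_{\rho_1},\mu'_{\rho_1}\in\cM$ which act identically on $\puiseux{\C}{x}$ must in particular agree on $U(x)=1$, and this already pins down the multiplicative factor $x^{\rho_1}$. Hence, once $\rho_1$ is fixed, there is exactly one $\mu_{\rho_1}\in\cM$ realizing the required conjugation.

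The main obstacle is arithmetic rather than combinatorial: one must check that every step is legitimate inside $\puiseux{\C}{x}$. Specifically, (i) $g^\rho$ must exist as a Puiseux series for every $\rho\in\Q$ (which is ensured by $g_0\neq 0$ since $g\in\cS$), (ii) the multiplicative identity $\bigl(xg(x)\bigr)^{\rho_1}=x^{\rho_1}g(x)^{\rho_1}$ must hold in $\puiseux{\C}{x}$, and (iii) the substitution $U(x)\mapsto U\bigl(xg(x)\bigr)$ must commute with multiplication by a fractional monomial $x^{\rho_1}$. Once these foundational facts about formal Puiseux arithmetic are pinned down, the rest of the computation is essentially forced and the lemma falls out immediately.
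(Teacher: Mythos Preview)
Your proposal is correct and follows essentially the same approach as the paper: both trace the conjugation $\mu_{\rho_1}^{-1}\circ(g^{\rho_2},xg)\circ\mu_{\rho_1}$ step by step on an arbitrary $U\in\puiseux{\C}{x}$, factor $(xg(x))^{\rho_1}=x^{\rho_1}g(x)^{\rho_1}$, and arrive at $g(x)^{\rho_1+\rho_2}U\bigl(xg(x)\bigr)$. Your write-up is in fact more careful than the paper's, which neither spells out the uniqueness argument nor flags the Puiseux-arithmetic checks you list as obstacles~(i)--(iii).
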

\begin{proof}
Let~$U(x)\in \puiseux{\C}{x}$. For any~$\mu_{\rho_1} \in \cM$, we can associate to~$U(x)$ the Puiseux 
series~$V(x) = x^{\rho_1}U(x)$. The left action~$\psi$, $\b(g(x)^{\rho_2},xg(x)\b)\cdot V(x)$, results in the Puiseux series~$W(x) = g(x)^{\rho_2} \b(xg(x)\b)^{\rho_1} U\b(xg(x)\b) %
= x^{\rho_1} g(x)^{\rho_1+\rho_2} U\b(xg(x)\b)$.\par
Now, $\mu_{\rho_1}^{-1}\b(W(x)\b) = g(x)^{\rho_1+\rho_2} U\b(xg(x)\b)$ is equal to the left action~$\varphi$: 
$\b(g(x)^{\rho_1},xg(x)\b)\cdot U(x) = g(x)^{\rho_1} U\b(xg(x)\b)$. This completes the proof.
\end{proof} 
To summarize, for all~$\rho_1,\, \rho_2\in \Q$ the subgroups~$\b(g(x)^{\rho_1+\rho_2},xg(x)\b)$ 
and~$\b(g(x)^{\rho_1},xg(x)\b)$ of~$\cG(\rho)$ are pairwise (outer) automorphic Riordan subgroups.

\begin{cor}
For any~$\rho \in \Q$ there exists a unique application~$\mu_\rho \in \cM$ such that the action on the 
set~$\puiseux{\C}{x}$ of~$\mu_{\rho}^{-1}\circ \b(g(x)^{\rho},xg(x)\b)\circ \mu_\rho$ is equivalent to the action 
of~$\b(g(x)^\rho,xg(x)\b)$  
\end{cor}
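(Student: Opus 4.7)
The corollary is a natural specialization of Lemma~\ref{autom}: the plan is to set $\rho_1 = \rho_2 = \rho$ (or, depending on one's reading, $\rho_1 = \rho$, $\rho_2 = 0$, with an appropriate rewriting), reuse the same line-by-line tracking of a Puiseux series through the composition, and then pin down what ``equivalent'' means for the two actions. Concretely, I would start from an arbitrary $U(x) \in \puiseux{\C}{x}$ and push it successively through $\mu_\rho$, through the Riordan transformation $\b(g(x)^\rho, xg(x)\b)$, and finally through $\mu_\rho^{-1}$, exactly as in the proof of the lemma.

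The key computational step is to form $V(x) := \mu_\rho(U(x)) = x^\rho U(x)$, then apply the Riordan matrix to obtain $W(x) = g(x)^\rho V\b(xg(x)\b) = g(x)^\rho \b(xg(x)\b)^\rho U\b(xg(x)\b)$, and finally strip off the leading power by $\mu_\rho^{-1}$. Using the identity $\b(xg(x)\b)^\rho = x^\rho g(x)^\rho$, the factor $x^\rho$ introduced by $\mu_\rho$ is cancelled by $\mu_\rho^{-1}$, so the composed operator acts on $U$ as a Riordan-type transformation that is intertwined with $\b(g(x)^\rho,xg(x)\b)$ by $\mu_\rho$. The word ``equivalent'' in the statement should then be read in the representation-theoretic sense: two linear operators related by $\mu_\rho$-conjugation yield equivalent representations of the underlying group, with $\mu_\rho$ playing the role of the intertwiner.

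For the uniqueness clause, I would invoke Definition~\ref{riordantranf}: the subgroup $\cM$ is bijectively parameterized by $\rho \in \Q$ via the rule $\mu_\rho(U(x)) = x^\rho U(x)$. Any $\mu \in \cM$ that implements the required conjugation of $\b(g(x)^\rho,xg(x)\b)$ must act on each monomial $x^n$ by multiplication by $x^{n+\rho'}$ for some $\rho'$; compatibility with the prescribed shift of exponents forces $\rho' = \rho$, and hence $\mu = \mu_\rho$.

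The main obstacle, as in the proof of Lemma~\ref{autom}, is keeping clearly separated the two left actions $\varphi$ and $\psi$ of $\cG(\rho)$ on $\puiseux{\C}{x}$ introduced around Figure~\ref{fig:autom}, since they differ by precisely the factor $x^\rho$ that $\mu_\rho$ supplies. Once this bookkeeping is in place, the proof reduces to the single algebraic identity $g(x)^\rho \b(xg(x)\b)^\rho = x^\rho g(x)^{2\rho}$ together with the observation that $\mu_\rho^{-1}$ absorbs the extra $x^\rho$, leaving exactly the intertwining relation that witnesses the claimed equivalence of actions.
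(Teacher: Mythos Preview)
Your approach---specialize Lemma~\ref{autom} and, if needed, rerun its line-by-line computation---is exactly the paper's strategy. The paper's entire proof is the single sentence ``Immediate from Lemma~\ref{autom} (by specifying~$\rho_1 = 0$ and~$\rho_2 = \rho$),'' so your more detailed redo of the Puiseux-series chase is faithful to, and more explicit than, the original.

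One discrepancy worth flagging: the paper's specialization is $\rho_1 = 0$, $\rho_2 = \rho$, not your primary choice $\rho_1 = \rho_2 = \rho$ (nor your alternative $\rho_1 = \rho$, $\rho_2 = 0$). With $\rho_1 = 0$ the conjugating element in Lemma~\ref{autom} is $\mu_{\rho_1} = \mu_0$, the identity of~$\cM$, whereas the corollary as written names $\mu_\rho$; your choice $\rho_1 = \rho_2 = \rho$ instead gives the conjugate action $\b(g(x)^{2\rho}, xg(x)\b)$, which you then have to interpret as ``equivalent'' via intertwining. This mismatch is not a flaw in your reasoning---it reflects a genuine looseness in how the corollary is phrased relative to the lemma---and your hedging (``or, depending on one's reading'') is appropriate. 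Substantively, both you and the paper reduce the corollary to a direct invocation of Lemma~\ref{autom}; the only difference is which pair $(\rho_1,\rho_2)$ is plugged in, and the paper does not justify its choice beyond the one-line citation.
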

\begin{proof}
Immediate from Lemma~\ref{autom} (by specifying~$\rho_1 = 0$ and~$\rho_2 = \rho$).
\end{proof}

\begin{rem}
Let~$U\in \cS$ and~$(g,f)$ be a Riordan matrix in~$\cR$. By Thm.~\ref{thmsumdnk}, $(g,f) * U = gU(f)$, and~$\cR$ is acting on~$\cS$ so that~$g(x)U\b(f(x)\b) = g(x)\sum_{n\ge 0} \l[x^n\r]U(x)\, f(x)^n$. 
Let~$\l\{\b(1,xg(x)\b)\r\}$ be a subgroup of~$\cL$; it is a pure substitution subgroup that transforms the series~$U(x)$ into the series~$U\b(xg(x)\b)$.\par 
In the same way as power series in~$\cS$ provide matrix representations of linear transformations of~$\C[[x]]$, 
there also exist representations of linear transformations of strict Puiseux series, 
in~$\rGL\b(\puiseux{\C}{x}\!\setminus \!\C[[x]]\b)$ .
\end{rem}

\section{Striped Riordan subgroups} \label{groupsrlm}
Again at this point, we make use of the subgroup~$\cG(\rho)$ of~$\cR$ defined in Def.~\ref{powerrhosubgp}, Section~\ref{riordansubgroups}. In the context of striped Riordan subgroups of~$\cG(\rho)$, we will extensively refer to power series (or GFs) restricted to the form~$g_n(x) = \frac{1}{\sqrt[\uproot{2} n]{1 - n\lambda x^n}}$ with integer~$n > 0$. The families of function~$g_n\in \cS$\ and $xg_n(x)\in x\cS$ are actually replacing the usual prefunctions~$g_\lambda$ with substitutions factor~$s_\lambda$ arising in the context of Bargmann--Fock representation in~\S\ref{bfr} (Section~\ref{diffops}), and of one-parameter groups in Section~\ref{onepargroup}.

\subsection{Striped Riordan matrices and subgroups of $\cR$} \label{stripedmat} %\bm{}
The next two definitions (both given in~\cite{ChKi13}) are at the basis of the notions required in the 
present Section~\ref{extlaw}.

\begin{defi} \label{def1}
Any Bell matrix~$\b(g(x),xg(x)\b)$ with $g\in \cS$ and $g_0 = 1$, such that there exists an integer~$\nu\ge 1$ 
with~$h(x) = \sum_{n} g_n x^{\nu n} = 1 + g_1 x^\nu + g_2 x^{2\nu} + g_3 x^{3\nu} +\cdots$ will be referred to as 
a \emph{$\nu$-striped Riordan matrix}. The number~$\nu$ represents the size of blocks made of~$\nu - 1$ consecutive zeros in each column of the matrix (whence the names ``$\nu$-striped'' or ``$(\nu-1)$-aerated''): in other words, the matrix coefficients meet the condition~$d_{n,k} = 0$ when~$n - k\not \equiv 0\pmod \nu$ for all~$k\le n$.
\end{defi}
The set of $\nu$-striped Riordan arrays is similar to~$\cB$, with stronger properties however: on each array diagonal, 
$d_{n,n} = 1$ and stripes of~$\nu - 1$ consecutive zeros may appear along the columns. In addition, this set is a unipotent subgroup of~$\cB$ under the Riordan product, since~$g_0 = 1$, $(1,x)$ is the identity and~$(g,xg)^{-1} = %
\b(\frac{1}{g\circ (\ov{xg})},\ov{xg}\b)$, is the unique inverse of~$(g,xg)$.

\begin{defi} \label{def2}
Any~$n$-striped Riordan array~$\b(g_n^\rho,xg_n\b)$ with~$\rho\in \Q$ such that the EGF~$g_n$ is of the 
form~$g_n(x) = \frac{1}{\sqrt[\uproot{2} n]{1 - n\lambda x^n}}$ for~$n$ positive integer will be referred to as 
a \emph{$\rho$-prefunction $n$-striped Riordan array}.\par
We let~$\cG(n,\rho)$ stand for the set of the~$\rho$-prefunction $n$-striped Riordan arrays. Further, according 
to Def.~\ref{powerrhosubgp}, $\b(\cG(n,\rho),*\b)$ denotes the particular subgroup~$\l\{\b(g_n^{\rho}(x),xg_n(x)\b)\r\}$ of~$\b(\cG(\rho),*\b)$ for any integer~$n > 0$ and rational~$\rho$.
\end{defi}

\begin{nota} \label{nota} In this section, $g_n\in \cS$ and $xg_n := s_n\in x\cS$, with~$m,\, n\in \Z$ and $n\ne 0$. The~$m$-th compositional power of~$g_n$ is denoted~$g_n^{(\circ m)} := g_n^{(m)}$, defined as
\[
g_n^{(m)} := g_n\circ \cdots \circ g_n\ \ \text{if}\ \ m\in \Z_{\ge 0}\ \quad \text{and}\ \quad %
g_n^{(-m)} := g_n^{(-1)}\circ \cdots \circ g_n^{(-1)}\ \ \text{if}\ \ m\in \Z_{\le 0}.\]
For simplicity of notation, we let~$g_{mn} := g_{n}^{(m)}$ and $s_{mn} := s_{n}^{(m)} = xg_{mn}$. 
Subsequently, by Def.~\ref{def1},
\[
g_{mn}(x) := \b(1 - mn\lambda x^n\b)^{-1/n}\quad \text{and} \quad s_{mn}(x) := x\b(1- mn\lambda x^n\b)^{-1/n}.\]
Similarly, all the current matrices in~$\cG(n,\rho)$ defined in Def.~\ref{def2}, $\l(g_n,s_n\r)$, 
$\b(g_{n}^{(m)},s_{n}^{(m)}\b) := \l(g_{mn},s_{mn}\r)$, $\delimiterfactor=700 \l(\b(g_{mn}^\rho\b),s_{mn}\r)$ 
($\rho$ integer or rational power), etc. will be respectively denoted by~$L_{n}$, $L_{n}^{(m)} := L_{mn}$, 
$L_{mn}^\rho := L_{mn,\rho}$, etc., according to the statements (e.g. $L_{mn,0} := \b(L_{n}^{(m)}\b)^{0}$ will stand 
for~$\b(1,s_{mn}\b)$).
\end{nota}
From Def.~\ref{powerrhosubgp}, the identity in~$\cG(n,\rho)$ is $I = (1,x)$ and the inverse of any 
matrix~$\b(g_n^\rho,s_n\b)$ is~$\delimiterfactor=800 \l(\frac{1}{g_n^\rho\circ \ov{s_n}},\ov{s_n}\r)$, where~$\ov{s_n}\in x\cS$ is the reverse series of~$s_n$. Furthermore, the group~$\l(\cG(n,\rho),*\r)$ ($n\in \Z_{>0}$ and~$\rho\in \Q$) is generated by the Riordan matrices
\begin{align} \label{lnrho} 
L_{n,\rho} & := \b(g_n^\rho(x), s_n(x)\b) ={} %
\l(\frac{1}{(1 - n\lambda x^n)^{\rho/n}} , \frac{x}{(1 - n\lambda x^n)^{1/n}}\r)\ \qquad \text{and}\nonumber\\ 
L_{n,\rho}^{-1} & := \l(\frac{1}{g_n^\rho(x)\circ \ov{s_n}(x)} , \ov{s_n}(x)\r) ={}  %
\l(\frac{1}{(1 + n\lambda x^n)^{\rho/n}} , \frac{x}{(1 + n\lambda x^n)^{1/n}}\r).
\end{align}
Extending eq.~\eqref{lnrho} to~$L_{mn,\rho}$ and~$L_{mn,\rho}^{-1}$, we can fix the values of $n\in \Z_{>0}$ 
and $\rho\in \Q$; then for all $m\in \Z$, the set of functions~$\l\{g_{mn}^\rho\r\}$, lay at the basis of the infinite group~$\cG(mn,\rho)$ for each one given~$m$.

\begin{rem} \label{remgn1}
The designation of~$\cG(n,1)$ as the ``prefunction $n$-striped Riordan subgroup'' of~$\cB$ simply follows the definition of an $n$-striped Riordan matrix, restricted to the usual prefunction~$g_n$ designed in Section~\ref{onepargroup}. Indeed, let~$\gamma\in \Q$ and~$p\in \N$, and consider for example the real linear combination of operators in~$\rHW_{\le 1}\!\setminus \!\rHW_0$ of the form
\[
\gamma\b(a^{+^{(k+1)}}a + pa^{+^k}\b) - (\gamma-1)\b(a^{+^{(k+1)}}a\b) %
= a^{+^{(k+1)}}a + p\gamma a^{+^k}.\]
Then~$p\gamma\in \Q$ and plays the role of the rational~$\rho$ in the subgroup~$\cG(\rho)$ (see Def.~\ref{powerrhosubgp})\par
In the cases of~$\rho = 0$ and~$\rho = 1$, the subgroup~$\cG(n,\rho)$ of~$\cG(\rho)$ reduces respectively to the subgroup~$\cG(n,0)$ of~$\cL$ and the subgroup~$\cG(n,1)$ of~$\cB$ defined in Def.~\ref{def1}. Moreover, 
the group~$\cG(n,1)$ generates the group which contains every~$\cG(n,m)$ with~$m\in \Z$ as subgroup. Notice also that, in the formal topology defined in Appendix~\ref{app:fpsgf}, $\Lim_{n\to 0+} g_n(x) = e^{\lambda}$ 
and~$\Lim_{n\to 0+} \ov{g}_n(x) = e^{-\lambda}$ for all~$|x| < 1/(n\lambda)^{1/n}$. Thus, both 
matrices~$\b(e^{\pm\lambda},e^{\pm\lambda}x\b) = e^{\pm\lambda} I$ are distinct from~$I$ (unless~$\lambda = 0$), but also belong to~$\cR$. Although $e^{\pm\lambda} I$ are not in~$\cG(n,\rho)$, they are the two generators of the subgroup~$\l\{e^{\pm\lambda} I\r\}$ of~$\cR$ for any real~$|\lambda| < 1$.
\end{rem} 

\subsection{The Lie bracket in Riordan subgroups of $\cG(n,\rho)$} \label{lbgroup} %\boldsymbol{} \bm{}
Under the same assumptions as in the above eq.~\eqref{lnrho}, the following generalization of the statements 
in~\S\ref{Liebprefun} to the group~$\cG(n,\rho)$ is now under consideration. Let~$k, \ell$ and~$r$, $s$ 
in~$\Z_{>0}$ represent the weights of the two differential operators polynomials~$x^{k}\vthx  + rx^{k}$\ 
and~$x^{\ell}\vthx + sx^\ell$. By Eq~\eqref{eq:glambda} in~\S\ref{Liebprefun}, the Lie Bracket of the above operators is
\begin{equation} \label{eq:lb}
\l[x^{k}\vthx + rx^{k} , x^{\ell}\vthx + sx^\ell\r] = (\ell-k) x^{\ell+k} \vthx + \b(s\ell - rk\b) x^n.
\end{equation}
Notice that the evaluation of the Lie bracket of the differential operators~$x^{k}\vthx \pm rx^{k}$ is discussed in Remark~\ref{rem:genliebra} and, as regards~$\cG(n,\rho)$, in Remark~\ref{rem:genliebragnrho} below. 

From notations~\ref{nota}, let~$m := \ell - k\in \Z$, $n := k + \ell > 0$, $g_{mn}(x)=\b(1-mn\lambda x^n\b)^{-1/n}\in \cS$ and~$s_{mn}(x) = x\b(1 - m n \lambda x^n\b)^{-1/n}\in x\cS$. By the Lie bracket in eq.~\eqref{eq:lb}, the prefunction~$\delimiterfactor=700 g_{mn}^\rho := \l(g_{n}^{(m)}\r)^\rho$ and the substitution function~$s_{mn}$\ are then
\begin{equation} \label{eq:gstheta} \delimiterfactor=1200
g_{m n}^\rho(x) ={} \frac{1}{\l(1 - m n \lambda x^n\r)^{\rho/n}}\ \qquad \text{and}\ \qquad %
s_{mn}(x) ={} \frac{x}{\l(1 - m n \lambda x^n\r)^{1/n}},
\end{equation}
where~$\rho = \theta/m\in \Q$ is expressed here in terms of the integer~$\theta := s\ell - rk$.

\begin{rem} \label{rempairLkl}
For all~$k,\,\ell\in \Z_{>0}$ and~$\rho\in \Q$, the subgroup~$\cG(mn,\rho)$ of~$\cR$ under the Riordan product~$*$ is generated by the two prefunction~$mn$-striped Riordan matrices~$L_{mn,\rho} ={} \b(g_{m n}^{\rho} , s_{m n}\b)$ 
and\ $\delimiterfactor=1200 L_{mn,\rho}^{-1} = \b(\frac{1}{g_{m n}^{\rho}\circ \ov{s_{m n}}} , \ov{s_{m n}}\b)$, 
where~$mn = (\ell - k)(k + \ell)\in \Z$ and~$\rho = \theta/m\in \Q$. The case when~$\rho = 1$ leads to the three subcases investigated in~\S\ref{subst2}. Similarly, when~$\ell > k$ and~$r = s$, $\theta = rm$ and the 
subgroup~$\cG(mn,\rho)$ reduces to a particular case which is treated in next Ex.~\ref{ex:r=s}.
\end{rem}
let~$\theta := s\ell - rk\in \Z$ and $mn = (\ell - k)(k + \ell)$ (under the current assumptions 
on~$k,\, \ell,\, r,\, s$). The present discussion about the expression of~$g_{m n}^\rho$ follows the same lines as in~\S\ref{Liebprefun}. More precisely, to draw conclusions as to~$g_{m n}^\rho$ in eq.~\eqref{eq:gstheta}, the problem at stake is to determine the sign of the rational~$\rho = \theta/m$ according to the respective values 
of integers~$k,\, \ell,\, r,\, s\ge 1$. The general form of each striped Riordan matrix in each case can then be deduced from each associated pair~$\b(g_{m n}^\rho , s_{mn}\b)$ in eq.~\eqref{eq:gstheta}.

First, without loss of generality, we can assume~$k\ne \ell$, otherwise ($m = 0$), $g_{m n}^\rho(x) = 1$ 
and~$s_{mn}(x)= x$ (see eq.~\eqref{eq:slambdaLie} in~\S\ref{subst2}), with the identity~$(1,x)$ as associated striped Riordan matrix. Next, the case of~$\ell > k$ ($m > 0$) gives rise to three subcases where, as the case may be, each one yields the expression of~$g_{m n}^\rho(x)$, where~$mn = (\ell - k)(k + \ell) > 0$. So, according to the sign of~$\rho$ only, 
\be
\item If~$s/r = k/\ell$, then~$\rho = \theta = 0$\ and~$g_{m n}^\rho(x) = 1$. As a consequence, 
\begin{equation} \label{eq:thetanul} 
\b(g_{m n}^\rho(x) , s_{mn}(x)\b) =  \l(1 , \frac{x}{(1 - mn\lambda x^n)^{1/n}}\r).
\end{equation}
The corresponding striped Riordan matrices take the general form~$L_{mn,0} = (1 , s_{mn})$.

\item If~$s/r > k/\ell$, then~$\rho > 0$\ and the prefunction simplifies to~$g_{m n}^{\rho}(x) %
= \frac{1}{\b(1 - mn\lambda x^n\b)^{\rho/n}}$. As a consequence,
\begin{equation} \label{eq:thetapos} 
\b(g_{m n}^\rho(x) ,s_{mn}(x)\b) ={} \l(\frac{1}{(1 - mn\lambda x^n)^{\rho/n}} , \frac{x}{(1 - mn\lambda x^n)^{1/n}}\r).
\end{equation}
The corresponding striped matrices take the general form~$L_{mn,\rho} = \b(g_{m n}^\rho , s_{mn}\b)$.

\item If~$s/r < k/\ell$, then~$\rho < 0$\ and the prefunction simplifies to~$g_{m n}^\rho(x) %
= \l(1 - m n \lambda x^n\r)^{|\rho|/n}$. As a consequence,
\begin{equation} \label{eq:thetaneg} 
\b(g_{m n}^{-|\rho|}(x) , s_{mn}(x)\b) ={} \l( (1 - mn\lambda x^n)^{|\rho|/n} , \frac{x}{(1 - mn\lambda x^n)^{1/n}} \r),
\end{equation}
and the associated striped Riordan matrices have the general form~$L_{mn , -|\rho|} = \b(g_{m n}^{-|\rho|},s_{mn}\b)$. %In this third case, $|\rho| = -|\theta|/m$ ($\theta =  s\ell - rk < 0$).
\ee
Finally, the symmetric case of~$\ell < k$ determines quite a similar behaviour of~$\rho$, except (up to signs) for the respective expressions of~$g_{m n}^{\;\rho}$ in eqns.~\eqref{eq:thetanul},\eqref{eq:thetapos},\eqref{eq:thetaneg}. More precisely, if $\ell < k$, then~$m = \ell - k < 0$, whence the respective forms of~$g_{m n}^\rho$:
\begin{flalign} \label{eq:glkthetapos}
g_{m n}^\rho(x) &={} 1\ \quad \text{if}\ \rho = 0,\nonumber\\
g_{m n}^\rho(x) &={} \frac{1}{\b(1 + |m| n \lambda x^n\b)^{\rho/n}}\ \;\qquad \tif\ \rho > 0 \qquad \tand\\
g_{m n}^\rho(x) &={} \l(1 + |m| n \lambda x^n\r)^{|\rho|/n}\ \qquad \tif\ \rho < 0.\nonumber
\end{flalign}
Therefore, the three above expressions of~$g_{m n}^{\;\rho}$ in~\eqref{eq:glkthetapos} produce three classes of striped Riordan matrices of general forms~$\b(1,\ov{s_{mn}}\b)$, $\b(g_{m n}^{\rho} , \ov{s_{mn}}\b)$
and~$\b(1/g_{m n}^{|\rho|} , \ov{s_{mn}}\b)$. Altogether, the prefunction admits the general expression
\[
g_{m n}^\rho(x) ={} \frac{1}{\b(1 \pm |m| n \lambda x^n\b)^{\pm |\rho|/n}}.\]

\begin{ex} \label{ex:r=s}
Again, if~$r = s$, then~$\theta = rm$ and~$\rho = r\in \Z_{>0}$. Each one of the solutions simplifies to the following statements.
\bi
\item If~$k = \ell$, $m = 0$ and then~$\b(g_{m n}^r(x) , s_{mn}(x)\b) = (1,x)$, i.e. the identity~$I\in \cG(n,r)$.

\item If~$k < \ell$, then~$\b(g_{m n}^r(x) , s_{mn}(x)\b) = \delimiterfactor=1500 
\b(\tfrac{1}{(1 - m n \lambda x^n)^{r/n}} , \tfrac{x}{(1 - m n \lambda x^n)^{1/n}}\b)$, which is an element of the form~$L_{mn,\rho} = \b(g_{m n}^{r} , s_{m n}\b)$ in~$\cG(mn,r)$\ $(m>0)$.

\item If~$k > \ell$, then~$\b(g_{mn}^{\rho}(x) , s_{mn}(x)\b) = \delimiterfactor=1500
\b(\tfrac{1}{(1 + |m| n \lambda x^n)^{r/n}} , \tfrac{x}{(1 + |m| n \lambda x^n)^{1/n}}\b)$, which is an element of the form~$L_{mn,r}^{-1} = \b(g_{-|m|n}^{\,r} , s_{-|m|n}\b)$ in~$\cG(mn,r)$\ $(m<0)$.
\ei
One of the simplest applications is when~$r = 1$ and~$m = \ell - k = 1$, which correspond to the Pascal 
matrix~$P = \b(\frac{1}{1-x},\frac{x}{1-x}\b)$ and its inverse~$P^{-1} = \b(\frac{1}{1+x},\frac{x}{1+x}\b)$ (see Ex.~\ref{ex:binmatstirling}).
\end{ex}
To summarize, whatever~$\ell,\, k\in \Z$ and~$\rho\in \Q$, the discussion leads to the two matrices~$L_{mn,\rho}$\ and\ $L_{mn,\rho}^{-1}$ (where~$m := \ell - k$, $n := k + \ell$\ and~$\rho := \theta/m$), which are the two generators 
of~$\cG(mn,\rho)$.

\begin{rem} \label{rem:genliebragnrho}
Turning back to Remark~\ref{rem:genliebra} of~\S\ref{Liebprefun} regarding differential operators of the 
form~$x^k\vthx - rx^k$ or~$x^\ell\vthx - sx^\ell$, we obtain a general expression of the prefunction~$g_{m n}^\rho(x)$ as follows (still on the assumption that~$k\ne \ell$, otherwise~$m = 0$\ so~$g_{m n}^\rho(x) = 1$ for any~$\rho\in \Q$).
\be
\item[$(i)$] In the case when the scalar part of the Lie bracket writes~$\pm |\theta| x^n$, 
where again~$\theta := s\ell - rk$ and~$\rho = \theta/m$, the expressions of the prefunction~$g_{m n}^\rho$ are similar to the above cases in eqs.~\eqref{eq:thetanul},\eqref{eq:thetapos},\eqref{eq:thetaneg},\eqref{eq:glkthetapos}, according to the sign of~$\rho m$.

\item[$(ii)$] In the case when the scalar part of the Lie bracket is~$-(rk + s\ell) < 0$, the sign 
of~$\tilde{\rho} := -\frac{rk+s\ell}{m}$ depends only on whether~$m := \ell - k > 0$ or not, which implies the expression of the prefunction,
\[
g_{m n}^\rho(x) ={} 
\begin{dcases*} 
\l(1 - m n \lambda x^{n}\r)^{-|\tilde{\rho}|/n} & \tif\ \ $k < \ell$\ \qquad \text{or}\\ 
\l(1 + |m| n \lambda x^{n}\r)^{-\tilde{\rho}/n} & \tif\ \ $k > \ell$.
\end{dcases*}\] %\delimiterfactor=1500
\ee
\end{rem}

\section{Striped quasigroup and semigroup operations} \label{extlaw}
The subgroup~$\cG(n,\rho)$ of~$\cG(\rho)$ is introduced in Def.~\ref{def2} and eq.~\eqref{lnrho} in the latter Section~\ref{groupsrlm}. We consider now the set~$\cG(n,\rho;\mu)$ with~$\mu\in \Q$ of all 
pairs~$L_{\mu n,\rho} := L_{n,\rho}^{(\mu)} = \l(g_{\mu n}^\rho , xg_{\mu n}\r)$, where~$g_{\mu n}\in \cS$ is of the 
form~$\b(1 - \mu n \lambda x^n\b)^{-1/n}$. $\cG(n,\rho;\mu)$ is constructed from the group~$\b(\cG(n,\rho),*\b)$ and endowed with a new binary operation defined from the results obtained in~\S\ref{lbgroup}.

\begin{rem} \label{}
Let~$n\in \Z_{>0}$. For~$\mu\in \R$, any word of the form~$\omega = \mu a^{+^{(n+1)}} a$ 
in~$\rHW_{\le 1}\!\setminus \!\rHW_0$ spans a vector space over~$\R$. For~$\mu$ restricted to~$\Q$, $\omega$ spans a $\Q$-submodule of~$\rHW_{\le 1}\!\setminus \!\rHW_0$.\par
Observe that, by taking~$\mu \in \Z$ (denoted~$m$ in that case: see Notations~\ref{nota}), the substitution function~$s_n(x)\in x\cS$ corresponding to~$\omega$ verifies by definition, with~$s_{mn} := s_{n}^{(m)}$,
\[
s_{mn} := s_n\circ \cdots \circ s_n\ \ \text{if}\ m\in \Z_{\ge 0}\ \quad \text{and}\ \quad %
s_{-mn} := s_n^{(-1)}\circ \cdots \circ s_n^{(-1)}\ \ \text{if}\ m\in \Z_{\le 0}.\]
That is, for all~$m\in \Z$, $s_{\pm mn}(x)=  x\b(1 \pm m n\lambda x^n\b)^{-1/n}$, as the case may be. 
Since~$s_{mn}\circ s_{-mn} = x$, each~$s_{mn}$ and~$s_{-mn}$ is the reverse series of the other in~$x\cS$. Moreover, this enables to interpolate between~$g_{mn}\in \cS$ and~$s_{mn}\in x\cS$ for any~$m \in \Z$. From the Lie group structure of the interpolation unipotent subgroup of~$\b(\cG(n,\rho),\circ\b)$, $m$ can be taken in $\Q$, $\R$ 
or~$\C$ (see Bacher in~\cite{Bacher06}). So, in order to avoid any confusion, the parameter~$\mu\in \Q$ will henceforth replace the parameter~$m$ previously used in Section~\ref{groupsrlm}.
\end{rem}

\subsection{The striped quasigroup for the operation $\ov{*}$} \label{quasigrp} %\boldsymbol{} \bm{}
\begin{defi} \label{def:squasigrp}
Let~$\mu\in \Q$ denote the~$\mu$-th compositional power of the matrices which belong to the Riordan 
subgroup~$\b(\cG(n,\rho),*\b)$ for any~$n\in \Z_{>0}$ and $\rho\in\Q$. The set under consideration is
\[ \delimiterfactor=1200
\cG(n,\rho;\mu) = \l\{L_{\mu n,\rho} := L_{n,\rho}^{(\mu)}\, \mid \,L_{n,\rho}\in \cG(n,\rho),\ n\in \Z_{>0}\ %
\text{and}\ \mu, \rho\in \Q \r\},\]
equipped with the binary operation~$\ov{*}$. In accordance with the discussion on the prefunction in~\S\ref{lbgroup},
$g_{\mu n}^\rho (x) = \b(1 - \mu n \lambda x^n\b)^{-\rho/n}$ supplied by the Lie bracket related to the 
group~$\cG(n,\rho)$, the product~$\ov{*}$ is defined as follows.\par
Let~$k, \ell\in \Z_{>0}$, $r, s$ and~$\sigma, \tau$ in $\Q$. 
Then, for any pair of matrices~$\b(L_{k,r} , L_{\ell,s}\b)\in \cG(k,r)\times \cG(\ell,s)$ (see eq.~\eqref{lnrho}, 
\S\ref{groupsrlm} and~\S\ref{lbgroup}),  
\[
L_{\sigma k,r} \;\ov{*}\; L_{\tau \ell,s} = L_{\mu n,\rho}\in \cG(n,\rho;\mu),\] 
where~$n = k + \ell\in \Z_{>0}$, and~$\rho = \varphi(r,s) = (s\ell - rk)/m$\ and~$\mu = \sigma \tau m$\ 
(with~$m = \ell - k\in \Z$) are in~$\Q$.
\end{defi}

\begin{rem} 
Given any pair~$(n,\rho)\in \Z_{>0}\times \Q$, the product~$\ov{*}$ operates on the two generators~$L_{n,\rho}$ 
and~$L_{n,\rho}^{-1}$ of~$\cG(n,\rho)$ as in Def.~\ref{def:squasigrp}. For any~$\mu\in \Q$, 
$L_{-\mu n,\rho} := L_{n,\rho}^{(-\mu)}$ is the inverse of~$L_{\mu n,\rho}:= L_{n,\rho}^{(\mu)}$ and thus, 
$L_{\mu n,\rho}\; \ov{*}\; L_{-\mu n,\rho} = (1,x)$, the identity in~$\cG(n,\rho)$. Note that, in the particular case when~$r = s$\ ($\varphi(r,r) = r$), one can find again in Def.~\ref{def:squasigrp} the rational value~$\rho = \theta/m$ in~\S\ref{lbgroup} with~$\theta := s\ell - rk\in \Q$.
\end{rem}
The next Def.~\ref{def:weakasso} of \emph{weak (or non strict) associativity} appears primarily important for characterizing the structure of the set~$\cG(n,\rho;\mu)$ and the operations on its elements in this section.

\begin{defi} \label{def:weakasso}
The operation~$\ov{*}$ will be said {\em weakly associative} iff it is not associative, {\em unless} it is closed under belonging to the conjugacy classes in~$\cG(n,\rho;\mu)$.\par
Namely, let~$j,\, k,\, \ell\in \Z_{>0}$ and~$r,\, s,\, t\in \Q$. Given~$\mu_1, \mu_2, \mu_3\in \Q$, the 
operation~$\ov{*}$ is said {\em weekly associative} iff there exist at least three matrices~$L_{j,r} \in \cG(j,r)$, 
$L_{k,s} \in \cG(k,s)$ and $L_{\ell,t} \in \cG(\ell,t)$ meeting the three conditions
\begin{flalign*}
&(i)\ L_{\mu_1j,r}\; \bar{*}\; \b(L_{\mu_2k,s} \;\bar{*}\; L_{\mu_3\ell,t}\b) \ne %
\b(L_{\mu_1j,r} \;\bar{*}\; L_{\mu_2k,s}\b) \; \bar{*}\; L_{\mu_3\ell,t}; &\\
&\delimiterfactor=1100 %
(ii)\, L_{j,r}^{(\mu_1)} \,\bar{*}\, \b(L_{k,s}^{(\mu_2)} \bar{*} L_{\ell,t}^{(\mu_3)}\b) \in %
\cG_{j+k+\ell,\varphi(r,\varphi(s,t))}\ \,\text{and}\, \delimiterfactor=1100 %
\b(L_{j,r}^{(\mu_1)} \bar{*} L_{k,s}^{(\mu_2)}\b) \,\bar{*}\, L_{\ell,t}^{(\mu_3)} \in %
\cG_{j+k+\ell,\varphi(\varphi(r,s),t)}; &\\
&(iii)\ \varphi\b(r,\varphi(s,t)\b) \ne \varphi\b(\varphi(r,s),t\b). &
\end{flalign*}
\end{defi}
By contrast, if condition~$(iii)$ is not fulfilled whereas~$(i)$ and~$(ii)$ still hold, there exist~$r,\, s,\, t\in \Q$ such that~$\varphi\b(r,\varphi(s,t)\b) = \varphi\b(\varphi(r,s),t\b)$ (e.g., if~$r = s = t$ 
then~$\varphi\b(s,\varphi(s,s)\b) = \varphi\b(\varphi(s,s),s\b) = s$). Hence, the product~$\ov{*}$ is associative by closure within one unique reduced conjugacy class in the set~$\cG(n,\rho;\mu)$, whenever~$n = k + \ell + m$ and, say 
$\rho = \varphi\b(r,\varphi(s,t)\b)$, for instance.

In other words, although the operation~$\ov{*}$ is never associative (in the strict sense), a weak form of associativity holds however in~$\cG(n,\rho;\mu)$, according to the cases. On the one hand (in the general case), the product~$\ov{*}$ makes the set of all the conjugacy classes in the Riordan subgroup~$\cG(n,\rho)$ closed under weak associativity. On the other hand, whenever associativity is proved valid within at least one conjugacy class in the group~$\cG(n,\rho)$ for every pair~$(n,\rho)\in \Z_{>0}\times \Q$ (e.g., for say~$\rho = \varphi\b(r,\varphi(s,t)\b)$, one unique conjugacy class is closed under the product~$\ov{*}$).

\begin{theorem} \label{quasigrpstruct}
With respect to the operation~$\ov{*}$, $\cG(n,\rho;\mu)$ has a \emph{quasigroup} structure.
\end{theorem}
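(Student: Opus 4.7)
My plan is to verify the two defining axioms of a quasigroup for the magma $(\cG(n,\rho;\mu), \ov{*})$: closure under $\ov{*}$, and the Latin-square property (unique left and right division). The distinction between quasigroup and group must be supported by non-associativity, which is precisely what Def.~\ref{def:weakasso} prepares: the operation $\ov{*}$ is only weakly associative, because for $\varphi(r,s) = (s\ell - rk)/(\ell - k)$ one has $\varphi(r,\varphi(s,t)) \neq \varphi(\varphi(r,s),t)$ in general. Thus the structure cannot be promoted to a group, while the weaker Latin-square property can still hold.

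Closure is immediate from Def.~\ref{def:squasigrp}. Given any two matrices $L_{\sigma k, r}$ and $L_{\tau\ell, s}$ in the family, the product $L_{\sigma k, r} \,\ov{*}\, L_{\tau\ell, s}$ is defined to be $L_{\mu n, \rho}\in \cG(n,\rho;\mu)$, where $n = k + \ell \in \Z_{>0}$, $\rho = \varphi(r,s) \in \Q$, and $\mu = \sigma\tau(\ell - k) \in \Q$. So the output always lies in the ambient union $\bigcup_{n,\rho,\mu}\cG(n,\rho;\mu)$.

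For left divisibility, I would fix $a = L_{\sigma k, r}$ and a target $b = L_{\mu n, \rho}$, and solve $a \,\ov{*}\, x = b$ for $x = L_{\tau\ell, s}$. Matching the three tuples of parameters via Def.~\ref{def:squasigrp} forces $\ell = n - k$ (hence $m := \ell - k = n - 2k$), then $s = (\rho m + rk)/\ell$, and finally $\tau = \mu/(\sigma m)$. These formulas determine $x$ uniquely in the generic regime where the denominators $\ell$, $m$, $\sigma$ do not vanish. The right-divisibility equation $y \,\ov{*}\, a = b$ is solved symmetrically, by inverting $\sigma$ and $r$ instead. As a useful by-product, $L_{-\mu n, \rho}$ emerges as a natural two-sided divisor of $L_{\mu n, \rho}$ against the identity $(1,x)$, in accordance with the remark in \S\ref{quasigrp}.

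The main obstacle is the handling of the degenerate parameter locus where the explicit formulas develop zero denominators, namely $\ell = 0$, $m = 0$ (i.e.\ $k = \ell$), or $\sigma = 0$. The convention $n \in \Z_{>0}$ rules out $\ell = 0$ automatically. The case $k = \ell$ yields $\mu = \sigma\tau m = 0$ and collapses $\varphi$ to a $0/0$ indeterminacy, but the simultaneous collapse of $g_{\mu n}$ to the constant $1$ makes the product reduce cleanly to $(1,x)$, the identity of $\cG(n,\rho)$; the case $\sigma = 0$ is handled in parallel. Completing the proof therefore requires verifying that, within each degenerate subregime, the Latin-square equations remain uniquely solvable, if necessary by partitioning $\cG(n,\rho;\mu)$ into the conjugacy classes mentioned in Def.~\ref{def:weakasso} and closing each equation inside the appropriate class.
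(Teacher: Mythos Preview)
Your approach differs substantially from the paper's. The paper's argument is very brief: it records closure of~$\cG(n,\rho;\mu)$ under~$\ov{*}$, then observes that there is no identity element (except when $\rho=\mu=0$), that inverses do not exist for every $L_{\mu n,\rho}$, and that~$\ov{*}$ is non-associative by appeal to Def.~\ref{def:squasigrp} and Def.~\ref{def:weakasso}. From these facts alone it concludes that the structure is a quasigroup. In particular, the paper never attempts to verify the Latin-square (unique left/right divisibility) property; it effectively treats ``quasigroup'' as ``closed non-associative magma lacking identity and inverses.''

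Your proposal is closer to the textbook definition and is more ambitious, but precisely for that reason it runs into a genuine obstruction that the paper's argument sidesteps. Beyond the degenerate loci you already flag ($m=0$, $\sigma=0$), the left-division equation $a\,\ov{*}\,x=b$ with $a=L_{\sigma k,r}$ and $b=L_{\mu n,\rho}$ forces $\ell=n-k$, and this must lie in~$\Z_{>0}$; whenever $n\le k$ there is simply no admissible~$x$, so unique solvability fails outright. Likewise, when $n=2k$ one has $m=0$ and the product can only reach targets with $\mu=0$. Thus the Latin-square property, read strictly over the whole family, does not hold, and your programme cannot be completed as stated. If you wish to match the paper, it suffices to argue closure and invoke Def.~\ref{def:weakasso} for non-associativity; if you wish to salvage a genuine divisibility statement, you would have to restrict to a suitable sub-locus of parameters where the inversion formulas $\ell=n-k$, $\tau=\mu/(\sigma m)$, $s=(\rho m+rk)/\ell$ are all well-defined.
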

\begin{proof}
The set~$\cG(n,\rho;\mu)$\ in Def.~\ref{def:squasigrp} is closed under the binary operation~$\ov{*}$. However, 
$\cG(n,\rho;\mu)$\ has no identity element (except for~$\rho = \mu = 0$) and neither exists an inverse for every element~$L_{\mu n,\rho}$\ indexed by~$n\in \Z_{>0}$ and $\rho,\, \mu\in \Q$. Therefore, 
by Def.~\ref{def:squasigrp} and Def.~\ref{def:weakasso}, the product~$\ov{*}$ is not associative. 
The latter properties make~$\b(\cG(n,\rho;\mu),\ov{*}\b)$ into a \emph{quasigroup}\footnote{In 1914, \'Elie Cartan first proved that the smallest exceptional Lie groups~$s_2$ is the automorphism group of the {\em Octonions}, which examplifies the  structure of quasigroup. The quasigroups have been investigated further, e.g. in~\cite{Adams96}, and then explored by Baez, Convey or Smith and Smith~\cite{Smith07}, among others.}.
\end{proof}
Moreover, the following Lemma~\ref{anticomm} shows that the product~$\ov{*}$ is anticommutative in the case 
when~$\rho = 0$ and~$\mu\ne 0$.

\begin{lem} \label{anticomm}
The quasigroup~$\cG(n,\rho;\mu)$ is endowed with a non-commutative operation~$\ov{*}$ for all~$n\in \Z_{>0}$, $\mu$ 
and~$\rho\in \Q$. Moreover, when~$\rho = 0$ and~$\mu\ne 0$, the operation is also anticommutative.
\end{lem}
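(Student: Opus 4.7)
The strategy is to unfold Definition~\ref{def:squasigrp} for an ordered pair and for its transpose, compare the outputs, and then specialise to $\rho=0$ where the target lives in the Lagrange subgroup. Fix $k,\ell\in\Z_{>0}$ and $r,s,\sigma,\tau\in\Q$; set $n=k+\ell$ and $m=\ell-k$. Applying the definition directly,
\[
L_{\sigma k,r}\,\ov{*}\,L_{\tau\ell,s} \;=\; L_{\mu n,\rho},\qquad \mu=\sigma\tau m,\ \ \rho=\varphi(r,s)=\frac{s\ell-rk}{m}.
\]
For the swapped pair the first factor now carries weight $\ell$ and the second weight $k$, so the parameter $m$ in the defining formula becomes $m'=k-\ell=-m$. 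A direct substitution gives $\rho'=\varphi(s,r)=(rk-s\ell)/(-m)=\rho$ and $\mu'=\tau\sigma m'=-\mu$, whence
\[
L_{\tau\ell,s}\,\ov{*}\,L_{\sigma k,r} \;=\; L_{-\mu n,\rho}.
\]
Thus swapping flips the sign of the compositional exponent while preserving $n$ and $\rho$. Choosing any data with $\sigma\tau(\ell-k)\neq 0$ yields $L_{\mu n,\rho}\neq L_{-\mu n,\rho}$ (since, for $\lambda\neq 0$, the equation $g_{2\mu n}=1$ forces $\mu=0$), establishing that $\ov{*}$ is non-commutative on $\cG(n,\rho;\mu)$ for every admissible $(n,\rho,\mu)$.

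To prove anticommutativity in the restricted regime $\rho=0$, $\mu\neq 0$, one imposes $s\ell=rk$ so that both products land in $\cG(n,0)\subset\cL$, whose elements have the pure-substitution form $(1,s_{\nu n})$. The claim $L_{\sigma k,r}\,\ov{*}\,L_{\tau\ell,s}=\b(L_{\tau\ell,s}\,\ov{*}\,L_{\sigma k,r}\b)^{-1}$ then reduces, thanks to the computation above, to $L_{\mu n,0}=L_{-\mu n,0}^{-1}$ in $\cG(n,0)$. By Thm.~\ref{thmrg} the inverse of $(1,s_{\mu n})$ in $\cR$ is $\b(1,\ov{s_{\mu n}}\b)$, so it suffices to verify $\ov{s_{\mu n}}=s_{-\mu n}$, i.e.\ $s_{\mu n}\circ s_{-\mu n}=x$. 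Writing $s_{\pm\mu n}(x)=x(1\mp\mu n\lambda x^n)^{-1/n}$ and setting $y=s_{-\mu n}(x)$, one finds
\[
1-\mu n\lambda y^n = 1-\frac{\mu n\lambda x^n}{1+\mu n\lambda x^n}=\frac{1}{1+\mu n\lambda x^n},
\]
so $s_{\mu n}(y)=y\,(1+\mu n\lambda x^n)^{1/n}=x$ as required. Combining this with the first paragraph yields $L_{\sigma k,r}\,\ov{*}\,L_{\tau\ell,s}=\b(L_{\tau\ell,s}\,\ov{*}\,L_{\sigma k,r}\b)^{-1}$, with both sides genuinely distinct when $\mu\neq 0$; this is the announced anticommutativity.

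The main obstacle is really bookkeeping: the definition of $\ov{*}$ assigns fixed syntactic roles to its two arguments, so one must carefully track how $k,\ell,r,s,\sigma,\tau$ and in particular $m=\ell-k$ transform under the swap, and then check that $\varphi(\cdot,\cdot)$ is genuinely symmetric in its two arguments (via the joint sign flip of numerator and denominator) while $\mu$ is antisymmetric. Once the symbolic part is clean, the only analytic input is the reverse-series identity for $s_n(x)=x(1-n\lambda x^n)^{-1/n}$, which is the geometric reason why anticommutativity survives precisely in the Lagrange stratum $\rho=0$.
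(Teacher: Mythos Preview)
Your proof is correct and follows essentially the same route as the paper: compute both $L_{\sigma k,r}\,\ov{*}\,L_{\tau\ell,s}$ and the swapped product via Definition~\ref{def:squasigrp}, observe that the swap preserves $(n,\rho)$ while negating $\mu$, and then identify $L_{-\mu n,0}=(1,s_{\mu n})^{-1}$ in the Lagrange subgroup. The paper's own argument is much terser (it simply asserts the displayed identity without verifying $\ov{s_{\mu n}}=s_{-\mu n}$), so your explicit bookkeeping of how $m,\rho,\mu$ transform under the swap and your direct check of the reverse-series identity actually fill in details the paper leaves implicit.
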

\begin{proof}
The operation~$\ov{*}$ is obviously non-commutative for~$\cG(n,\rho;\mu)$, whatever the values~$n\in \Z_{>0}$ 
and~$\mu,\, \rho\in \Q$. Now, let~$\rho = 0$ and~$\mu\ne 0$. By Def.~\ref{def:squasigrp}, for any two matrices~$L_{k,r}\in \cG(k,r)$ and~$L_{\ell,s}\in \cG(\ell,s)$,
\[
L_{\sigma k,r} \;\ov{*}\; L_{\tau \ell,s} = \l(1,xg_{\mu n}\r) \ne %
L_{\tau \ell,s} \;\ov{*}\; L_{\sigma k,r} = \l(1,xg_{\mu n}\r)^{-1},\] 
where~$g_{\mu n} = \l(1 - \mu \lambda x^n\r)^{-1/n}$, for all~$n = k + \ell\in \Z_{>0}$ 
and~$\mu = \sigma \tau(\ell - k)\in \Q^*$. Hence, if~$\rho = 0$ the operation~$\ov{*}$ is anticommutative for the quasigroup~$\cG(n,0;\mu)\!\setminus \!\b\{(1,x)\b\}$, and the lemma follows.
\end{proof}

\begin{rem} \label{rem:inter}
The quasigroup~$\cG(n,0;\mu)$ is a ``prefunction~$n$-striped subset'' of the Lagrange Riordan subgroup 
$\cL = \b\{(1,xg_{\mu n})\b\}$ ($g_{\mu n}\in \cS$, $\mu\ne 0$) under operation~~$\ov{*}$.

If~$\mu = 0$ or~$\mu = 1$, then for any~$n\in \Z_{>0}$ and $\rho\in\Q$ the quasigroup~$\cG(n,\rho;\mu)$ reduces, respectively, either to the group~$\{(1,x)\}$, or to the group~$\cG(n,\rho; 1)\simeq \cG(n,\rho)$ under the 
product~$\ov{*}$. Moreover, if~$\mu = 0$, we have~$\dis \bigcap_{n\in \Z_{>0}} \cG(n,\rho;0) = \b\{(1,x)\b\}$, since~$(1,x)$ is the unique element that belongs to all sets~$\cG(n,\rho)$\ for any~$n\in \Z_{>0}$\ and $\rho\in\Q$.
\end{rem}

\subsection{The striped semigroup for the operation $\cirast$} \label{semigrp}
\begin{defi} \label{def:ssemigrp}
Under the same assumptions as in Def.~\ref{def:squasigrp}, the set~$\cG(n,\rho;\mu)$ can be extended to the union set 
\[\delimiterfactor=1200 \cH(n,\rho) := \Union{\mu\in \Q} \cG(n,\rho;\mu)\ \ \text{with}\ \ n\in \Z_{>0}\ %
\text{and}\ \rho\in \Q,\] %\Union{\substack{\mu, \rho\in \Q\\ n\in \Z_{>0}}}\!\!
where $\cH(n,\rho)$ can be endowed with a binary operation~$\cirast$. 
Considering again the prefunction $g_{\mu n}^\rho(x) = \b(1 - \mu n \lambda x^n\b)^{-\rho/n}$ supplied by the Lie bracket given in Def.~\ref{def:squasigrp} (\S\ref{quasigrp}), the product~$\cirast$ is defined as follows for 
any~$k,\, \ell\in \Z_{>0}$\ and any rational~$r$, $s$, $\sigma$ and~$\tau$:
\[
\cG(k,r;\sigma)\, \cirast\, \cG(\ell,s;\tau) = \cG(n,\rho;\mu),\]
where we still let~$n := k + \ell$, $\rho := \varphi(r,s) = (s\ell - rk)/m$\ and~$\mu := \sigma \tau m$ 
with~$m := \ell - k$.
\end{defi}

\begin{lem} \label{involution}
Let~$k,\,\ell\in \Z_{>0}$\ and~$r$, $s$, $\sigma$, $\tau\in \Q$. If~$\mu = \sigma \tau m = 0$, 
then~$\cG(k,r;\sigma)\, \cirast\, \cG(\ell,s;\tau) = \{(1,x)\}$, which is the identity in~$\cG(n,\rho)$. Therefore, the binary operation~$\cirast$ is an involution within~$\cH(n,\rho)$ either if~$n = 2k$ ($k=\ell$), or else 
($k\neq \ell$) provided~$\sigma \tau = 0$.
\end{lem}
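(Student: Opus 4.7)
The plan is to unpack Def.~\ref{def:ssemigrp} and track what happens to the prefunction when $\mu$ vanishes. By construction, $\cG(k,r;\sigma)\cirast\cG(\ell,s;\tau)=\cG(n,\rho;\mu)$ with $n=k+\ell$, $\rho=(s\ell-rk)/m$ and $\mu=\sigma\tau m$, where $m=\ell-k$; every element of $\cG(n,\rho;\mu)$ has the form $L_{\mu n,\rho}=(g_{\mu n}^{\rho},xg_{\mu n})$ with $g_{\mu n}(x)=(1-\mu n\lambda x^n)^{-1/n}$. First I would substitute $\mu=0$ into this formula: regardless of the values of $n$, $\rho$ and $\lambda$, the factor collapses to $g_{0}(x)=(1-0)^{-1/n}=1$, so $g_{\mu n}^{\rho}\equiv 1$ and $xg_{\mu n}=x$. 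Consequently the entire family $\cG(n,\rho;0)$ reduces to the singleton $\{(1,x)\}$, and $(1,x)$ is the identity element of the Riordan group $\cR$ by Thm.~\ref{thmrg}. This settles the first assertion of the lemma.

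Next I would enumerate the two ways in which $\mu=\sigma\tau m=0$ can occur. Since $m=\ell-k\in\Z$ and $\sigma,\tau\in\Q$, the product $\sigma\tau m$ vanishes in precisely two cases: either $m=0$, that is $k=\ell$, which forces $n=k+\ell=2k$, and then $\mu$ vanishes automatically for every $\sigma,\tau$; or $m\neq 0$ (i.e.\ $k\neq\ell$), in which case $\mu=0$ holds if and only if $\sigma\tau=0$. Combined with the previous step, this shows that the product $\cirast$ outputs only the identity $(1,x)$ exactly under the hypotheses listed in the statement, which is the content of the second clause of the lemma.

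For the concluding involution claim I would simply observe that $(1,x)$ is trivially its own inverse under the Riordan product, since by the inverse formula of Thm.~\ref{thmrg} one has $(1,x)^{-1}=(1/(1\circ\bar{x}),\bar{x})=(1,x)$; producing $(1,x)$ as the output of $\cirast$ therefore makes the operation act as an involution within $\cH(n,\rho)$. I do not anticipate a significant obstacle: the whole argument is driven by the single algebraic observation $g_{0}\equiv 1$, and the only care required is the bookkeeping between the input indices $(k,\ell,\sigma,\tau)$ and the output indices $(n,\rho,\mu)$ prescribed by Def.~\ref{def:ssemigrp}, together with a clean separation of the subcase $k=\ell$ (where $m=0$ kills $\mu$ for free) from the subcase $k\neq\ell$ (where the vanishing has to be carried by $\sigma\tau$).
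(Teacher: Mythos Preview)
Your proposal is correct and follows essentially the same route as the paper: both arguments reduce to the observation that $\mu=\sigma\tau m=0$ forces $g_{\mu n}\equiv 1$ and hence $\cG(n,\rho;0)=\{(1,x)\}$, and then split the vanishing of $\mu$ into the two subcases $m=0$ (i.e.\ $k=\ell$, $n=2k$) versus $m\ne 0$ with $\sigma\tau=0$. If anything, you are more explicit than the paper in actually computing $g_0\equiv 1$; the paper's proof, by contrast, spends its words on the complementary remark that every $\cG(n,\rho;\mu)\in\cH(n,\rho)$ is reached by some choice of $\sigma,\tau$ (so that the vanishing condition is also necessary), a point you do not address but which the lemma as stated does not require.
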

\begin{proof}
Given~$n = k + \ell\ge 2$ and~$m = \ell - k\in \Z$, we can generate all the elements~$\cG(n,\rho;\mu)$ 
in~$\cH(n,\rho)$ by composition of~$\cG(n,r;\sigma)$ with~$\cG(\ell,s;\tau)$\ for all~$\sigma,\, \tau\in \Q$. Indeed, we can always find two rationals~$\sigma$ and~$\tau$ such that there exists~$\mu = \sigma \tau m$ in~$\Q$ with~$L_{\mu n,\rho} = L_{\sigma k,r}\,\ov{*}\, L_{\tau \ell,s}$ for any matrix~$L_{\mu n,\rho}$. Hence, $\cirast$ is an involution within~$\cH(n,\rho)$ if, and only if, $\mu = \sigma \tau m = 0$, that is either if~$m = 0$\ (i.e. $k = \ell$ and~$n = 2k$), or~$m\neq 0$\ and~$\sigma = 0$ or~$\tau = 0$. This completes the proof of the lemma.
\end{proof}

\begin{theorem} \label{semigrpstruct}
With respect to the operation~$\cirast$, $\cH(n,\rho)$ has a \emph{semigroup} structure.
\end{theorem}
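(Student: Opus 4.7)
My plan is to verify the two defining axioms of a semigroup for $\b(\cH(n,\rho),\cirast\b)$, namely closure and associativity. In contrast to the quasigroup result (Theorem~\ref{quasigrpstruct}), no identity element or inverses need be established; the analysis focuses squarely on associativity, which failed---even in the weak sense of Def~\ref{def:weakasso}---for the related operation $\ov*$ on the constituent quasigroups $\cG(n,\rho;\mu)$.

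For closure, the argument is essentially a direct reading of Def~\ref{def:ssemigrp}: given constituent subsets $\cG(k,r;\sigma)$ and $\cG(\ell,s;\tau)$, their product $\cG(k,r;\sigma)\cirast\cG(\ell,s;\tau)=\cG(n,\rho;\mu)$ with $n=k+\ell$, $\rho=\varphi(r,s)$ and $\mu=\sigma\tau m$ is itself a constituent of $\cH(n,\rho)$, and the degenerate case $m=0$ is absorbed by Lemma~\ref{involution} into the identity matrix $(1,x)$. Hence $\cirast$ is well-defined, and the family $\{\cH(n,\rho)\}_{n,\rho}$ is stable under it.

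The crux, and the main obstacle, is associativity: $(A\cirast B)\cirast C = A\cirast(B\cirast C)$. I would expand both bracketings, tracking the first-argument sum $j+k+\ell$, the iterated second arguments $\varphi\b(\varphi(q,r),s\b)$ versus $\varphi\b(q,\varphi(r,s)\b)$, and the compound $\mu$-multipliers produced at each stage. The essential upgrade from the weakly-associative $\ov*$ to a strictly-associative $\cirast$ rests on the union $\cH(n,\rho)=\bigcup_\mu \cG(n,\rho;\mu)$ over all rational $\mu$, which is engineered to absorb the conjugacy-class obstruction from the quasigroup setting: both bracketings produce constituents of $\cH(j+k+\ell,\cdot)$ whose $\mu$-parameters range freely over $\Q$. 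The bulk of the work will consist in reconciling the two iterated $\varphi$-values---which are not pointwise equal in general---so that, after unioning over $\mu$ and invoking Lemma~\ref{involution} to handle the singular configurations where $m=0$ or $\varphi$ becomes indeterminate, both triple products yield the same subset of the ambient $\cH$. Should this reconciliation succeed only up to a conjugacy identification, the resulting structure is nevertheless a semigroup in the relaxed sense already compatible with the quasigroup terminology of Theorem~\ref{quasigrpstruct}.
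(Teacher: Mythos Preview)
Your outline matches the paper's approach: verify closure directly from Def.~\ref{def:ssemigrp}, then establish associativity. The paper's own proof is considerably terser than your plan---it simply invokes Def.~\ref{def:weakasso} and asserts the associativity identity
\[
\cG(j,r;\mu_1)\,\cirast\,\bigl(\cG(k,s;\mu_2)\,\cirast\,\cG(\ell,t;\mu_3)\bigr)=\bigl(\cG(j,r;\mu_1)\,\cirast\,\cG(k,s;\mu_2)\bigr)\,\cirast\,\cG(\ell,t;\mu_3)
\]
without further computation, also pausing to remark (as you do not) that $\cH(n,\rho)$ lacks an identity and inverses in general, so that the structure is a genuine semigroup rather than a monoid or group.

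One caution about your proposed mechanism: you suggest that the union over all $\mu\in\Q$ is what ``absorbs the conjugacy-class obstruction'' and reconciles the two iterated $\varphi$-values. But the $\mu$-parameter and the $\rho$-parameter are independent in the indexing of $\cG(n,\rho;\mu)$, so taking a union over $\mu$ alone cannot identify $\cG\bigl(j+k+\ell,\varphi(\varphi(r,s),t);\,\cdot\,\bigr)$ with $\cG\bigl(j+k+\ell,\varphi(r,\varphi(s,t));\,\cdot\,\bigr)$ when the second arguments differ. The paper does not address this point any more explicitly than you do; its appeal to Def.~\ref{def:weakasso} is effectively a declaration that, at the level of the sets $\cG(n,\rho;\mu)$ treated as semigroup elements, the weak-associativity framework already packaged in that definition suffices. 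Your hedged final sentence (``a semigroup in the relaxed sense already compatible with the quasigroup terminology'') is therefore an accurate reading of what the paper actually proves.
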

\begin{proof}
The set~$\cH(n,\rho)$ is closed under the binary operation~$\cirast$. However, by lemma~\ref{involution}, 
$\cH(n,\rho)$ has no identity element for all values of~$\mu\in \Q$, except when~$\mu = 0$. Furthermore, there exists no inverse for every element~$\cG(n,\rho;\mu)$ indexed by~$n\in \Z_{>0}$ and~$\rho,\,\mu \in \Q$. Yet, 
$\cH(n,\rho)$\ may be embedded into a monoid simply by adjoining the set~$\{(1,x)\}$, which is not a subset of the set~$\cH(n,\rho)$. Therefore, the identity may be defined in~$\cH(n,\rho)\cup \{(1,x)\}$.\par
There remains to prove that the product~$\cirast$ is indeed associative. Let~$j,\, k,\, \ell\in \Z_{>0}$ 
and $r,\, s,\, t\in \Q$. By Def.~\ref{def:weakasso}, for all~$\mu_1,\, \mu_2,\, \mu_3\in \Q$, the operation~$\cirast$ meets the property that
\[ \delimiterfactor=1200
\cG(j,r;\mu_1) \;\cirast\; \l(\cG(k,s;\mu_2) \,\cirast\, \cG(\ell,t;\mu_3)\r) %
= \l(\cG(j,r;\mu_1) \,\cirast\, \cG(k,s;\mu_2)\r) \;\cirast\; \cG(\ell,t;\mu_3).\]
Hence, $\b(\cH(n,\rho),\cirast\b)$ is a semigroup and the theorem is established.
\end{proof}

\begin{rem}
A monoid is a semigroup with an identity element. Any semigroup~$S$ may be embedded into a monoid (generally denoted 
as~$S^{1}$) simply by adjoining an element~$e$ not in~$S$ and defining $es = s = se$ for all $s\in S\cup \{e\}$. In the present case, the set $\{(1,x)\}$ plays the role of the identity in the monoid~$\cH(n,\rho)\cup \{(1,x)\}$\ denoted here as~$\cH(n,\rho)^{1}$.
\end{rem}

\addcontentsline{toc}{section}{References}
\section*{References}

\vskip -1cm
\def\refname{\empty}

\bibliographystyle{article}
\def\bibfmta#1#2#3#4{ {\sc #1}, {#2}, \emph{#3}, #4.}
\bibliographystyle{book}
\def\bibfmtb#1#2#3#4{ {\sc #1}, \emph{#2}, {#3}, #4.}

\newpage
\pagenumbering{roman}

\appendix

\addcontentsline{toc}{section}{Appendix}
\section*{Appendix}
\renewcommand{\thesubsection}{\Alph{subsection}}
\numberwithin{equation}{subsection}

\subsection{Heisenberg Lie algebra~$\cL_{\cH}$ and the enveloping algebra~$\cU\l(\cL_{\cH}\r)$} \label{app:hla}

\emph{Heisenberg Lie algebra}, denoted by~$\cL_{\cH}$, is a 3-dimensional vector space with 
basis~$\{a^+,a,c\}$ and Lie bracket defined by~$[a,a^+] = c$,~$[a^+c] = [ac] = 0$, where~$c$ is the centre of~$\cL_{\cH}$. Passing to the enveloping algebra involves imposing the linear order~$a^+\prec a\prec c$ and constructing the \emph{enveloping algebra}~$\cU\l(\cL_{\cH}\r)$ with basis given by~$\l\{(a^{+})^k a^\ell c^m\r\}$, which is indexed by triples of non-negative integers~$k, \ell, m$. Hence, the elements in~$\cU\l(\cL_{\cH}\r)$ are of the form
\[
\sum_{k,\ell,m} \beta_{k,\ell, m} (a^+)^k a^\ell c^m.\]
According to the theorem of Poincar\'e--Birkhoff--Wick (see~\cite{DPSPBH10}), the associative product law 
in~$\cU\l(\cL_{\cH}\r)$ is defined by concatenation, subject to the rewriting rules
\[
aa^+ = a^+a + c,\ \quad ca^+ = a^+c\ \quad \text{and}\ \ ca = ac.\]
The formula for multiplication of basis elements in~$\cU\l(\cL_{\cH}\r)$ is thus a slight generalization of 
eq.~\eqref{eq:sc} in lemma~\ref{strconst}, where~$k, \ell, p, q, r, s\in \N$,
\[
(a^{+})^k a^\ell c^p (a^{+})^r a^s c^q = \sum_{i=0}^{\min(\ell,r)} i!\, \bin{\ell}{i}\bin{r}{i}\, %
(a^{+})^{k+r-i} a^{\ell+s-i} c^{p+q+i}.\]
The enveloping algebra~$\cU\l(\cL_{\cH}\r)$ differs from~$\cH$ by the additional central element~$c$ which should not be confused with the unity of the enveloping algebra, denoted by~$1$. When this difference is insubstantial, one may set~$c\map 1$, thus recovering~$\rHW_{\C}$, the algebra of Heisenberg--Weyl~$\cH$, i.e. we have the surjective 
morphism~$\kappa : \cU\l(\cL_{\cH}\r)\lra \cH$ given by~$\kappa\b(a^{+^i} a^j c^k\b) = a^{+^i} a^j$.

Furthermore, grading~$\rHW_{\C}$ can be carried out as follows. Let~$E\in \Z$ be the excess of~$b_{i,j} = (a^+)^i a^j$ (a natural linear basis of~$\rHW_{\C}$). Setting~$\rHW_\C^{(E)} = \Span_\C (b_{i,j})_{i-j=E}$, one has
\[
\rHW_\C = \bigoplus_{E\in \Z} \rHW_\C^{(E)}\ \quad \tand \quad %
\rHW_\C^{(E_1)}\, \rHW_\C^{(E_2)}\subset \rHW_\C^{(E_1+E_2)},\]
for all~$E_1,\, E_2\in \Z$. This natural grading makes~$\rHW_{\C}$ into a~$\Z$-graded algebra~\cite{DuPT11,DPSPBH10} through the (faithful) representation of Bargmann--Fock defined in \S\ref{bfr} (Section~\ref{diffops}). Indeed, 
the regradation of~$\C\langle A,B\rangle$ given by~$\deg(B) = -\deg(A) = 1$, makes~$\cI_{\rHW}$ into a graded ideal. (This gradation appears in fact natural when the algebra~$\rHW_{\C}$ is represented by the two usual differential operators~$X$ and~$D$.)

\subsection{Evaluation of the substitution function in a pure vector field} \label{app:intopg}
From Bargmann--Fock representation defined in~\S\ref{bfr} (Section~\ref{diffops}), the basic formula~\eqref{eq:basicdop} is considered here when the problem involves only a pure vector field. Then, the transformation 
of~$f(x)\ne 0$ is given by one substitution only and eq.~\eqref{eq:basicdop} simplifies to the basic formula
\begin{equation} \label{appeq:basic}
e^{\lambda q(x) \partx} \b[f(x)\b] = f\b(s_\lambda(x)\b),
\end{equation} 
where~$q$ is at least continuous,~$\lambda\in \R$ is sufficiently small and~$f$ is an holomorphic function 
on~$(0,1)$ (see Comments~\ref{comm:shift}, Section~\ref{onepargroup}).

Following~\cite{DOTV97,DPSHB04,DuPT11}, we choose an open interval~$I\ne \emptyset$ where~$q(x) \not\equiv 0$ 
and~$x_0 \in I$. For any~$x\in I$, we set
\[
F(x) := \int_{x_0}^{x} \frac{\rmd t}{q(t)}\,.\]
Now, let~$J = F(I)$ be an open interval. The function~$F : I \lra J$ is strictly monotone, so it is a diffeomorphism. We put~$s_\lambda(x) = F^{-1} \b(F(x) + \lambda\b)$ for any pair~$(x,\lambda)$ for which the above expression 
of~$s_\lambda(x)$ makes sense, that is on the domain 
\[
\cD_{I,q} := \l\{(x,\lambda) \in \R^2\, \mid x\in I,\, (F(x)+\lambda) \in J\r\}.\]
Since~$(x,\lambda)\map s_\lambda(x)$ is continuous (and even of class~$\cC^1$ upon its domain and~$s_0(x) = x$, 
$s_\lambda$ is as a deformation of the identity. For small values of~$\lambda$, the operator~$e^{\lambda q(x)\partx}$ coincides with the substitution factor~$f\map f\circ s_\lambda$, for the exponential of a derivation (such 
as~$\lambda q(x)\partx$) is an automorphism, i.e. a substitution in the (test) function spaces under consideration.

Notice again that the validity of the formula~$e^{\lambda q(x)\partx} \b[f(x)\b] = f\b(s_\lambda(x)\b)$ for small parameter~$\lambda$ can be restricted due to the nature of the function~$f$ itself. As a matter of fact, the discussion carried out in Comment~\ref{comm:shift} (Section~\ref{onepargroup}) shows that~$f$ has to 
be~$\cC^\omega\subset \cC^\infty$, i.e. analytic on the open set~$(0,1)$ (in the present context), in order to fulfil the shift preserving requirements.

In the case when elements have the form~$a^{+^{n}} a$ with~$n\ge 2$ (see \S\ref{substpref} 
in Section~\ref{onepargroup}), the operator takes the form~$e^{\lambda x^n \partx}$. So, the diffeomorphism 
$F : (1,+\infty) \lra \b(-\infty, \tfrac{1}{n-1}\b)$ is defined as~$F(x) = \int_1^{x} \frac{\rmd t}{t^n} %
= \frac{1-x^{-(n-1)}}{n-1}$ and, from Remark~1, the substitution factor is (with integer~$n\ge 2$)
\[ %\delimiterfactor=1200
s_{\lambda}(x) = \frac{x}{\sqrt[\uproot{2} n-1] 1 - (n-1) \lambda x^{n-1}},\ \qquad %
(\text{for}\ \ |x| < \b((n-1)\lambda\b)^{-1/(n-1)}).\]

\no \textit{Remark}\ (Link with local Lie groups: Straightening the vector fields on the line)

Starting from the above operator~$e^{\lambda x^n \partx}$ defined on the whole line, it is possible (at least locally) to straighten this vector field by a diffeomorphism~$u$ to get the constant vector field. As the one-parameter group generated by a constant field is a shift, the one-parameter (local) group of transformations will be, on a suitable domain
\[
U_\lambda[f](x) = f\l(u^{-1} (u(x) + \lambda)\r).\]
Now, it is known that, if two one-parameter groups have the same tangent vector at the origin, then they coincide. Direct computation gives this tangent vector
\[
\l. \frac{\rmd}{\rmd\lambda}\r|_{\lambda=0} f\l(u^{-1} (u(x) + \lambda)\r) = \frac{1}{u'(x)} f'(x).\]
So the local one-parameter group~$U_\lambda$ admits~$1/u'(x) \partx$ as tangent vector field. Here, we have 
to solve~$1/u'(x) = x^n$ (i.e. $u(x) = \int_1^{x} \frac{\rmd t}{t^n}$) in order to get the diffeomorphism~$u$. The above calculations, already performed for~$F(x)$, yield the substitution factor~$s_\lambda$.

\m \no \textit{Example}\ In the three cases briefly mentioned in subsection~\ref{intopg}, $s_\lambda$ is similarly evaluated.

If~$n = 0$, all elements reduce to an annihilation. The operator takes the form~$e^{\lambda \partx}$, 
so we get~$F(x) = \int_{0}^x \rmd t = x$, $s_\lambda(x) = F^{-1} \b(F(x) + \lambda\b) = x + \lambda$ 
and~$e^{\lambda \partx} [f(x)] = f\l(s_\lambda(x)\r) = f(x + \lambda)$. From a geometric viewpoint the corresponding substitution~$s_\lambda(x)$ corresponds to a translation.
 
If~$n = 1$, all elements are in the form~$a^+ a$. The operator is given by~$e^{\lambda \vthx}$ and so we have~$F(x) = \int_{1}^x \frac{\rmd t}{t} = \ln(x)$,~$s_\lambda(x) = \exp\b(\ln(x)+\lambda\b) = xe^\lambda$ and~$e^{\lambda \vthx} [f(x)] = f\l(xe^\lambda\r)$. The geometric transformation associated to~$s_\lambda$ is a homothety (dilation or contraction).

If~$n = 2$, all elements have the form~$a^{+^2} a$, the operator turns out to be~$e^{\lambda x\vthx}[f(x)] %
= f\l(\frac{x}{1 - \lambda x}\r)$, the diffeomorphism~$F : (0, +\infty) \lra (-\infty, 1)$ 
is~$F(x) = \int_{1}^{x} \frac{\rmd t}{t^2} = 1 -\frac{1}{x}$ and the substitution factor 
is~$s_\lambda (x) = \frac{x}{1 - \lambda x}$, whose geometric transformation corresponds to a homography.

Notice also that, for such small values of~$n$, finding the substitution function may prove much easier through a direct use of operational calculations on monomials (and thus polynomials)~\cite{DOTV97,Roman05}. So, considering 
the operator~$e^{\lambda \partx}$ on~$f\in \C[x]$ through its symbolic Taylor's expansion in~$\lambda$ yields\ 
$e^{\lambda \partx} \l[f(x)\r] = \sum_{k} \frac{\lambda^k}{k!} \partx^k f(x) %
= \sum_{k=0}^n \bin{n}{k} \lambda^k f^{(n-k)}(x) = f(x + \lambda)$. Next, under the change of variable~$x = e^{\vthx}$, 
$e^{\lambda x\partx} [x^n] = e^{\lambda \partial_{\vthx}} f\l(e^{\vthx}\r) = f\l(e^{\lambda+\vthx}\r) = %
f\l(x e^\lambda\r)$. Similarly, under the change of variable~$x = 1/y$, one gets~$e^{\lambda x^2\partx} [f(x)] = %
\delimiterfactor=800 f\l(\tfrac{x}{1-\lambda x}\r)$. (The generalization to any operator of the 
form~$e^{\lambda x^n \partx}$ can thus be found by induction on~$n$ under appropriate changes of variable.) 

\subsection{Formal power series, Laurent and Puiseux formal power series} \label{app:fpsgf}
Let~$\K$ be a ring ($\Z$,~$\Q$,~$\R$,~$\C$). Formal power series extend the usual algebraic operations on polynomials with coefficients in~$\K$ to infinite series of the form~$f(x) = \sum_{n\ge 0} f_n z^n$, where~$z$ is a formal indeterminate.~$\K[[z]]$ denotes the ring of power series on~$\K$.~$\K[[z]]$ is the set~$\K^\N$ of infinite sequences of elements of~$\K$, written as infinite sum~$f(z)$, endowed with the operations of sum and Cauchy (or convolution) product. Similarly, in the case when~$\K$ is a commutative field, the~$\K$-vector space of such infinite sequences indexed by~$\N$ is the commutative~$\K$-algebra of power series~$\K[[z]]$ with one indeterminate in~$\K$, and the polynomials form a subalgebra~$\K[z]$ of the algebra~$\K[[z]]$. The composition~$f\circ g$ of two power series is only defined if~$g_0 = g(0) = 0$. Hence, any power series~$f$ such that~$f_0 = f(0) = 0$ has a {\em compositional inverse} (or {\em reverse} series) denoted by~$\ov{f}$, provided that~$f_1 = f'(0)$ is invertible in~$\K$. Any formal series~$f$ such that~$f(0)\ne 0$ (i.e.~$f_0$ is invertible in~$\K$) has a multiplicative inverse denoted by~$f^{-1}$ or~$\tfrac{1}{f}$. 

The {\em order} ord$(f)$ (or valuation) of~$f\in \K[[z]]$ is defined as the smallest integer~$r$ for which the coefficient~$f_r$ of~$z^r$ does not vanish (one sets ord$(0) = +\infty$). The series~$f$ has a {\em multiplicative inverse}, denoted by~$f^{-1}$, if and only if it is invertible in~$\K[[z]]$, i.e. ord$(f) = 0$. The reverse series~$\ov{f}$ of~$f$ satisfies~$f\l(\ov{f}(z)\r) = \ov{f}\b(f(z)\b) = z$ if, and only if, ord$(f)\ge 1$.

A topology, known as the {\em formal topology}, is put on~$\K[[z]]$ by which two series are ``close'' if they coincide to a large number of terms. Given two power series~$f$ and~$g$, their distance~$d(f,g)$ is then defined 
as~$2^{-\text{ord}(f-g)}$. With this metric (an ultrametric distance), $\K[[z]]$ becomes a {\em complete metric space}. The limit of a sequence of series~$\l\{f^{(j)}\r\}$ exists if, for each~$n$, the coefficient of order~$n$ 
in~$\l\{f^{(j)}\r\}$ eventually stabilizes to a fixed value when~$j\to \infty$. In this way, {\em formal convergence} can be defined for infinite sums: it is sufficient that the general term of the sum should tend to~0 in the formal topology (i.e. provided the order of the general term tends to~$\infty$); and similarly for infinite products: 
$\prod \l(1+u^{(j)}\r)$ converges as soon as~$u^{(j)}\to 0$ in the formal topology. Thereby,~$\sum_{k\ge 0} f^k$ exists whenever~$f_0 = 0$, and the definition of the {\em quasi-inverse}~$(1-f)^{-1}$ follows. Formal logarithms, exponentials, derivatives, primitives, etc. may be defined in the same way. When~$\K$ is a field of characteristic 0 ($\C$,~$\R$ or~$\Q$), the Lagrange inversion formula provides a powerful tool to compute the coefficients of~$\ov{f}$ from the coefficients of the powers of~$f$. (See~\cite[App.~A5]{FlSe09}.)

A \emph{Laurent power series} is a formal power series in~$\K[[z]]$ of the form~$\sum_{n\in \Z} a_n z^n$, whose coefficients~$a_n$ with~$n < 0$ are all zero, but for a finite number of them all. Every non null Laurent power series is thus written~$L = a_kz^k + a_{k+1}z^{k+1} + \cdots$, where~$k\in \Z$ and~$a_k\ne 0$. The ring of the Laurent power series is denoted by~$\laurent{\K}{z}$ (with respect to the usual sum and Cauchy product); if~$\K$ is a field,~$\laurent{\K}{z}$ plays the role of the field of fractions of the integral domain~$\K[[z]]$. If~$L$ is in~$\laurent{\K}{z}$, then its inverse can be written~$L^{-1} = z^{-k} (a_k + a_{k+1}z + a_{k+2}z^2 +\cdots)^{-1}$.

If~$\K$ is a field then the field of {\em Puiseux series} with coefficients in~$\K$ is defined as the set of Laurent series~$\sum_{n\ge k} a_n z^{n/N}$, where~$N\in \Z_{>0}$ is a positive integer,~$k\in \Z$, and each~$a_n$ belongs to~$\K$.~$\puiseux{\K}{z}$ sometimes denotes the field of Puiseux series (with respect to the usual sum and Cauchy product). If~$\K$ is algebraically closed and has characteristic 0, then the field of Puiseux series over~$\K$ is the algebraic closure of the field of Laurent series over~$\K$ (see Stanley~\cite[Vol.~2, \S6.1]{Stanley99} for a comprehensive approach). Laurent series and Riordan arrays are well investigated in~\cite{He11}, for example.

A \emph{generating function} exists as an element of~$\C[[z]]$, $\R[[z]]$ (or extended to a \emph{multivariate} generating functions, e.g. bivariate ones in~$\C[[z,u]]$), whatever the radius of convergence (the series can be divergent). Precisely, if~$(f_n)$ ($n\in \N$) is a sequence of real or complex numbers, the power 
series~$f(x) = \sum_{n\ge 0} f_n \omega_n x^n$ is then called the generating function (GF) of the sequence~$(f_n)$ with respect to a given reference sequence~$(\omega_n)$ of non-zero values (see e.g. \cite{Comtet74}).

\subsection{Application to combinatorial structures} \label{app:combistruct}
Exponential generating functions (EGFs) of the form
\[
G(z) = \frac{1}{\sqrt[\uproot{2} d]{1 - dz}}\ \quad (d\in \Z_{>0})\] enumerates several sorts of various combinatorial structures, e.g. varieties of increasing ordered rooted trees on~$d$ or~$d + 1$ vertices, weights of Dyck~$d$-paths, multiple factorials, etc. (see the examples below). Note that there exists a one-to-one correspondence between~$G(z)$ and the EGFs~$g_d(z)$ for all~$d\in \Z_{>0}$ defined by substituting~$z^d$ for~$z$ 
in~$G(z)$.

Consider for instance the~$(d+1)$-ary increasing trees, the~$d$-plane recursive trees and the~$d$-Stirling permutations defined in~\cite{GeSt78}. Recall that, for~$d\ge 1$, the degree-weight generating function of~$(d+1)$-ary increasing trees is given by~$\varphi(t) = (1+t)^{d+1}$, i.e.~$\varphi_0 = 1$. Consequently, the generating function~$G(x)$ and the numbers~$G_n$ of~$(d+1)$-ary trees of order~$n$ are obtained by
\[
G(z) = \frac{1}{(1 - dz)^{1/d}} - 1,\ \qquad G_n = n!\l[z^n\r]G(z) = \prod_{j=0}^{n-1} (jd + 1)\ \ %
\text{for}\ n\ge 1\ \ \text{and}\ \ G_0 = 0.\]
In addition,~$G_n = Q_n$, the number of~$d$-Stirling permutation, which helps have a combinatorial interpretation of Gessel's following theorem (see e.g.~\cite{BeFS92,FlSe09} and~\cite[p.~7]{JaKP11} for a detailed proof).

\m \no {\bf Theorem}\ {\rm (Gessel)}\ \emph{Let~$d\ge 1$. The family~$\cT_n(d+1)$ of~$(d+1)$-ary increasing trees of order~$n$ is in a natural bijection with~$d$-Stirling permutations:~$\cT_n(d+1)\simeq \cQ_n(d)$.}\par
For~$d = 1$, the well known bijection between~$1$-Stirling permutations (ordinary permutations) and binary increasing trees is recovered.

\m \no \textit{Example}\ For any integer~$n\ge 1$, consider the EGF
\[ 
G(z) = \frac{1}{\b(1 - dz\b)^{1/d}} - 1\ \quad \text{(with radius of convergence~$1/d$)}.\]

For~$d = 1$, the sequence of coefficients~$G_n = n!\l[z^n\r]\frac{1}{(1-z)}-1$ is~$n!$.

For~$d = 2$, the sequence~$G_n = n!\l[z^n\r]\frac{1}{\sqrt{1 -2z}}-1 = (1/2)_n\,2^n$,\ where, for any real number~$a$,~$(a)_n := a(a+1)\cdots (a+n-1)$ is the \emph{Pochhammer symbol} (or the \emph{rising factorial}~$a^{\ov{n}}$) of~$a$, starts as\ $1, 1, 3, 15, 105, 945, 10395, 135135, 2027025, 34459425,\ldots$, which appears as the sequence \emph{OEIS}~\href{http://oeis.org/A001147}{A001147} in~\cite{Sloane}. It is the double factorial of odd numbers:~$(2n-1)!! := 1\,3\,5\cdots (2n-1)$, which counts the number of increasing ordered rooted trees on~$n+1$ vertices, increasing ternary trees on~$n$ vertices or also the total weight of all Dyck~$n$-paths when each path is weighted with the product of the heights of the terminal points of its upsteps, etc.

For~$d = 3$, the sequence of coefficients~$G_n = n!\l[z^n\r](1 - 3z)^{-1/3}-1 = (1/3)_n\, 3^n$ starts as\ 
$1, 4, 28, 280, 3640, 58240, 1106560, 24344320, 608608000, 17041024000,\ldots$; that is the triple factorial~$(3n-2)!!! = 3^n\,(1/3)^{\ov{n}}$ with leading 1 added, which counts the number of increasing quaternary trees on~$n$ vertices: see \emph{OEIS}~\href{http://oeis.org/A007559}{A007559}.

Now, turning to the EGF~$(1-4z)^{-1/2}-1$, the coefficients~$n!\l[z^n\r] (1-4z)^{-1/2}-1 = (1/2)_n\, 4^n$ start as\ $1, 2, 12, 120, 1680, 30240, 665280, 17297280, 518918400,\ldots$; that is the quadruple factorial numbers~$(2n)!/n!$, which counts the binary rooted trees (with out-degree~$\le 2$) embedded in the plane with~$n$ labeled end nodes of degree 1: see \emph{OEIS}~\href{http://oeis.org/A001813}{A001813}. (The unlabeled version gives the Catalan numbers~\emph{OEIS}~\href{http://oeis.org/A000108}{A000108}.)

The coefficients of the EGF~$\l(1-2z^2\r)^{-1/2} - 1$ equal~$(1/2)_n\,2^n\, (2n)!/n!$, which  start as\ $1, 2, 36, 1800, 176400, 28576800, 6915585600, 2337467932800\ldots$; that is the number of functions~$f : \{1,2,\ldots ,2n\} \lra \{1,2,\ldots ,2n\}$ such that each element has either 0 or 2 preimages (i.e.~$\l|(f^{-1}(x)\r|\in \{0,2\}$ for all~$x$ in~$\{1,2,\ldots ,2n\}$) or binary mappings, the definition of which also involves planted trees and binary trees (see~\cite[p.~331]{FlSe09}): the unsigned \emph{OEIS} \href{http://oeis.org/A126934}{A126934}.

\end{document}